\documentclass[11pt,letterpaper]{article}
\usepackage[left=1in, right=1in, top=1in, bottom=1in]{geometry}
\usepackage{algorithm}
\usepackage{algorithmic}
\usepackage{local}

\title{Ex-ante versus Ex-post: \\
Egalitarian Facility Location Mechanism Design}
\author{Zohar Barak\thanks{Tel Aviv University, \url{zoharbarak@mail.tau.ac.il}}
\and
Inbal Talgam-Cohen\thanks{Tel Aviv University, \url{inbaltalgam@gmail.com}}
}
\date{}

\begin{document}

\maketitle
\pagenumbering{gobble}

\begin{abstract}
We study the facility location mechanism design problem where the input is $n$ locations in Euclidean space reported by strategic agents, and the output is a single facility location. The cost of each agent is the distance from the returned facility, and our objective is to minimize the \emph{egalitarian} cost function (which is the maximum agent cost) in a strategyproof way.

It is known that the optimal deterministic approximation ratio is $2$, achieved by any deterministic dictator mechanism.
In this work, we study the power of randomized strategyproof-in-expectation mechanisms for this problem.
The literature has focused so far on \emph{ex-post} evaluation of such mechanisms, defined as the \emph{expected maximum} agent cost. In this work we turn to \emph{ex-ante} evaluation, defined as the \emph{maximum expected} agent cost, which is well-aligned with strategyproof-in-expectation, defined as \emph{maximizing the expected} agent utility through truth-telling.
We establish the following results:

\begin{enumerate}
    \item \textbf{Low dimensional space: Strict ex-ante vs.~ex-post separation.} In $\R$ and $\R^2$, 
    we show a strict separation between the ex-ante and ex-post objectives: For $\R$ we show a simple strategyproof mechanism that optimizes the ex-ante objective, while the best randomized mechanism for the ex-post objective achieves an approximation ratio of $1.5$. 
    For $\R^2$ we design the ``Random Rotated Corner" mechanism for the ex-ante objective, with an approximation ratio of at most $1.598$, hence breaking the deterministic barrier for this objective. For the ex-post objective, we give a lower bound of $1.605$, showing a strict separation between the two objectives in $\R^2$.

    \item \textbf{High dimensional space: Impossibility.} In high dimensional space $\R^d$ ($d \gg 1$), we show that no strategyproof in expectation mechanism is better than the deterministic dictator mechanism up to $o_d(1)$.
    This implies that neither ex-post nor ex-ante lead to improved fairness guarantees 
    in high dimensional space.
    One interesting implication is that the currently best known mechanism for the \emph{utilitarian} objective (minimizing the sum of agent costs), ``Random Rotation Coordinate-Wise Median" ($\rrcwm$), is simultaneously best possible for the egalitarian objective in high dimension, achieving an approximation ratio of $2$ for ex-post (and ex-ante) in $\R^d$ for any $d \ge 1$.
\end{enumerate}

\end{abstract}
\newpage
\tableofcontents
\newpage
\pagenumbering{arabic}
\setcounter{page}{1}

\section{Introduction}

Facility location mechanism design is a fundamental problem in mechanism design without money, in which $n$ agents report their locations in space, and a mechanism selects a facility location that optimizes a given objective while remaining strategyproof. Each agent's \emph{cost} is their distance from the selected location.

This problem was studied in the field of social choice~\cite{moulin1980strategy,border1983straightforward,kim1984nonmanipulability,peters1993range,barbera1993generalized,ching1997strategy,peremans1997strategy,barbera1998strategy,schummer2002strategy}, where it has been the primary domain with structured preferences that allows one to escape strong impossibility results such as the Gibbard-Satterthwaite Theorem~\cite{Gibbard77,Satterthwaite75}. The computer science community has been concerned with establishing approximate guarantees while maintaining strategyproofness, and extending the original strategic facility location setting in multiple directions (see, e.g., \cite{procaccia2013approximate, FotakisT14, FotakisT16, SerafinoV16, walsh2020strategy, agrawal2022learning, gravin2025approximation, ijcai2025p32,chan2026strategyproofmechanismseuclideanfacility,hastings2026strategicfacilitylocationpnorm}, and see~\cite{ijcai2021p596} for a survey on the topic). The setting often serves as an important domain for testing new concepts in approximate mechanism design, such as truthful mechanisms without monetary transfers~\cite{procaccia2013approximate}. Randomized mechanisms where studied by~\cite{AlonFPT10, LuSWZ10,FotakisT10,FeldmanW13, balkanski2024randomized, barak2024mac, ijcai2025p32, barak2026facilitylocationmechanismdesign}, among others.

The question is already interesting in low dimensional spaces like $\R$ and ${\R}^2$, which capture natural settings like picking a geographical location, or settling a political debate where the population is divided along one or two axes. While the one-dimensional case ($\R$) is relatively well-understood and the best-possible mechanisms are known \cite{procaccia2013approximate}, significant gaps remain in higher dimensions, even for $\R^2$.

For the \emph{utilitarian} objective of minimizing the social cost (sum of agents' costs), the \emph{coordinate-wise median (CWM)} mechanism is known to be the optimal deterministic strategyproof mechanism in~$\R^2$. Recently, \citet{barak2026facilitylocationmechanismdesign} showed how to break this deterministic barrier by designing a randomized mechanism (Randomly Rotated CWM or \emph{RR-CWM}) with a strictly better approximation ratio for $\R^2$, and showed that the same mechanism can be extended to $\R^d$.

However, for the \emph{egalitarian} objective of minimizing the maximum cost, the picture is quite different. In $\R$, \citet{procaccia2013approximate} give a strategyproof in expectation mechanism that achieves an approximation ratio of $1.5$. Beyond the one-dimension case,
no known randomized mechanism achieves a better approximation ratio (up to $o_n(1)$ terms) than the straightforward 
deterministic dictator mechanism, which achieves a ratio of $2$~\cite{gonzalez1985clustering,AlonFPT10}, even in $\R^2$ \cite{balkanski2024randomized}. The best known lower bound in $\R^2$ is only $1.118$ \cite{balkanski2024randomized}, leaving a significant gap between the known upper and lower bounds.%
\footnote{A concurrent work \cite{gomes2026improved} has improved this lower bound to $\approx 1.577$. See \cref{sec:related-work} for a comparison to our results.}
In this work, we focus on the power of randomized mechanisms for the egalitarian objective. 

\paragraph{Ex-ante versus ex-post guarantees.}

There is more than one natural way to measure solution quality for randomized mechanisms addressing the egalitarian objective. Since we consider mechanisms that are only strategyproof \emph{in expectation}, each agent already evaluates the outcome by her expected cost. Indeed, \citet{procaccia2013approximate} show that requiring universal strategyproofness (that is, having strategyproofness for every realization of the random coins used by the mechanism) implies an impossibility of achieving an approximation ratio of $2 - \Omega(1)$, while relaxing to strategyproofness in expectation allows for a $1.5$-approximation in $\R$. It is therefore natural to also evaluate the mechanism itself in terms of expected costs. Prior work has focused exclusively on the \emph{ex-post} guarantee of minimizing the expected maximum cost. Here, we advocate a complementary guarantee that may circumvent the impossibility results above: the \emph{ex-ante} guarantee of minimizing the maximum expected cost, i.e., ensuring that no single agent's expected cost is too high.

The egalitarian objective can also be interpreted through the lens of fairness: on one end, the (unfair) deterministic dictator mechanism achieves an approximation ratio of $2$; on the other end, an ideally fair mechanism would return the point in the middle of all locations (center of the minimum enclosing ball) and achieve a ratio of $1$. Under this interpretation, minimizing the expected maximum cost corresponds to \emph{expected ex-post fairness}, and minimizing the maximum expected cost corresponds to \emph{expected ex-ante fairness}; both well-studied notions in the fair division literature (see \cref{sec:related-work}).

Formally, let $\expostc$ denote the expected maximum cost (the ex-post egalitarian cost) and let $\exantec$ denote the maximum expected cost (the ex-ante egalitarian cost). By Jensen's inequality, for every instance $P$ and mechanism $M$ we have:
\[
    \expostc(P, M) \ge \exantec(P, M).
\]

Consider the following $2$-agent example:
\begin{example}
In $\R^2$, let one agent be located at $e_1 = (1,0)$ and the other at $-e_1 = (-1,0)$.
For the ex-post objective, the LRM mechanism (extended to $\R^2$) achieves an approximation ratio of $1.5$ \cite{procaccia2013approximate}, while the random dictator mechanism achieves an approximation ratio of~$2$. For the ex-ante objective, however, the random dictator mechanism is optimal on this instance, with an approximation ratio of $1$, as each agent's realized cost is $0$ w.p.\ $0.5$ and $2$ w.p.\ $0.5$, so each agent's expected cost is $1$.
\end{example}

This example illustrates why the ex-ante objective may allow us to gain better performance: randomizing between extreme locations may improve the expected cost of the agents. This motivates our first research question. Our second research question concerns the power of randomization more broadly.

\begin{question}\label{question:ex-ante-vs-ex-post}
    Can we get strictly better approximation ratios for the ex-ante objective than for the ex-post objective?
\end{question}

\begin{question}\label{question:randomization}
    May randomization help break the deterministic barrier of $2$ for the egalitarian objective in Euclidean space $\R^d$ for $d \ge 2$?
\end{question}

As we show in this work, the answers to these questions depend on the dimension of the space. In low dimensional space, the answer to both is yes: there is a gap between the ex-ante and ex-post objectives in both $\R$ and $\R^2$, and randomization helps break the deterministic barrier of $2$ in $\R^2$ for the ex-ante objective.

However, in high dimensional space ($\R^d$ with $d \gg 1$), we show that both the ex-ante and ex-post objectives have a lower bound of $2 - o_d(1)$, implying a strong impossibility result: \emph{even with randomization, no mechanism is more fair (has a significantly better egalitarian approximation) than any deterministic dictator mechanism}.

\subsection{Our results}\label{sec:contribution}

Our main results are summarized in \cref{tab:results}. For an overview of our techniques see~\Cref{sec:technical-overview}.
\begin{table}[h]
\centering
\small
\begin{tabular}{c c c c}
\hline
Objective & Dimension & Lower bound & Upper bound \\
\hline
\multirow{3}{*}{Ex-ante} & $1$ & $1$ & $1$ (\cref{thm:exante-LR-is-sp-and-optimal}) \\
 & $2$ & $1.09$ (\cref{thm:R2-lower-bounds}) & $1.598$ (\cref{thm:alg-rrc-sp-and-approx}) \\
 & $d$ & $2 - O\prn*{1/\sqrt{d}}$ (\cref{thm:Rd-lb-2}) & $2^*$ \\
\hline
\multirow{3}{*}{Ex-post} & $1$ & $3/2$~\cite{procaccia2013approximate} & $3/2$~\cite{procaccia2013approximate} \\
 & $2$ & $1.605$ (\cref{thm:R2-lower-bounds}) & $2^*$ \\
 & $d$ & $2 - O\prn*{1/\sqrt{d}}$ (\cref{thm:Rd-lb-2}) & $2^*$ \\
\hline
\end{tabular}
\caption{Approximation bounds for strategyproof in expectation mechanisms.\\$^*$Trivially achieved by any deterministic dictator mechanism~\cite{AlonFPT10,gonzalez1985clustering}.}
\label{tab:results}
\end{table}

Our first contribution is the introduction of the ex-ante objective, which is a natural alternative to the ex-post objective. We show that for the ex-ante objective, we can achieve better approximation guarantees than are possible for the ex-post objective, in both $\R$ and $\R^2$, thus answering \cref{question:ex-ante-vs-ex-post} affirmatively in low dimension.
In $\R$ we show that the simple LR mechanism (\cref{alg:lr}), which returns the leftmost or rightmost reported point each with probability $\frac12$, is strategyproof in expectation and \emph{optimal} for the ex-ante objective, achieving a ratio of exactly $1$ (\cref{thm:exante-LR-is-sp-and-optimal}). This is contrasted with the tight ratio of $1.5$ known for the ex-post objective~\cite{procaccia2013approximate}: a strict separation between the two objectives already in one dimension.
In $\R^2$ we design two mechanisms, $\rc$ and $\rrc$, which only ever return one of the (at most four) extreme corners of the reported points' minimum bounding box, possibly after applying a uniformly random rotation to the instance. Both mechanisms are strategyproof in expectation and achieve expected approximation ratios of $\phi \approx 1.618$ (\cref{thm:alg-rc-sp-and-phi-approx}) and $\approx 1.598$ (\cref{thm:alg-rrc-sp-and-approx}), respectively, for the ex-ante objective, beating the deterministic barrier of $2$, and thus also answering \cref{question:randomization} affirmatively in $\R^2$.

We complement these upper bounds with lower bounds in $\R^2$: we show that any strategyproof in expectation mechanism has an expected approximation ratio of at least $\approx 1.605$ for the ex-post objective, and at least $\approx 1.09$ for the ex-ante objective (\cref{thm:R2-lower-bounds}). In particular, this confirms that the ex-post objective is provably harder to approximate than the ex-ante objective in $\R^2$, and it also improves, for the ex-post objective, on the previously best known lower bound of $1.118$~\cite{balkanski2024randomized}.

Our high-dimensional lower bound is a strong impossibility result: In high-dimensional space ($\R^d$ with $d \gg 1$), we show that both the ex-ante and ex-post objectives have a lower bound of $2 - o_d(1)$ (\cref{thm:Rd-lb-2}), implying a strong impossibility result, showing that even with randomization one cannot do better than any deterministic dictator mechanism.
This shows that the answers to \cref{question:ex-ante-vs-ex-post,question:randomization} are dimension-dependent: both randomization and the ex-ante relaxation are powerful tools in low dimension, yet they provably lose all their power as the dimension grows, revealing an interesting separation between low- and high-dimensional Euclidean space for the egalitarian objective.

Finally, we show that the currently best known mechanism for the utilitarian objective of minimizing the expected sum of agent costs, $\rrcwm$, is also a good mechanism for the egalitarian objective, achieving an approximation ratio of $2$ for both objectives in $\R^d$ for any $d \ge 1$ (\cref{thm:rrcwm-max-cost}). This shows that one can achieve a good approximation for both objectives simultaneously, getting the best of both worlds. We note that, perhaps surprisingly, the simpler deterministic variant of $\rrcwm$, the $\cwmed$ mechanism, is not a good mechanism for the egalitarian objective, as it has an approximation ratio of $1 + \sqrt{2} \approx 2.41$.

\subsection{Further related work}\label{sec:related-work}

\paragraph{Egalitarian facility location mechanism design.}
On the real line, there's a well-known deterministic lower bound of 2 \cite[Theorem 3.2]{procaccia2013approximate}, which is extended to multidimensional $L_p$ spaces \cite[Lemma 7]{lin2020nearly}. The matching upper bound is attained by any \emph{dictator mechanism}, which fixes an agent $i$ in advance and always locates the facility at her reported location $x_i$; it is group-strategyproof and has approximation ratio $2$ for the maximum cost in every metric space \citep{AlonFPT10}. The \emph{random dictator mechanism} chooses $x_i$ with probability $1/n$ for every agent $i$. It is a report-independent lottery over dictator mechanisms and is therefore universally strategyproof; every realized outcome retains the factor-$2$ maximum-cost guarantee \citep{AlonFPT10}. On the line, a $p$-\emph{percentile mechanism} returns the $\prn*{\lfloor (n-1)p \rfloor+1}$-th ordered report. This group-strategyproof family includes the leftmost, median, and rightmost mechanisms for $p=0$, $p=1/2$, and $p=1$, respectively; its multidimensional extension selects a fixed percentile independently in each coordinate and remains strategyproof \citep{sui2013analysis}. Finally, the \emph{centroid mechanism} returns the centroid $\bar{x}=\frac{1}{n}\sum_{i=1}^n x_i$ with probability $1/2$ and each reported location $x_i$ with probability $1/(2n)$ \citep{FeldmanW13}. In Euclidean space, it is strategyproof in expectation and achieves a tight $2-1/n$ approximation for the egalitarian ex-post objective \citep{TangYZ20}.

The best known lower bound is $1.118$ \cite{balkanski2024randomized} in $\R^2$, while for trees there exists a lower bound of $2 - o(1)$ \cite{AlonFPT10}. 

Concurrent independent work by \citet{gomes2026improved} studies the ex-post objective and improves the lower bound in $\R^2$ to $\approx 1.577$. Their construction is similar to the one we give for $\R^2$, yet it is different: we both use a transition from simplex to circle, but we start from two points, which changes the approximation ratio lower bound. Indeed, our construction leads to a slightly better bound of $\approx 1.605$ for the ex-post objective (and $\approx 1.09$ for the ex-ante objective which they do not study). Their construction also generalizes to $\R^d$ for $d>2$ and leads to a lower bound of $1 + \sqrt{\frac{d}{2(d+1)}}$ which goes to $1 + \frac{1}{\sqrt{2}} \approx 1.707$ as $d \to \infty$. In comparison, our recursive $\R^d$ construction leads to a lower bound of $2 - O(\frac{1}{\sqrt{d}})$ which goes to $2$ as $d \to \infty$ for both the ex-post and ex-ante objectives.
The second similarity concerns the $\sqrt{2}$ upper bound for $2$ agents in $\R^2$ which we provide in Appendix~\ref{sec:2-agents-upper-bound}, as intuition for why the two-agent one-dimensional lower bound does not transfer to $\R^2$. They give a similar $\sqrt{2}$-approximate mechanism, generalized to $d$ dimensions. They also study the case of a small number of agents, and output augmentation (where the agent locations are restricted to a subset of the space but the returned facility location is not).

Another concurrent independent work of \cite{hastings2026maximumcoststrategicfacilitylocation} studies the ex-post egalitarian objective and showed a different $2 - o_1(d)$ lower bound for $\R^d$.

% The facility location mechanism design problem was also studied in the context of learning augmented algorithms, where learning-augmented \emph{mechanism design} was initiated by \citet{agrawal2022learning,ijcai2022p81} and studied further for facility location \citep{agrawal2022learning,ijcai2022p81,istrate2022mechanism,chen2024strategic,balkanski2024randomized,barak2024mac,Shi2025PredictionAugmented,walsh2025mechanism,gravin2025approximation, barak2026facilitylocationmechanismdesign}.

\paragraph{Other metric spaces.}
\citet{AlonFPT10} give a $2 - o(1)$ lower bound for trees. While the construction of \cite{AlonFPT10} also follows a similar structure of recursively building the ``bad'' instance and amplifying the expected distance at each step, the same construction does not work in Euclidean space. The reason is that in trees distances are additive, and in $\R^d$ they're not: in trees one must use tree edges to get from point $A$ to $B$ (potentially going through other tree vertices), whereas in $(\R^d,\ell_2)$ one could go directly from $A$ to $B$. Other structures were also considered. \citet{FeldmanW13} also study trees for a different objective of minimizing the sum of squared distances.
\citet{AlonFPT10} also study rings and more generally network (general graph) structures (which were also studied by \cite{schummer2002strategy}).

For the \emph{utilitarian} objective the median is both strategyproof and optimal in $\R$ \citep{moulin1980strategy,ProccacciaTennenholtz2009,procaccia2013approximate}. In $\R^2$, $\cwm$ achieves an approximation ratio of $\sqrt{2}$, which is optimal among deterministic strategyproof mechanisms \citep{durocher2009projection,goel2023optimality}.

\paragraph{More than one facility.}
Several works study the opening of $k\ge 2$ facilities, including \cite{procaccia2013approximate,lu2009,LuSWZ10,FotakisT10,EscoffierGTPS11,FotakisT16,walsh2020strategy,barak2024mac,ma2026breaking4approximationbarrierstrategyproof}.

\paragraph{Improvements via random rotations.}
Random rotations were studied in the past for the different \emph{utilitarian} objective.
\citet{meir2019strategyproof} and \citet{goel2023optimality} conjectured that random rotations of the axes may help decrease the expected approximation ratio, and 
\citet{barak2026facilitylocationmechanismdesign} showed that for the utilitarian objectives rotations may help break the deterministic barrier in the classic setting as well as the learning-augmented setting. Their mechanism randomly rotates the points, computes the coordinate-wise median and rotates the result back. \citet{chan2026strategyproofmechanismseuclideanfacility} also give similar results for other $L_p$ objectives.
\citet{Gershkov2019} studied deterministic rotations and showed that given certain distributional assumptions (e.g., identical marginals satisfying additional regularity) a fixed $\pi/4$ rotation followed by the coordinate-wise median can outperform no rotation.

\paragraph{Ex-ante versus ex-post in voting and fairness.}
For lotteries over discrete outcomes, ex-ante fairness is evaluated on expected utilities or the induced fractional outcome, whereas ex-post fairness must hold in every realization. In random assignment, this separates probabilistic serial, which is envy-free and ordinally efficient, from random priority, which is strategyproof and ex-post efficient but may fail envy-freeness and ordinal efficiency \citep{BogomolnaiaMoulin2001}. Best-of-both-worlds fair division combines exact ex-ante guarantees with relaxed ex-post ones, such as EF with EF1 or proportionality with MMS- and EFX-type guarantees; stronger combinations are known for few agents \citep{AzizFreemanShahVaish2024,BabaioffEzraFeige2022,BabaioffFrosh2026}. Analogously, work on probabilistic and committee voting combines ex-ante fair-share axioms such as IFS, UFS, GFS, and GRP with ex-post representation axioms such as EJR and FJR \citep{BogomolnaiaMoulinStong2005,AzizLuSuzukiVollenWalsh2023,SuzukiVollen2024}. Our objectives instead apply the two orders to one egalitarian cost, $\max_i \E\brk*{c_i}$ versus $\E\brk*{\max_i c_i}$; thus our ex-post criterion is an expected realized maximum, not a support-wise fairness requirement.

\subsection{Paper organization}
In \cref{sec:preliminaries} we give preliminaries. In \cref{sec:technical-overview} we provide a technical overview. In \cref{sec:lb-2-Rd} we give our $2 - o_d(1)$ lower bound for $\R^d$, which applies to both objectives. In \cref{sec:exantec-ub} we give our upper bounds for the ex-ante objective. In \cref{sec:lb-R2} we give our lower bounds in $\R^2$ for both the ex-ante and the ex-post objectives. In \cref{sec:egalitarian-approx-of-rrcwm} we analyze the egalitarian approximation of the $\rrcwm$ mechanism. We conclude with a discussion in \cref{sec:discussion}.

\section{Preliminaries}
\label{sec:preliminaries}

\paragraph{The egalitarian facility location mechanism design problem.}

A mechanism $M$ receives as input $P$, a set of $n$ locations of strategic agents in Euclidean space $\R^d$, for $n,d \in \N$. The output of the mechanism is a (random) location $Y_P := M(P) \in \R^d$ for the facility. For any profile $P$, let $ALG(P)$ denote the cost of the mechanism and $OPT(P)$ the cost of an optimal solution\footnote{Sometimes we drop $P$ from the notation where the profile is clear from context and simply write $ALG$ and $OPT$.}. 
The \emph{cost}
of each agent is the $\ell_2$ distance from the agent location to the returned facility.
The \emph{approximation ratio} $\alpha(M)$ of a random mechanism $M$ is defined as the worst-case ratio over all instances $P$ between the cost $ALG(P)$ of the mechanism and that of an optimal solution, denoted by $OPT(P)$.
Formally:
\[
    \alpha(M) := \sup_{P \subset \R^d} \frac{ALG(P)}{\OPT(P)},\footnote{If $\OPT(P) = 0$, we define the approximation ratio to be $1$ if $ALG(P) = 0$ and $\infty$ otherwise.}
\]

The goal is to design a strategyproof mechanism $M$ that returns a (random) location that minimizes the approximation ratio, where strategyproofness is defined as follows.
Let $P = (p_1, \ldots ,p_n) \subset \R^d$ be the locations of $n$ agents in space, and for any agent $i \in [n]$ let $(P_{-i},y)$ be the vector resulting from replacing $p_i$ in $P$ %where $p_i$ is replaced 
by $y \in \R^d$.

\begin{definition}[Strategyproofness]
\label{def:strategyproofness}
    A mechanism $M$ is \emph{strategyproof in expectation} if for any agent $i \in [n]$ located at $p_i$, the agent's expected cost cannot decrease by reporting a location $p'_i \neq p_i$, that is, $\E\brk*{\norm{p_i - M(P)}_2} \le \E\brk*{\norm{p_i - M\prn*{P_{-i},p'_i}}_2}$. A mechanism $M$ is \emph{universally strategyproof} if this holds for every realization of $M$'s coin tosses, i.e., $\norm{p_i - M(P)}_2 \le \norm{p_i - M\prn*{P_{-i},p'_i}}_2$; universal strategyproofness implies strategyproofness in expectation.
\end{definition}

Throughout the paper, ``strategyproof'' refers to strategyproofness in expectation unless stated otherwise.
Next, we give the formal definition of both the ex-ante and ex-post objectives for random mechanisms.

\begin{definition}[Ex-Ante Egalitarian Cost]
\label{def:ex-ante-cost}
    The ex-ante egalitarian cost of a dataset $P \subset \R^d$ and a random mechanism $M$ is the maximum expected distance between an agent and the output of $M$. That is: $\exantec(P,M) = \max_{p_i \in P} \E\brk*{\norm{p_i - M(P)}_2}$.
\end{definition}

\begin{definition}[Ex-Post Egalitarian Cost]\label{def:ex-post-cost}
    The ex-post egalitarian cost of a dataset $P \subset \R^d$ and a random mechanism $M$ is the expected maximum distance between an agent and the output of $M$. That is: $\expostc(P,M) = \E\brk*{\max_{p_i \in P} \norm{p_i - M(P)}_2}$.
\end{definition}

\paragraph{Other notation.} 
proofs shorter and more clear.
Let $M$ be a random mechanism. For any facility location instance $P \subset \R^d$, let $Y_P := M(P)$ denote the (random) point returned by $M$ on input $P$. For any point $u \in \R^d$, let $D_{P}(u) := \E\brk*{\norm{Y_P - u}_2}$ denote the expected distance of the output of $M$ from $u$. For a finite set $S$, let $\mathrm{Unif}(S)$ denote the uniform distribution over $S$.

\section{Technical overview}\label{sec:technical-overview}

\paragraph{Ex-ante upper bounds via extreme points.}
For our ex-ante upper bounds we use the four indexed corners of the reports' minimum bounding box. Fixing the other agents' reports, truth-telling makes each lower and upper coordinate endpoint individually as close as possible to the agent's true coordinate. We couple every truthful corner with the deviating corner that uses the same lower/upper choice in each coordinate; both coordinate differences, and hence the Euclidean distance, can only increase under the deviation. Averaging over the four matched corners proves strategyproofness in expectation, and the same argument applies after an independently sampled rotation. Bounding the approximation ratio then reduces to a geometric optimization problem: we bound the worst-case expected distance from a point in the minimum enclosing ball to a uniformly random corner (or, for $\rrc$, to a randomly rotated corner) of the bounding box. Convexity of the resulting expected-distance function, together with the Bauer maximum principle, lets us restrict attention to the boundary of the enclosing ball, where the bound follows from an explicit (and, for $\rrc$, trigonometric) calculation.

\paragraph{Lower bounds via distance amplification.}
Both our $\R^d$ and $\R^2$ lower bounds are built from a single common technique of \emph{distance amplification}. We build a hard instance via a sequence of steps, where each step moves already-placed agents, one at a time, from a heavily populated point (where agents are colocated at the same place) to a small set of new locations arranged in a symmetric structure around it. The exact way to use distance amplification differs as a function of the studied space. Strategyproofness in expectation guarantees that moving agents this way cannot decrease the expected distance of the mechanism's output from the point being vacated, since otherwise an agent there could profitably misreport to one of the new locations. Each such structure comes with a pointwise geometric inequality lower-bounding the average (or maximum) distance from its points to an arbitrary point, in terms of the distance from that point to the structure's center; convexity of this bound together with Jensen's inequality then lets us propagate and amplify a growing expected-distance guarantee across steps. The $\R^d$ and $\R^2$ constructions differ in exactly which symmetric structures they chain together, and why.

\paragraph{High-dimensional lower bound via recursive orthogonal subspace simplex amplification.}
\begin{figure}[h]
\centering
\begin{tikzpicture}[
    every node/.style={font=\small},
    pt/.style={circle, fill=black, inner sep=0pt, minimum size=4pt},
    altpt/.style={circle, fill=gray!45, inner sep=0pt, minimum size=3pt},
]
    \coordinate (x0) at (0,0);
    \coordinate (x1) at (1.3,0.35);
    \coordinate (x2) at (1.65,1.65);
    \coordinate (x3) at (2.95,2.0);
    \coordinate (xk) at (4.35,3.7);
    \coordinate (mid) at ($(x0)!0.5!(xk)$);

    % enclosing ball
    \draw[dashed, gray] (mid) circle (3.15);

    % alternative (unchosen) simplex vertices at each step - faint fan
    \node[altpt] at ($(x0)+(0.95,-0.55)$) {};
    \node[altpt] at ($(x0)+(0.35,0.95)$) {};
    \node[altpt] at ($(x1)+(-0.25,1.15)$) {};
    \node[altpt] at ($(x1)+(0.95,0.85)$) {};
    \node[altpt] at ($(x2)+(0.75,-0.55)$) {};
    \node[altpt] at ($(x2)+(1.55,0.9)$) {};

    % path
    \draw[-{Stealth[length=2mm]}, thick] (x0) -- (x1);
    \draw[-{Stealth[length=2mm]}, thick] (x1) -- (x2);
    \draw[-{Stealth[length=2mm]}, thick] (x2) -- (x3);
    \draw[-{Stealth[length=2mm]}, thick, dashed] (x3) -- (xk);
    \node[fill=white, inner sep=1pt, rotate=50.6] at ($(x3)!0.5!(xk)$) {\footnotesize $\cdots$};

    % points
    \node[pt] at (x0) {};
    \node[pt] at (x1) {};
    \node[pt] at (x2) {};
    \node[pt] at (x3) {};
    \node[pt] at (xk) {};

    % labels: point name + amplified distance guarantee
    \node[below left=1pt of x0] {\shortstack{$x_0$\\ {\color{purple}\scriptsize $\alpha_0{=}0$}}};
    \node[below right=1pt of x1] {\shortstack{$x_1$\\ {\color{purple}\scriptsize $\alpha_1$}}};
    \node[above left=1pt of x2] {\shortstack{$x_2$\\ {\color{purple}\scriptsize $\alpha_2$}}};
    \node[below right=1pt of x3] {\shortstack{$x_3$\\ {\color{purple}\scriptsize $\alpha_3$}}};
    \node[above right=1pt of xk] {\shortstack{$x_k$\\ {\color{purple}\scriptsize $\alpha_k\gtrsim\sqrt k$}}};

    % coordinate-block labels
    \node[gray, font=\scriptsize] at ($(x0)!0.5!(x1)+(0.15,-0.3)$) {step $1$};
    \node[gray, font=\scriptsize, anchor=west] at ($(x1)!0.5!(x2)+(0.2,0)$) {step $2$};
    \node[gray, font=\scriptsize] at ($(x2)!0.5!(x3)+(0.15,-0.3)$) {step $3$};

    % radius label
    \node[gray] at ($(mid)+(-2.05,2.35)$) {\scriptsize radius $\approx \sqrt{k}/2$};

\end{tikzpicture}
\caption{The recursive orthogonal-block simplex construction underlying \cref{thm:Rd-lb-2}: at step $t$, the agent cluster at $x_t$ is split among the vertices of a unit simplex spanning a new, previously unused $d'$-dimensional coordinate block (at step $t$; some unselected simplex vertices are shown in gray), and the vertex $x_{t+1}$ with amplified guaranteed expected distance $\alpha_{t+1}$ is selected as the next point. After $k \approx \sqrt d$ such mutually orthogonal steps, $\alpha_k \gtrsim \sqrt k$, while the entire path remains inside a ball of radius $\approx \sqrt k/2$.}
\label{fig:rd-construction}
\end{figure}
For our high dimensional lower bound of $2 - o_d(1)$, we partition the $d$ coordinates into $k \approx \sqrt d$ orthogonal blocks of $d' \approx \sqrt d$ coordinates each, and construct the instance in $k$ amplification steps (\cref{fig:rd-construction}): at step $t$ we split the agents currently located at a point $x_t$ equally among the vertices of a unit regular simplex spanning the $t$-th coordinate block, moving them there one at a time. A simplex-averaging inequality shows that at least one simplex vertex $x_{t+1} = x_t + \hat s^t$ has expected distance from the mechanism's output that is amplified via $\alpha_{t+1} \ge \sqrt{\alpha_t^2 + 1 - \frac{4}{d'}}$, where $\alpha_t$ denotes the guaranteed expected distance from $x_t$ after step $t$. Since the $k$ simplex directions used across steps are mutually orthogonal, after $k$ steps we obtain $\alpha_k \gtrsim \sqrt{k}$, while every reported point remains within a ball of radius $\approx \sqrt k / 2$ around the point $\frac{x_{k-1}}{2}$. Taking $k \approx \sqrt d$ then yields the claimed ratio of $2 - O(1/\sqrt d)$.

\paragraph{$\R^2$ lower bounds.}
The high-dimensional construction cannot be used for $\R^2$. It relies on chaining $k \approx \sqrt d$ mutually \emph{orthogonal} amplification steps, so that the guarantee $\alpha_k$ grows like $\sqrt k$ while the instance stays confined to a ball whose radius also grows only like $\frac{1}{2}\sqrt k$; this requires the ambient dimension to grow with $k$, and degenerates to a trivial bound once the dimension is fixed and small, as in $\R^2$. A more naive fix, simply reusing the one-dimensional two-point construction underlying the tight $3/2$ ex-post lower bound of \citet{procaccia2013approximate} in $\R$, also fails: as \citet{balkanski2024randomized} observe, it only yields a lower bound of $\approx 1.118$ in $\R^2$. Indeed, we show in \cref{sec:2-agents-upper-bound} that for two agents there is a strategyproof in expectation mechanism that is $\sqrt{2}$-approximate for both objectives, since in $\R^2$ the mechanism may not only return a point on the segment connecting the two agents' locations, but also ``think outside the box''. We therefore turn to a different, two-dimensional structure.

\begin{figure}[h]
\centering
\begin{tikzpicture}[
    every node/.style={font=\small},
    pt/.style={circle, fill=black, inner sep=0pt, minimum size=4pt},
]

% Panel A: two colocated groups of agents
\begin{scope}
    \coordinate (u) at (-0.8,0);
    \coordinate (v) at (0.8,0);
    \draw[dashed, gray] (u) -- (v);
    \node[pt] at (u) {};
    \node[pt] at (v) {};
    \node[below=2pt of u] {\shortstack{$u$\\ {\color{purple}\scriptsize $2k$ agents}}};
    \node[below=2pt of v] {\shortstack{$v$\\ {\color{purple}\scriptsize $2k$ agents}}};
    \node[below=32pt] at (0,-0.9) {\shortstack{(a) Two colocated\\ groups of agents}};
\end{scope}

% Arrow 1
\draw[-{Stealth[length=2.5mm]}, thick] (1.3,0) -- (2.7,0)
    node[midway, above] {\scriptsize triangle step};

% Panel B: equilateral triangle
\begin{scope}[xshift=4cm]
    \coordinate (c) at (0,0);
    \coordinate (w0) at (0,0.95);
    \coordinate (w1) at (-0.82,-0.48);
    \coordinate (w2) at (0.82,-0.48);
    \draw[dashed, gray] (c) -- (w0);
    \draw[dashed, gray] (c) -- (w1);
    \draw[dashed, gray] (c) -- (w2);
    \draw (w0) -- (w1) -- (w2) -- cycle;
    \node[pt] at (c) {};
    \node[pt] at (w0) {};
    \node[pt] at (w1) {};
    \node[pt] at (w2) {};
    \node[above=2pt of w0] {\shortstack{{\color{purple}\scriptsize $2k$ agents}\\$w_0$}};
    \node[below left=1pt of w1] {\shortstack{$w_1$\\ {\color{purple}\scriptsize $k$ agents}}};
    \node[below right=1pt of w2] {\shortstack{$w_2$\\ {\color{purple}\scriptsize $k$ agents}}};
    \node[right=3pt of c] {$c$};
    \node[below=32pt] at (0,-0.9) {\shortstack{(b) Equilateral\\ triangle}};
\end{scope}

% Arrow 2
\draw[-{Stealth[length=2.5mm]}, thick] (5.3,0) -- (6.7,0)
    node[midway, above] {\scriptsize circle step};

% Panel C: discretized circle around a triangle vertex
\begin{scope}[xshift=8cm]
    \coordinate (b) at (0,0);
    \draw (b) circle (0.9);
    \foreach \i in {0,...,9} {
        \coordinate (z\i) at ({0.9*cos(\i*36)},{0.9*sin(\i*36)});
        \node[pt, minimum size=3pt] at (z\i) {};
    }
    \draw[dashed, gray] (b) -- (z0) node[midway, above] {\scriptsize $D$};
    \node[pt] at (b) {};
    \node[below right=0pt of b] {$b$};
    \node[right=1pt of z0] {\shortstack{$z_0$\\ {\color{purple}\scriptsize $1$ agent}}};
    \node[below right=-1pt of z9] {\shortstack{$z_{k-1}$\\ {\color{purple}\scriptsize $1$ agent}}};
    \node[below=32pt] at (0,-0.9) {\shortstack{(c) Discretized\\ circle around $b$}};
\end{scope}

\end{tikzpicture}
\caption{The three-step distance-amplification construction underlying \cref{thm:R2-lower-bounds} (nonzero agent counts shown in purple): two colocated groups of $2k$ agents are split into an equilateral triangle $w_0,w_1,w_2$ centered at $c$ (now empty), and then $k$ agents at one triangle vertex $b\in\crl*{w_0,w_1,w_2}$ are selected and split among $k$ equally-spaced points $z_0,\ldots,z_{k-1}$ on a circle of radius $D$ around $b$.}
\label{fig:r2-construction}
\end{figure}
For our $\R^2$ lower bounds, both for the ex-ante and the ex-post objectives, we build a single hard instance via two amplification steps, using a different symmetric structure at each (\cref{fig:r2-construction}): we start with two points on the line just like \cite{procaccia2013approximate}, then move the agents at one of the two points to the vertices of an equilateral triangle around the point, and finally transition to points evenly spaced on a circle centered at one of the triangle's vertices. The triangle step amplifies a distance guarantee once, and the circle step further amplifies this into our bounds: an ex-ante bound via the average distance to a random point on the circle, and a stronger ex-post bound via the distance to the point on the circle antipodal to the mechanism's output.

\paragraph{Best of both worlds via the random rotation coordinate-wise median.}
To analyze $\rrcwm$ for the egalitarian objective (\cref{sec:egalitarian-approx-of-rrcwm}), we center the reports at their minimum enclosing ball and view each coordinate median after a uniformly random rotation of the points as the median $F(U)$ of the projections of the agents' reported locations onto a uniformly random direction $U$. The scalar one-dimensional median is $1$-Lipschitz in the $\ell_\infty$ norm, so $F$ is Lipschitz with constant equal to the enclosing-ball radius; the Gaussian Poincar\'e inequality is then used to give a bound of $(r^*)^2/d$ in each direction. Summing this bound over the marginally uniform rows of the rotation shows that the squared distance between the output and the optimal center is at most $(r^*)^2$ in expectation, and Jensen's and the triangle inequalities yield the factor of $2$ ex-post guarantee. We contrast this with the deterministic $\cwmed$: a coordinate-counting argument bounds its distance from the optimal center by $\sqrt{2}r^*$, giving the tight dimension-independent factor $1+\sqrt{2}$.

\section{Lower bound of $2 - o_d(1)$ for $\R^d$}\label{sec:lb-2-Rd}

In this section we show a lower bound of $2 - o_d(1)$ for both objectives.
This implies that no strategyproof in expectation mechanism is better than any deterministic dictator mechanism (up to $o_d(1)$) in high dimensional space.

\begin{thm}\label{thm:Rd-lb-2}
    For every $d\ge 25$, any strategyproof in expectation mechanism for $\R^d$ has an expected approximation ratio of at least $2 - O\prn*{\frac{1}{\sqrt{d}}}$ for both the ex-ante and ex-post objectives.
\end{thm}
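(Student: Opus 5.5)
Since $\expostc \ge \exantec$ by Jensen, it suffices to prove the bound for the ex-ante objective. We follow the recursive orthogonal-block simplex construction of \cref{fig:rd-construction}. Fix $d' = \lfloor \sqrt d\rfloor$ and $k = \lfloor d/d'\rfloor \approx \sqrt d$, and split $\R^d$ into $k$ mutually orthogonal coordinate blocks of dimension $d'$. In block $t$, let $s^t_0,\dots,s^t_{d'}$ be the vertices of a regular simplex centered at the origin, scaled so that each vertex has norm $1$; it then has $d'+1$ vertices and pairwise inner products $-1/d'$. Start with $N=(d'+1)^k$ agents, all at $x_0=0$. The mechanism is defined for this fixed $n$, so agent counts are preserved throughout.

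\textbf{Amplification step.} Suppose that at some stage all remaining agents at $x_t$ form a group of size $m$, that every other agent lies elsewhere, and that $D(x_t) := \E\norm{Y-x_t} \ge \alpha_t$. Move the agents at $x_t$ one at a time, cyclically, to the points $x_t+s^t_j$ until they are split equally. At each single move, strategyproofness applied to the moving agent, whose true location is $x_t$, says $D(x_t)$ cannot decrease. Otherwise, in the profile before the move, the agent at $x_t$ would profit by misreporting the new point. Hence after the split the final profile $P_{t+1}$ still satisfies $\E\norm{Y-x_t}\ge\alpha_t$.

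Next we need a pointwise simplex inequality: for any $y$, writing $y-x_t=w+z$ with $w$ in block $t$ and $z$ orthogonal to it,
\[
\frac{1}{d'+1}\sum_j \norm{y-x_t-s^t_j} \ge f\prn*{\norm{y-x_t}},\qquad f(r)=\sqrt{r^2+1-4/d'}\ \text{(roughly)}.
\]
To prove it, expand $\norm{y-x_t-s_j}^2 = r^2+1-2\langle w,s_j\rangle$. Since $\sum_j s_j=0$ and the $\langle w,s_j\rangle$ are bounded by $\norm{w}$ with controlled variance $\norm w^2/d'$, a second-order (concavity) bound on the square root gives the average $\ge \sqrt{r^2+1} - O(\norm w^2/(d'(r^2+1)^{3/2}))$. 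That error is absorbed into the $-4/d'$ term.

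We need $f$ convex and increasing in $r$. Taking expectations over $Y$ and using Jensen then gives $\max_j \E\norm{Y-x_t-s_j}\ge f(\alpha_t)$. Choose that $j$, set $x_{t+1}=x_t+s^t_j$, and recurse on its group of $m/(d'+1)$ agents. This is why $N=(d'+1)^k$. The key observation is that all later moves only touch the group at $x_{t+1}$, and the inequality at $x_{t+1}$ is monotone under those moves. So the invariant $D(x_{t+1})\ge\alpha_{t+1}$ with $\alpha_{t+1}^2\ge\alpha_t^2+1-4/d'$ persists to the final profile. After $k$ steps, $\alpha_k^2 \ge k(1-4/d')$, so $\alpha_k \ge \sqrt{k}\,(1-O(1/d'))$. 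Convexity of $f$ is genuine only for the exact form; if the approximation fails, we would use the slightly weaker convex lower bound $\sqrt{r^2+c}$ with $c=1-4/d'$ and check it directly.

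\textbf{Bounding OPT.} All reported points have the form $\sum_{t<k} \epsilon_t$, where each $\epsilon_t$ is either $0$ or a simplex vertex in block $t$, and only prefixes of the chosen path carry nonzero block components beyond the current one. Every point lies in $\{x_\ell + s^\ell_j\}$ for $\ell< k$, together with $x_0$. Take the center $c=\frac{1}{2}x_k$; note $x_k$ itself may be the final occupied path vertex, and we use $x_{k-1}$ or $x_k$ as appropriate. A point $x_\ell+s^\ell_j$ has squared distance to $c$ equal to $\frac{k}{4}+O(1)$ by orthogonality: there are $\ell$ coordinates at offset $1/2$ from the unit path steps, the deviation $s_j - \frac12 s_{j^*}$ in block $\ell$ has squared norm $\le \frac94$, and the remaining blocks contribute $\frac14$ each. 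So $\OPT\le \frac{\sqrt k}{2}(1+O(1/k))$.

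\textbf{Conclusion and main obstacle.} The agent at $x_k$ (or the last occupied vertex) has expected cost at least $\alpha_k$. The ratio is therefore $\ge 2\sqrt{k}(1-O(1/d'))/(\sqrt k(1+O(1/k))) = 2-O(1/\sqrt d)$, and $d\ge25$ ensures $d'\ge5$, so all constants make sense. The main obstacle is the simplex-averaging inequality with an error of order $1/d'$ that is uniform in $r$, together with convexity of the resulting bound so that Jensen applies. I would first try the exact second-moment expansion above and a tangent-line argument for $\sqrt{\cdot}$.
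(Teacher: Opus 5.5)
Your proposal is correct and follows essentially the same route as the paper. The shared steps are: the recursive orthogonal-block unit-simplex splitting of $(d'+1)^k$ colocated agents; strategyproofness making $D(x_t)$ non-decreasing during each split; the pointwise bound $\frac{1}{d'+1}\sum_j\lVert y-x_t-s^t_j\rVert_2\ge\sqrt{\lVert y-x_t\rVert_2^2+1-4/d'}$ combined with Jensen, giving $\alpha_k^2\ge k(1-4/d')$; and a center at half the chosen path, where your choices $k=\lfloor d/d'\rfloor$ and $c=x_k/2$ instead of $k=d'$ and $c=x_{k-1}/2$ only change the $O(1)$ term in $\mathrm{OPT}^2\le \frac{k}{4}+O(1)$. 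When you write out the simplex inequality, note that a tangent-line/concavity bound on $\sqrt{\cdot}$ gives the wrong direction; what is needed, and what the paper uses, is the quadratic lower bound $\sqrt{1+z}\ge 1+\frac{z}{2}-\frac{z^2}{2}$ for $z\in[-1,1]$, applied to $z_j=-2\langle y-x_t,s^t_j\rangle/(\lVert y-x_t\rVert_2^2+1)\in[-1,1]$, which gives exactly the $-4/d'$ loss after squaring.
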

It suffices to establish the lower bound for the maximum-expected-cost (ex-ante)
objective. Indeed, by Jensen's inequality, $\E\brk*{
\max_{i\in[n]}\lVert Y_P-p_i\rVert_2
}
\ge
\max_{i\in[n]}
\E\brk*{
\lVert Y_P-p_i\rVert_2
}$.
Moreover, the optimal value for both objectives is the radius of the
minimum enclosing ball of $P$. Hence, any approximation lower bound
for the maximum-expected-cost (ex-ante) objective also applies to the
expected-maximum-cost (ex-post) objective.

\paragraph{Proof idea.}
Our construction starts with a large cluster of $n = (k+1)^k$ agents, all co-located at the origin.
The construction proceeds for $k$ steps. At step $t$, a distinguished
cluster $C_t$ is located at a point $x_t$, and we have already guaranteed $\E\brk*{d(x_t,Y_{P_t})}\ge\alpha_t$ for $\alpha_t$.
We split $C_t$ equally among the $d'+1$ vertices of a unit regular
simplex lying in a new $d'$-dimensional coordinate block around $x_t$.\footnote{We could similarly use a sphere or axis aligned rectangle instead, but $d'+1$ simplex points are enough.}

Strategyproofness implies that moving these agents away from $x_t$,
one at a time, cannot decrease the expected distance of the outcome
from $x_t$. A simplex averaging inequality then guarantees that at least one child $x_t+s_t$ satisfies
\[
\alpha_{t+1}
:=
\E\brk*{d(x_t+s_t,Y_{P_{t+1}})}
\ge
\sqrt{\alpha_t^2+1-\frac{4}{d'}},
\]
and we select this child as $x_{t+1}$.

We show that this expected distance between the randomly returned point and $x_t$ grows to approximately $\sqrt{k}$, and at the same time, every
reported point lies in a ball of radius at most $\approx \frac{\sqrt{k}}{2}$. The resulting ratio therefore approaches $2$.

\paragraph{Expected distance amplification via the subspace simplex.}
Before showing the proof of \cref{thm:Rd-lb-2}, we first show a useful lemma. The lemma roughly shows that given a subspace $V$ of $\R^d$ of dimension $4<d'< d$ and a point $x \in \R^d$, for every random point $Y$, there's a point $x'$ around $x$ such that the expected distance between $Y$ and $x'$ is larger than the one between $Y$ and $x$. The point $x'$ is obtained by moving from $x$ to one of the vertices of a unit simplex in $V$ around $x$.

For $w\in\mathbb R^d$, let $w_V$ denote the component of $w$ lying in
$V$. Thus $w=w_V+w_\perp$,
where $w_\perp$ is perpendicular to every vector in $V$. In the
coordinate-block subspaces used later, $w_V$ is obtained simply by
keeping the coordinates belonging to that block and setting all other
coordinates equal to zero.

A \emph{unit regular simplex} in a $d'$-dimensional subspace $V$ is a set $S=\{s_0,\ldots,s_{d'}\}\subseteq V$ of $d'+1$ unit vectors spanning $V$ and satisfying $\langle s_i,s_j\rangle=-\frac{1}{d'}$ for $i \neq j$.
Such a simplex exists in every $d'$-dimensional Euclidean space. First, we state some known identities of the simplex.
\begin{lem}\label{lem:simplex-identities}
    Let $S=\{s_0,\ldots,s_{d'}\}$ be a unit regular simplex in $V$. Then:
    (i) $\sum_{i=0}^{d'}s_i=0$, and
    (ii) for every $w\in\mathbb R^d$,
\[
\frac{1}{d'+1}
\sum_{i=0}^{d'}
\langle w,s_i\rangle^2
=
\frac{\lVert w_V\rVert_2^2}{d'}.
\]

\end{lem}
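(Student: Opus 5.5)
For (i), I will use the Gram structure of the simplex. Set $\sigma=\sum_{i=0}^{d'}s_i$ and expand its squared norm using $\lVert s_i\rVert_2=1$ and $\langle s_i,s_j\rangle=-\frac1{d'}$ for $i\neq j$:
\[
\lVert\sigma\rVert_2^2=\sum_{i}\lVert s_i\rVert_2^2+\sum_{i\neq j}\langle s_i,s_j\rangle=(d'+1)-(d'+1)d'\cdot\frac{1}{d'}=0 .
\]
Hence $\sigma=0$.

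For (ii), I first note that every $s_i$ lies in $V$, so $\langle w,s_i\rangle=\langle w_V,s_i\rangle$. It therefore suffices to prove the identity for vectors $v\in V$. I will show that the operator $A=\sum_{i=0}^{d'}s_is_i^{\top}$, restricted to $V$, equals $\frac{d'+1}{d'}I_V$. Then
\[
\sum_i\langle v,s_i\rangle^2=v^{\top}Av=\frac{d'+1}{d'}\lVert v\rVert_2^2,
\]
and dividing by $d'+1$ gives the claim.

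To compute $A$, I apply it to a simplex vertex $s_j$, using the Gram values:
\[
As_j=\sum_i s_i\langle s_i,s_j\rangle=s_j-\frac{1}{d'}\sum_{i\neq j}s_i .
\]
By (i), $\sum_{i\neq j}s_i=-s_j$, so $As_j=s_j+\frac1{d'}s_j=\frac{d'+1}{d'}s_j$. Since the $s_j$ span $V$, $A$ acts as $\frac{d'+1}{d'}$ times the identity on $V$.

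There is no real obstacle in this argument. The only points needing care are that the reduction $w\mapsto w_V$ is justified because the $s_i$ lie in $V$, and that the spanning assumption from the definition of a unit regular simplex is what lets us extend the eigen-relation from the vertices to all of $V$.
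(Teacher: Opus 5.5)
Your proposal is correct and follows essentially the same route as the paper: expanding $\lVert\sum_i s_i\rVert_2^2$ for (i), then showing $\sum_i\langle s_j,s_i\rangle s_i=\frac{d'+1}{d'}s_j$ via (i), extending to all of $V$ by spanning, and reducing $w$ to $w_V$ for (ii). The only difference is notational: you phrase the frame identity with the operator $A=\sum_i s_is_i^{\top}$, and you count ordered pairs where the paper uses $2\binom{d'+1}{2}$.
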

The proof of \cref{lem:simplex-identities} is rather standard and straightforward. We provide it in \cref{sec:missing-analysis-lb-2-Rd} for the sake of completeness.

\begin{lem}[Simplex expected distance amplification]
\label{lem:simplex-amplification}
Let $d'>4$, let $V\subseteq\mathbb R^d$ be a $d'$-dimensional subspace, and let $S=\{s_0,\ldots,s_{d'}\}\subseteq V$
be a unit regular simplex. Then, for every fixed $x\in\mathbb R^d$ and every
random point $Y$, there exists $s\in S$ such that
\[
\E
\brk*{
\left\lVert
Y-(x+s)
\right\rVert_2
}
\ge
\sqrt{
\left(
\E[\lVert Y-x\rVert_2]
\right)^2
+
1-\frac{4}{d'}
}.
\]
\end{lem}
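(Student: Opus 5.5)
The plan is to average over the simplex vertices: show that $\frac{1}{d'+1}\sum_{i=0}^{d'}\E[\lVert Y-(x+s_i)\rVert_2]$ is at least the claimed right-hand side, and then pick $s$ to be a maximizing vertex. Write $W:=Y-x$ and $c:=1-\frac{4}{d'}$, which is positive because $d'>4$. The argument has two parts. First, a \emph{pointwise} (deterministic) inequality: for every fixed $w\in\mathbb R^d$,
\[
\frac{1}{d'+1}\sum_{i=0}^{d'}\lVert w-s_i\rVert_2\;\ge\;\sqrt{\lVert w\rVert_2^2+c}.
\]
Second, apply it with $w=W$, take expectations, and use Jensen's inequality for $g(r)=\sqrt{r^2+c}$. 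This function is convex and nondecreasing on $r\ge 0$ since $c\ge 0$, so $\E[g(\lVert W\rVert_2)]\ge g(\E\lVert W\rVert_2)$, which is the claim.

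For the pointwise inequality, set $a:=\lVert w\rVert_2^2+1$ and $b_i:=\langle w,s_i\rangle$. Since the $s_i$ are unit vectors, $\lVert w-s_i\rVert_2^2=a-2b_i$. By \cref{lem:simplex-identities}, $\sum_i b_i=0$ and $\frac{1}{d'+1}\sum_i b_i^2=\lVert w_V\rVert_2^2/d'\le \lVert w\rVert_2^2/d'$. I will linearize the square root with a quadratic lower bound valid on all of $[0,\infty)$: for every $u\ge0$, $\sqrt u\ge \frac{3u-u^2}{2}=1+\frac{u-1}{2}-\frac{(u-1)^2}{2}$. This holds because $3\sqrt u-u^{3/2}$ is maximized at $u=1$ with value $2$.

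Applying this with $u=(a-2b_i)/a$ and multiplying by $\sqrt a$ gives
\[
\lVert w-s_i\rVert_2\ge \sqrt a-\frac{b_i}{\sqrt a}-\frac{2b_i^2}{a^{3/2}}.
\]
Averaging over $i$, the linear term vanishes by (i). Using (ii) and $\lVert w\rVert_2^2\le a$, the average is at least
\[
\sqrt a-\frac{2\lVert w\rVert_2^2}{d'\,a^{3/2}}\;\ge\;\sqrt a-\frac{2}{d'\sqrt a}.
\]
It remains to check that $\sqrt a-\frac{2}{d'\sqrt a}\ge\sqrt{a-4/d'}$. The left side is nonnegative because $a\ge1>2/d'$. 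Squaring it gives $a-\frac4{d'}+\frac{4}{d'^2a}\ge a-\frac4{d'}$, which proves the pointwise inequality since $a-4/d'=\lVert w\rVert_2^2+c$.

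The main obstacle is finding a lower bound on the vertex-average of $\lVert w-s_i\rVert_2$ that is simultaneously (a) uniform in $w$, including $w$ near $0$ and $w$ with large $\lVert w_V\rVert_2$, and (b) a convex function of $\lVert w\rVert_2$, so that Jensen can transfer it to expectations. A naive Taylor expansion is only local; the global inequality $\sqrt u\ge(3u-u^2)/2$ is what makes it work. Normalizing by $a\ge\lVert w\rVert_2^2$ is what caps the second-order loss at $4/d'$ independently of $w$. If that bound turned out to be too lossy, my fallback would be to split into the cases $\lVert w\rVert_2$ small and large and use the triangle inequality in the small case. I do not expect this fallback to be needed.
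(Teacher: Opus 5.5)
Your proposal is correct and follows essentially the same route as the paper: the same pointwise vertex-average bound via the quadratic minorant $\sqrt{1+z}\ge 1+\frac{z}{2}-\frac{z^2}{2}$ (your $\sqrt u\ge\frac{3u-u^2}{2}$ with $u=1+z$), the simplex identities to cancel the linear term and control the quadratic one, the same squaring step, and Jensen with $t\mapsto\sqrt{t^2+1-4/d'}$. The differences are cosmetic. You prove the square-root inequality on all of $u\ge 0$, which makes the paper's Cauchy--Schwarz check $\delta_i\in[-1,1]$ unnecessary, and you apply $\lVert w_V\rVert_2^2\le a$ before squaring rather than after.
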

Fix $d'>4$, a $d'$-dimensional subspace $V\subseteq\mathbb R^d$, and a unit regular simplex
$S=\{s_0,\ldots,s_{d'}\}\subseteq V$. For $x,y\in\mathbb{R}^d$, define the average distance from $y$ to the
simplex children of $x$ by
\[
\gamma(y)
:=
\frac{1}{d'+1}
\sum_{i=0}^{d'}
\left\lVert y-(x+s_i)\right\rVert_2.
\]

\begin{lem}[Simplex Average Distance Amplification]
\label{lem:simplex-avg-dist-amplification}
For every $x,y\in\mathbb{R}^d$,
\[
\gamma(y)
\ge
\sqrt{
\lVert y-x\rVert_2^2
+
1-\frac{4}{d'}
}.
\]
\end{lem}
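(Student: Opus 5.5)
The plan is to reduce everything to a one-variable inequality by expanding the squared distances. Set $w := y-x$. Since every $s_i$ is a unit vector,
\[
\lVert y-(x+s_i)\rVert_2^2 = \lVert w\rVert_2^2 + 1 - 2\langle w,s_i\rangle .
\]
Write $a := \lVert w\rVert_2^2+1 \ge 1$ and $b_i := 2\langle w,s_i\rangle$. Then $\gamma(y) = \frac{1}{d'+1}\sum_i \sqrt{a-b_i}$. By \cref{lem:simplex-identities}(i) the $b_i$ average to $0$. By \cref{lem:simplex-identities}(ii) their second moment is
\[
\frac{1}{d'+1}\sum_i b_i^2 = \frac{4\lVert w_V\rVert_2^2}{d'} \le \frac{4\lVert w\rVert_2^2}{d'} .
\]
So the first-order term in $b_i$ disappears after averaging, and only a second-order loss of order $\lVert w\rVert^2/(d' a^{3/2})$ remains. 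The target bound $\sqrt{a-4/d'}$ allows a loss of about $2/(d'\sqrt a)$, which is larger because $\lVert w\rVert_2^2 \le a$.

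To make this rigorous I need a global, not merely Taylor-local, scalar inequality. The claim is: for all $t\le 1$,
\[
\sqrt{1-t} \ge 1 - \tfrac{t}{2} - \tfrac{t^2}{2}.
\]
To check it, substitute $s=\sqrt{1-t}\ge 0$, so that $t = 1-s^2$. The right-hand side becomes $(3s^2-s^4)/2$. The inequality then reduces to $s^3-3s+2 = (s-1)^2(s+2)\ge 0$, which holds for $s \ge 0$.

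I apply this with $t_i = b_i/a$, which is allowed because $a-b_i = \lVert y-(x+s_i)\rVert_2^2 \ge 0$ forces $t_i \le 1$. This gives $\sqrt{a-b_i} \ge \sqrt a\,(1 - t_i/2 - t_i^2/2)$. Averaging over $i$ and using the two moment identities yields
\[
\gamma(y) \ge \sqrt a - \frac{1}{2a^{3/2}}\cdot\frac{4\lVert w\rVert_2^2}{d'} \ge \sqrt a - \frac{2}{d'\sqrt a},
\]
where the last step uses $\lVert w\rVert_2^2\le a$.

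To finish, I need $\sqrt a - \frac{c}{2\sqrt a} \ge \sqrt{a-c}$ with $c = 4/d'$. Note that $a - c \ge 0$ because $a\ge 1$ and $d'>4$. Concavity of $\sqrt{\cdot}$ gives $\sqrt{a-c}\le \sqrt a - \frac{c}{2\sqrt a}$, which is exactly what is needed. This proves $\gamma(y)\ge\sqrt{\lVert y-x\rVert_2^2+1-4/d'}$.

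The main obstacle is finding the scalar inequality, valid on the whole range $t\le 1$ rather than only near $0$, that cleanly absorbs the second-order term. The cubic factorization above is what I would use. Everything after that is bookkeeping with the identities of \cref{lem:simplex-identities}.
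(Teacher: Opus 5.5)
Your proof is correct and follows essentially the same route as the paper: expand $\lVert y-(x+s_i)\rVert_2^2=\lVert w\rVert_2^2+1-2\langle w,s_i\rangle$, apply the quadratic lower bound $\sqrt{1-t}\ge 1-\frac{t}{2}-\frac{t^2}{2}$, and use the two simplex identities to cancel the linear term and control the quadratic one. The differences are cosmetic. You get the admissible range $t_i\le 1$ from nonnegativity of the squared distance rather than from Cauchy--Schwarz, and you actually prove the scalar inequality, which the paper only asserts. You also finish with the tangent-line bound for $\sqrt{\cdot}$ instead of squaring $\rho-\frac{2\lVert u_V\rVert_2^2}{d'\rho^3}$, and these two final steps amount to the same estimate.
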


\begin{proof}
Let $u=y-x$ and $\rho^2=\lVert u\rVert_2^2+1$ (so $\rho := \sqrt{\lVert u\rVert_2^2+1}$).
For every $i\in\{0,\ldots,d'\}$, we define $\delta_i := -\frac{2\langle u,s_i\rangle}{\rho^2}$.
By the fact that $\norm{s_i}_2 = 1$ we get: $2\lvert\langle u,s_i\rangle\rvert
\le
2\lVert u\rVert_2
\le
\lVert u\rVert_2^2+1
=
\rho^2$, where the first inequality follows from the Cauchy--Schwarz inequality and the second one since $(\norm{u} - 1)^2 \ge 0$. Thus, we have $\delta_i\in[-1,1]$.
Moreover,
\begin{align*}
\lVert u-s_i\rVert_2
&=
\sqrt{
\lVert u\rVert_2^2
+
\lVert s_i\rVert_2^2
-
2\langle u,s_i\rangle
}\\
&=
\sqrt{
\lVert u\rVert_2^2+1-2\langle u,s_i\rangle
}\\
&= \rho\sqrt{1+\delta_i}. \numberthis \label{eq:u-minus-s_i}
\end{align*}

A standard simple analysis shows that for every $z\in[-1,1]$: $\sqrt{1+z}
\ge
1+\frac{z}{2}-\frac{z^2}{2}$. Together with \cref{eq:u-minus-s_i} it implies that
\begin{align*}
\gamma(y)
&= \frac{1}{d'+1}
\sum_{i=0}^{d'}
\left\lVert u-s_i\right\rVert_2 =
\frac{\rho}{d'+1}
\sum_{i=0}^{d'}\sqrt{1+\delta_i}\\
&\ge
\rho
\left(
1
+
\frac{1}{2(d'+1)}
\sum_{i=0}^{d'}\delta_i
-
\frac{1}{2(d'+1)}
\sum_{i=0}^{d'}\delta_i^2
\right). \numberthis \label{eq:gamma-y-lb-1}
\end{align*}

Because the simplex is zero centered ($\sum_{i=0}^{d'}s_i = 0$ by \cref{lem:simplex-identities}),
\[
\sum_{i=0}^{d'}\delta_i
=
-\frac{2}{\rho^2}
\left\langle
u,\sum_{i=0}^{d'}s_i
\right\rangle
=
0.
\]
Furthermore, by the definition of $\delta_i$ and \cref{lem:simplex-identities}:
\[
\begin{aligned}
\frac{1}{d'+1}
\sum_{i=0}^{d'}\delta_i^2
&=
\frac{4}{\rho^4}
\cdot
\frac{1}{d'+1}
\sum_{i=0}^{d'}
\langle u,s_i\rangle^2 =
\frac{4\lVert u_V\rVert_2^2}{d'\rho^4}.
\end{aligned}
\]
Therefore, by plugging the last two equalities in \cref{eq:gamma-y-lb-1} we get:
\[
\gamma(y)
\ge \rho\prn*{1 - \frac{2 \norm{u_V}_2^2}{d' \rho^4}} =
\rho-
\frac{2\lVert u_V\rVert_2^2}{d'\rho^3}. \numberthis \label{eq:gamma-tmp-2}
\]

The right-hand side is nonnegative. Indeed, $\frac{2\lVert u_V\rVert_2^2}{d'\rho^4}
\le
\frac{2\lVert u\rVert_2^2}
{d'(\lVert u\rVert_2^2+1)^2}
\le
\frac{2}{d'}
<1$.
We may therefore square the preceding inequality~\eqref{eq:gamma-tmp-2} and get:
\begin{align*}
\gamma(y)^2
&\ge
\left(
\rho-
\frac{2\lVert u_V\rVert_2^2}{d'\rho^3}
\right)^2 \ge
\rho^2-
\frac{4\lVert u_V\rVert_2^2}{d'\rho^2} \ge
\rho^2-\frac{4}{d'}\\
&=
\lVert u\rVert_2^2+1-\frac{4}{d'} = 
\lVert y-x\rVert_2^2+1-\frac{4}{d'}.
\end{align*}
Taking square roots proves the lemma.
\end{proof}

We are now ready to prove \cref{lem:simplex-amplification}.
\begin{proof}[Proof of \cref{lem:simplex-amplification}]
Let $Y$ be any random point in $\R^d$. Applying \cref{lem:simplex-avg-dist-amplification} pointwise and taking expectation gives:
\begin{align*}
    \frac{1}{d'+1}
\sum_{i=0}^{d'}
\E
\brk*{
\left\lVert
Y-(x+s_i)
\right\rVert_2
}
&=
\E[\gamma(Y)] \ge
\E
\brk*{
\sqrt{
\lVert Y-x\rVert_2^2
+
1-\frac{4}{d'}
}
}.
\end{align*}

Let $h(t) = \sqrt{ t^2+1-\frac{4}{d'} }$ for $t\ge 0$.
Since $d'>4$, $h''(t) = \frac{
1-\frac{4}{d'}
}{
\left(
t^2+1-\frac{4}{d'}
\right)^{3/2}
}
>0$.
Thus, $h$ is convex, and Jensen's inequality implies that
\begin{align*}
\E
\brk*{
\sqrt{
\lVert Y-x\rVert_2^2
+
1-\frac{4}{d'}
}
}
&=
\E
\brk*{
h\bigl(\lVert Y-x\rVert_2\bigr)
} \ge
h\left(
\E[\lVert Y-x\rVert_2]
\right)\\
&=
\sqrt{
\left(
\E[\lVert Y-x\rVert_2]
\right)^2
+
1-\frac{4}{d'}
}.
\end{align*}

Consequently,
\[
\frac{1}{d'+1}
\sum_{i=0}^{d'}
\E
\brk*{
\left\lVert
Y-(x+s_i)
\right\rVert_2
}
\ge
\sqrt{
\left(
\E[\lVert Y-x\rVert_2]
\right)^2
+
1-\frac{4}{d'}
}.
\]

By averaging, there exists some $i\in\{0,\ldots,d'\}$. Setting $s:=s_i$, we have
\[
\E
\brk*{
\left\lVert
Y-(x+s)
\right\rVert_2
}
\ge
\sqrt{
\left(
\E[\lVert Y-x\rVert_2]
\right)^2
+
1-\frac{4}{d'}
}.
\]
\end{proof}
\paragraph{Proving \cref{thm:Rd-lb-2}.}
\begin{proof}[Proof of \cref{thm:Rd-lb-2}]
Let $d\ge 25$, and let $d' = k = \floor{\sqrt{d}}$. In particular, $d'>4$.

Let $M$ be a strategyproof in expectation mechanism. Let $Y_P := M(P)$ be the (random) point returned by $M$ given input points $P \subset \R^d$.

\paragraph{Recursive instance construction.}
We construct the following instance recursively:
\begin{itemize}
    \item We start with $n = (d' + 1)^k$ agents located at the origin $x_0 := 0$. Let $P_0$ be an instance that consists of $n$ agent locations in $x_0$.
    \item For any $t \in \crl*{0,\ldots,k-1}$: We now describe how to determine $x_{t+1}$ based on $x_t$ and $M$. Let $P_t$ be the instance at the beginning of the step.
    Let $V_t := \crl*{ x \mid x_{l} = 0 \text{ for } l \notin \crl*{td' + 1,\ldots,(t+1)d'}}$ be the $d'$ dimensional Euclidean subspace of coordinates $td' + 1,\ldots,(t+1)d'$.
    For $i \in \crl*{0, \ldots, d'}$:
    \begin{itemize}
        \item We move an agent from $x_t$ to $x_t + s_i^t$ where $S_t :=\crl*{s_0^t, \ldots, s_{d'}^t} \subset V_t$ is a unit regular simplex that spans $V_t$ and $s_i^t \in S_t$ for all $i \in \crl*{0, \ldots, d'}$.
    \end{itemize}
    Repeat this until there is no agent left at $x_t$: all agents at $x_t$ in $P_t$ are equally divided between the unit simplex vertex around $x_t$. Let $P_{t+1}$ be the resulting instance.

    By \cref{lem:simplex-amplification} there exists a point $\hat{s}^t \in S_t$ such that
    \begin{equation}\label{eq:x_t-dist-amplification}
        \E \brk*{ \left\lVert Y_{P_{t+1}} - (x_t + \hat{s}^t) \right\rVert_2 } \ge \sqrt{ \left( \E[\lVert Y_{P_{t+1}} -x_t \rVert_2] \right)^2 + 1-\frac{4}{d'} }.
    \end{equation}

    We choose $x_{t+1} := x_t + \hat{s}^t$.
\end{itemize}

\paragraph{Showing recursive amplified distance.}

Fix a step $t$. 
Let $P_t^0 = P_t, \ldots, P_t^r = P_{t+1}$ be the intermediate instances we obtain after moving a single agent at a time during the $t$ step.
First, by the fact that $M$ is strategyproof in expectation, we get that the expected distance between $M(P_{t}^{i+1})$ and $x_{t}$ is only larger than the one between $M(P_t^{i})$ and $x_{t}$. This is true since we may treat $x_t$ as the true location of the moved agent, and a misreport of the agent to $x_t + s_i^t$ cannot decrease the expected cost of the agent.
The above implies: $D_{P_t^{i+1}}(x_t) \ge D_{P_t^{i}}(x_t)$. Hence, after moving all agents in step $t$ we get:
\begin{equation}\label{eq:dist-increase-due-to-spe}
 D_{P_{t+1}}(x_t) \ge D_{P_t}(x_t)   
\end{equation}
By \cref{eq:x_t-dist-amplification}: $D_{P_{t+1}}(x_{t+1}) \ge \sqrt{(D_{P_{t+1}}(x_t))^2 + 1 - \frac{4}{d'}}$.

Let $\alpha_t := D_{P_t}(x_t)$ for all $t \in \crl*{0,\ldots,k}$. It follows from the definition of $\alpha_t$  that $\alpha_0 \ge 0$.
By the above inequality we get $\alpha_{t+1} \ge \sqrt{\alpha_t^2 + 1 - \frac{4}{d'} }$ which implies:
$\alpha_{t+1}^2 \ge \alpha_t^2 + 1 - \frac{4}{d'}$.
By induction, we get that for any integer $t \ge 1$: $\alpha_{t}^2 \ge t(1 - \frac{4}{d'})$. The base holds trivially, and for the step we assume the claim holds for $t-1$ and thus \[\alpha_t^2 \ge \alpha_{t-1}^2 + 1 - \frac{4}{d'} \ge (t-1)(1-\frac{4}{d'}) + 1 - \frac{4}{d'} = t(1 - \frac{4}{d'}).\]

By denoting the maximum expected cost of $M$ as $\ALG$ we deduce that \begin{equation}\label{eq:Rd-lb-of-2-alg-lb}
    \ALG \ge \alpha_k \ge \sqrt{k\prn*{1 - \frac{4}{d'}}}.
\end{equation}

\paragraph{Bounding the optimum.}
The remaining step is to bound the cost of the optimal solution (and show that it is approximately $\frac{\sqrt{k}}{2}$).
By our construction: $x_t = \sum_{i=0}^{t-1} \hat{s}^i$, $\hat{s}^i \in V_i$ and $\norm{\hat{s}^i}_2 = 1$.
Let $c := \frac{x_{k-1}}{2}$. We now show that choosing $c$ as the facility location yields the desired upper bound on the optimal cost.

First, we bound the distance from $c$ of any child in the recursion tree of $x_{k-1}$ (which is of the form $x_{k-1} + s$ for some $s \in S_{k-1}$):
\begin{align*}
    \norm*{x_{k-1} + s - c}_2^2 & = \norm*{\frac{1}{2}x_{k-1} + s}_2^2 = \frac{1}{4} \norm*{x_{k-1}}^2 + \langle x_{k-1}, s \rangle + \norm*{s}^2 \\
    & \stackrel{(\star)}{=} \frac{1}{4} (k-1) +1 = \frac{k+3}{4},
\end{align*}
where equality $(\star)$ holds for the following reason: First, note that $\langle \hat{s}^i, \hat{s}^j \rangle = 0$ since the two vectors are from orthogonal spaces. Therefore, $\norm{x_{k-1}}^2 = \langle \sum_{i=0}^{k-2} \hat{s}^i, \sum_{i=0}^{k-2} \hat{s}^i  \rangle = \sum_{i=0}^{k-2}\norm{\hat{s}^i}_2^2 = k - 1$, and also $\langle x_{k-1},s\rangle = 0$ for every $s\in S_{k-1}$.

Hence $\norm{x_{k-1}+s-c}_2 \le \frac{\sqrt{k+3}}{2}$ for every $s\in S_{k-1}$.

Next, we consider a child $x_t + s$ for $s \neq \hat{s}^t \in S_t$ of a previous recursion level $t < k-1$. So
\begin{align*}
    \norm*{x_t + s - c}_2^2 & = \norm*{\sum_{i=0}^{t-1} \hat{s}^i + s - \frac{1}{2}\sum_{i=0}^{k-2} \hat{s}^i}_2^2 = \norm*{\frac{1}{2} \sum_{i=0}^{t-1} \hat{s}^i + s - \frac{1}{2}\sum_{i=t}^{k-2} \hat{s}^i}_2^2 \\
    & = \frac{1}{4} \langle  \sum_{i=0}^{t-1} \hat{s}^i,  \sum_{i=0}^{t-1} \hat{s}^i \rangle +  \langle  \sum_{i=0}^{t-1} \hat{s}^i, s \rangle - \frac{1}{2} \langle  \sum_{i=0}^{t-1} \hat{s}^i, \sum_{i=t}^{k-2} \hat{s}^i \rangle + \norm{s}_2^2 \\
    & \quad - \langle s,  \sum_{i=t}^{k-2} \hat{s}^i \rangle + \frac{1}{4} \langle \sum_{i=t}^{k-2} \hat{s}^i, \sum_{i=t}^{k-2} \hat{s}^i \rangle  \\
    & \stackrel{(\star)}{\le} \frac{1}{4} t + 1 + \frac{1}{d'} + \frac{1}{4} (k-2 - t + 1) = \frac{k + 3}{4} + \frac{1}{d'} \stackrel{d'>4}{\le} \frac{k+4}{4},
\end{align*}
where inequality $(\star)$ holds as $\langle \hat{s}^i, \hat{s}^j \rangle = 0$, $\norm{s}_2 = 1$ and $\langle s, \hat{s}^t \rangle = -\frac{1}{d'}$.

Hence $\norm*{x_t + s - c}_2 \le \frac{\sqrt{k+4}}{2}$.

Hence we deduce the following upper bound on the cost of the optimal solution: $\OPT \le \frac{\sqrt{k+4}}{2}$.
Hence, by \cref{eq:Rd-lb-of-2-alg-lb}:
\[
    \frac{\ALG}{\OPT} \ge \frac{\sqrt{k(1 - \frac{4}{d'})}}{\frac{\sqrt{k+4}}{2}} = 2 \sqrt{1 - \frac{4}{d'}} \cdot \sqrt{\frac{k}{k+4}} = 2 - O\prn*{\frac{1}{\sqrt{d}}}.
\]

\end{proof}

\section{Ex-ante objective upper bounds}\label{sec:exantec-ub}

\subsection{Warmup: A tight bound of $1$ for $\R$}

In $\R$ the problem is easy: the following simple mechanism, which chooses the leftmost point w.p. $0.5$ and the rightmost point w.p. $0.5$, is optimal.

\begin{algorithm}[H]
\caption{LR (Left-Right) Mechanism}
\label{alg:lr}
\begin{algorithmic}[1]
\REQUIRE Reported locations $\{x_i\}_{i=1}^n$
\STATE $L \gets \min_{i \in [n]} x_i,\quad R \gets \max_{i \in [n]} x_i$
\STATE Sample $c \sim \mathrm{Unif}(\crl*{L, R})$ \hfill (each endpoint with probability $0.5$)
\RETURN $c$
\end{algorithmic}
\end{algorithm}

\begin{thm}\label{thm:exante-LR-is-sp-and-optimal}
    Given any multiset of agent locations on the real line, \cref{alg:lr} is strategyproof in expectation and has an approximation ratio of $1$.
\end{thm}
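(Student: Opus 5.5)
The proof has two independent parts: the approximation ratio, which is a direct computation, and strategyproofness in expectation, which we handle by a case analysis on the agent's position.

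\emph{Approximation ratio.} Let $L=\min_i x_i$ and $R=\max_i x_i$. The optimal facility for the egalitarian objective is the midpoint $(L+R)/2$, so $\OPT=(R-L)/2$. For any agent with $x_i\in[L,R]$, its expected cost under \cref{alg:lr} is
\[
\tfrac12\bigl((x_i-L)+(R-x_i)\bigr)=\tfrac{R-L}{2}.
\]
Hence $\exantec(P,\text{LR})=\OPT$, and the ratio is exactly $1$. If $L=R$, then both costs are $0$, and by the footnote convention the ratio is $1$.

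\emph{Strategyproofness.} Fix agent $i$ with true location $p$ and the other reports. Let $L_{-i}$ and $R_{-i}$ be the minimum and maximum of the others' reports (if $n=1$ the mechanism returns $p$ itself under truth-telling, so the agent's cost is $0$ and the claim is trivial). When agent $i$ reports $y$, the output endpoints are $L(y)=\min(L_{-i},y)$ and $R(y)=\max(R_{-i},y)$, and the agent's expected cost is $\tfrac12\bigl(|p-L(y)|+|p-R(y)|\bigr)$. We split on where $p$ lies.

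If $p\in[L_{-i},R_{-i}]$, then for every report $y$ we have $L(y)\le p\le R(y)$. Therefore the cost equals $\tfrac12(R(y)-L(y))$, which is minimized when $R(y)-L(y)$ is smallest. Any $y$ gives $R(y)-L(y)\ge R_{-i}-L_{-i}$, and truthful $y=p$ attains this value.

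If $p<L_{-i}$ (the case $p>R_{-i}$ is symmetric), truthful reporting gives $L=p$ and $R=R_{-i}$, so the cost is $\tfrac12(R_{-i}-p)$. For any misreport $y$, we always have $|p-R(y)|\ge R_{-i}-p$ because $R(y)\ge R_{-i}>p$. We also have $|p-L(y)|\ge 0$. Together these show that no deviation lowers the cost.

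Equivalently, one can use the coupling from the technical overview: truth-telling makes each endpoint individually as close to $p$ as possible. The case analysis above makes this explicit.

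There is no real obstacle here. The only care needed is the boundary cases where $p$ lies outside the others' range or $n=1$; both are handled above.
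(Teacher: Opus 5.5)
Your proof is correct and follows essentially the same route as the paper. The ratio is computed directly as $\tfrac12(R-L)=\OPT$, and strategyproofness uses the same three-case analysis of the true location relative to $[L_{-i},R_{-i}]$. The paper phrases that analysis as each endpoint being individually as close as possible to $p$ under truth-telling, which is exactly what your middle-case width bound and your outside-case bound establish.
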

\begin{proof}
First, we show that the mechanism is strategyproof in expectation. Fix an agent $i \in [n]$ with true location $x_i$ and fix the reports of all other agents. The case $n=1$ is immediate, so assume $n\ge 2$ and let
\[
    L_{-i}:=\min_{j\ne i}x_j,
    \qquad
    R_{-i}:=\max_{j\ne i}x_j.
\]
If agent $i$ reports $r$, the two endpoints are
\[
    L(r)=\min\{r,L_{-i}\},
    \qquad
    R(r)=\max\{r,R_{-i}\}.
\]
For every report $r$, we have
\[
    |x_i-L(x_i)|\le |x_i-L(r)|
    \qquad\text{and}\qquad
    |x_i-R(x_i)|\le |x_i-R(r)|.
\]
Indeed, each inequality follows immediately by considering whether $x_i$ lies below, inside, or above the interval $[L_{-i},R_{-i}]$. Averaging the two inequalities shows that the expected cost under the truthful report is no larger than under $r$. Hence the mechanism is strategyproof in expectation.

For the approximation ratio, for any $i \in [n]$:
\[
    \E[\abs{x_i - c}] = \frac{1}{2} \abs{x_i - L} + \frac{1}{2} \abs{x_i - R} = \frac{1}{2}(x_i - L) + \frac{1}{2}(R - x_i) = \frac{1}{2}(R-L).
\]

Thus the expected cost of the algorithm is $\ALG := \max_i \E[\abs{x_i - c}] = \frac{1}{2} (R-L)$.

On the other hand $\OPT = \frac{R-L}{2}$ as well (the optimal solution is to return the mid-point of the leftmost and rightmost points) and therefore $\ALG = \OPT$. In particular, the ratio is $1$ whenever $\OPT>0$; when all reports coincide, both costs are $0$.

\end{proof}

\subsection{Upper bounds in $\R^2$}

We start by giving a mechanism that returns a random corner of the minimum bounding box, and then improve the mechanism via introducing random rotations.

\subsubsection{$\rc$ mechanism and the upper bound of $\phi \approx 1.618$}
Consider the following $\rc$ mechanism:

\begin{algorithm}[H]
\caption{$\rc$ Mechanism}
\label{alg:rc}
\begin{algorithmic}[1]
\REQUIRE Reported locations $P = \crl*{(x_i,y_i)}_{i=1}^n \subset \R^2$
\STATE 
\STATE $x_{\min} \gets \min_{i \in [n]} x_i,\quad x_{\max} \gets \max_{i \in [n]} x_i$
\STATE $y_{\min} \gets \min_{i \in [n]} y_i,\quad y_{\max} \gets \max_{i \in [n]} y_i$
\STATE Define the bounding box $B = [x_{\min},x_{\max}] \times [y_{\min},y_{\max}]$
\STATE Let the four-element multiset of indexed corners of $B$ be
\[
C_B \gets \{(x_{\min},y_{\min}), (x_{\min},y_{\max}), (x_{\max},y_{\min}), (x_{\max},y_{\max})\}.
\]
\STATE Sample $z \sim \mathrm{Unif}(C_B)$ \hfill (each indexed corner with probability $0.25$)
\RETURN $z$
\end{algorithmic}
\end{algorithm}

\begin{thm}\label{thm:alg-rc-sp-and-phi-approx}
    \cref{alg:rc} is strategyproof in expectation and has an approximation ratio of exactly $\phi \approx 1.618$.
\end{thm}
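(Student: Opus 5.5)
Strategyproofness follows the one-dimensional argument of \cref{thm:exante-LR-is-sp-and-optimal}, run coordinatewise. Fix agent $i$ with true location $(a,b)$ and fix the other reports. For any report $r=(r_1,r_2)$, the endpoints satisfy
\[
|a-x_{\min}(a)|\le|a-x_{\min}(r_1)|,\qquad |a-x_{\max}(a)|\le|a-x_{\max}(r_1)|,
\]
and the analogous inequalities hold in the $y$ coordinate. Pair each truthful indexed corner with the deviating corner that makes the same min/max choice in each coordinate. Both coordinate gaps weakly increase under the deviation, so the Euclidean distance is monotone in the gaps and weakly increases as well. Averaging over the four matched pairs gives strategyproofness in expectation.

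For the approximation ratio, translate and scale so that the minimum enclosing ball of $P$ is the unit disk centered at the origin; then $\OPT=1$. Every agent $p$ lies in the disk, and the box $B=[x_{\min},x_{\max}]\times[y_{\min},y_{\max}]$ is contained in $[-1,1]^2$. We need to bound $f(p)=\frac14\sum_{c\in C_B}\|p-c\|_2$. This function is convex in $p$, so its maximum over the disk (or over the convex hull of $P$) is attained at an extreme point; this is the Bauer maximum principle step. We may therefore take $p=(\cos\theta,\sin\theta)$ on the unit circle. Enlarging the box could only increase distances, but the box is not free: it is pinned by the points of $P$. The cleanest route is to bound the expression by the worst case over boxes whose sides $x_{\min}\le\cos\theta\le x_{\max}$ (and similarly for $y$) lie within $[-1,1]$. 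For fixed $p$, $f$ is convex in each of the four side parameters separately, so it is maximized at extreme values. Each side then either coincides with $p$'s coordinate or reaches $\pm1$.

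This leaves a finite case analysis.
\begin{itemize}
\item If the far sides are at $\mp1$ and the near sides sit at $p$'s coordinates, $p$ is itself a corner. The others are $(-1,\sin\theta)$, $(\cos\theta,-1)$ and $(-1,-1)$, at distances $1+\cos\theta$, $1+\sin\theta$ and $\sqrt{(1+\cos\theta)^2+(1+\sin\theta)^2}$.
\item If the box is the full square $[-1,1]^2$, at $p=(1,0)$ the average is $\frac14(1+1+\sqrt5+\sqrt5)=\frac{1+\sqrt5}{2}=\phi$.
\end{itemize}
The full-square case shows where $\phi$ comes from. I expect the main obstacle to be showing that no configuration exceeds $\phi$. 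The constraint that all points lie in the unit disk couples the box to $P$: the full square is only realizable when points sit at $(\pm1,0)$ and $(0,\pm1)$. I would parametrize the remaining cases by $\theta$ and verify the one-variable trigonometric inequality, first checking symmetric points such as $\theta=\pi/4$.

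For tightness, take the four points $(\pm1,0)$ and $(0,\pm1)$, possibly with extra copies. Then $\OPT=1$, $B=[-1,1]^2$, and agent $(1,0)$ has expected distance $\frac{2+2\sqrt5}{4}=\phi$. So the ratio is exactly $\phi$.
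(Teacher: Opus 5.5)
Your strategyproofness argument and your tightness instance are the same as the paper's. The upper bound of $\phi$, however, is never actually established. Everything reduces to the one-variable inequality $G(\cos t,\sin t)\le\phi$ for all $t$, where $G(p):=\frac14\sum_{z\in\{\pm1\}^2}\|p-z\|_2$ is the average distance to the corners of $[-1,1]^2$. You evaluate this at $t=0$, suggest spot-checking $t=\pi/4$ (where it equals $(\sqrt2+\sqrt3)/2\approx1.573$), and defer the rest as ``the main obstacle''. That inequality is exactly the content of \cref{clm:exante-G-ub-phi}, and it needs a real argument. The paper proceeds as follows:
\begin{itemize}
\item By symmetry it restricts to $t\in[0,\pi/4]$.
\item It writes $4G=\psi(u)+\psi(w)$ with $\psi(\alpha)=\sqrt{3+2\alpha}+\sqrt{3-2\alpha}$, $u=\cos t+\sin t$ and $w=\cos t-\sin t$. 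Then $u'=w$, $w'=-u$, and $0\le w\le u\le\sqrt2$.
\item It notes that $q:=-\psi'$ is convex on $[0,\sqrt2]$ with $q(0)=0$, hence $q(w)\le\frac{w}{u}\,q(u)$.
\item Therefore $\frac{d}{dt}\bigl(\psi(u)+\psi(w)\bigr)=q(w)\,u-q(u)\,w\le 0$, so the maximum is at $t=0$, where the value is $\phi$.
\end{itemize}

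The reduction leading to that inequality is also flawed as written. Your worry that ``the box is pinned by $P$'' is a red herring. For an agent $p\in P$ you have $-1\le x_{\min}\le p_x\le x_{\max}\le 1$, and likewise in $y$. So each indexed corner of $B$ can be matched with a corner of $[-1,1]^2$ that is coordinatewise at least as far from $p$. This gives $f(p)\le G(p)$ whether or not the full square is realizable. Your own remark that enlarging the box only increases distances already shows that every other case in your list (e.g.\ $p$ itself being a corner) is dominated by the full square, so no case analysis is needed. The order of steps also matters. You apply Bauer to $f$ with $B$ fixed, moving $p$ to the unit circle, and only afterwards relax to boxes with $x_{\min}\le\cos\theta\le x_{\max}$. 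But the boundary maximizer need not lie in $B$: for $P=\{(\pm1,0)\}$ you get $B=[-1,1]\times\{0\}$, and $f$ is maximized on the circle at $(0,\pm1)$. So that constraint is unjustified. The paper first dominates by $G$ using $p\in B$, and only then applies convexity and the Bauer principle to the box-independent function $G$.
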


The proof of \cref{thm:alg-rc-sp-and-phi-approx} follows from the following couple of lemmas (Lemmas \ref{lem:alg-rc-spe}, \ref{lem:alg-rc-expected-approx-exante-phi}).

\begin{lem}\label{lem:alg-rc-spe}
    \cref{alg:rc} is strategyproof in expectation. 
\end{lem}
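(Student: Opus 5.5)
We prove \cref{lem:alg-rc-spe} by coupling corners, generalizing the one-dimensional argument in the proof of \cref{thm:exante-LR-is-sp-and-optimal}. Fix an agent $i$ with true location $p_i=(a,b)$ and fix the other agents' reports; if $n=1$ the claim is immediate, so assume $n\ge 2$. Define the extremes of the other agents, $x^{-i}_{\min}:=\min_{j\ne i}x_j$, and analogously $x^{-i}_{\max}$, $y^{-i}_{\min}$, $y^{-i}_{\max}$. If agent $i$ reports $r=(r_1,r_2)$, the four box coordinates are
\[
x_{\min}(r)=\min\{r_1,x^{-i}_{\min}\},\qquad x_{\max}(r)=\max\{r_1,x^{-i}_{\max}\},
\]
and similarly for $y$. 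Note that the $x$-coordinates of the box depend only on $r_1$ and the $y$-coordinates only on $r_2$.

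\textbf{Step 1 (coordinatewise monotonicity).} Exactly as in the proof of \cref{thm:exante-LR-is-sp-and-optimal}, we have
\[
|a-x_{\min}(a)|\le|a-x_{\min}(r_1)|\quad\text{and}\quad|a-x_{\max}(a)|\le|a-x_{\max}(r_1)|
\]
for every $r_1$, and likewise for the $y$-coordinate with $b$ and $r_2$. To verify, say, the first inequality, consider $a\le x^{-i}_{\min}$, where the truthful value equals $a$ and the cost is $0$, and $a> x^{-i}_{\min}$, where the truthful value is $x^{-i}_{\min}$ and any report gives $\min\{r_1,x^{-i}_{\min}\}\le x^{-i}_{\min}<a$, hence a distance at least as large.

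\textbf{Step 2 (coupling).} Index the four corners by $(\sigma,\tau)\in\{\min,\max\}^2$, writing $z_{\sigma\tau}(r)=(x_\sigma(r_1),y_\tau(r_2))$. The mechanism picks each index with probability $1/4$ regardless of the report; the corners form a multiset, so degenerate boxes cause no issue. For each fixed index, Step 1 gives that both $|a-x_\sigma|$ and $|b-y_\tau|$ are weakly larger under $r$ than under truthful reporting. Since $\sqrt{u^2+v^2}$ is nondecreasing in $|u|$ and in $|v|$, we get $\|p_i-z_{\sigma\tau}(p_i)\|_2\le\|p_i-z_{\sigma\tau}(r)\|_2$.

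\textbf{Step 3 (averaging).} Averaging this inequality over the four indices, each with weight $1/4$, gives $\E\|p_i-M(P)\|_2\le\E\|p_i-M(P_{-i},r)\|_2$, which is strategyproofness in expectation.

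The only potentially subtle point is Step 1's case analysis together with the observation that the index coupling is legitimate. This holds because the distribution over indices is report-independent. In fact, the argument yields the stronger conclusion that the mechanism is universally strategyproof under this coupling.
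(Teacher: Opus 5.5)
Your proof is correct and follows essentially the same route as the paper. Both arguments show that the truthful lower and upper endpoints are coordinatewise closest to the true coordinate, match each deviating corner to the truthful corner with the same lower/upper index, and average over the four equally likely indices. Your closing remark is also valid: the index is drawn independently of the reports, so your per-index inequality shows that \cref{alg:rc} is a lottery over four deterministic strategyproof mechanisms, i.e., universally strategyproof.
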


\begin{proof}
Fix an agent $i \in [n]$ located at $(x_i,y_i) \in \R^2$, and fix all the locations of the other agents.

The case $n=1$ is immediate, so assume $n\ge 2$. For a coordinate $s\in\R$, let $s^-_{-i}$ and $s^+_{-i}$ denote, respectively, the minimum and maximum reports of the other agents in that coordinate. If agent $i$ reports $r$ in this coordinate, the resulting lower and upper endpoints are
\[
    \ell_s(r):=\min\{r,s^-_{-i}\},
    \qquad
    u_s(r):=\max\{r,s^+_{-i}\}.
\]
For the true coordinate $s_i$ and every report $r$,
\begin{equation}\label{eq:coordinate-endpoints-truthful-closest}
    |s_i-\ell_s(s_i)|\le |s_i-\ell_s(r)|
    \qquad\text{and}\qquad
    |s_i-u_s(s_i)|\le |s_i-u_s(r)|.
\end{equation}
To verify this, consider the three cases $s_i\le s^-_{-i}$, $s_i\in[s^-_{-i},s^+_{-i}]$, and $s_i\ge s^+_{-i}$. In the first case the truthful lower endpoint equals $s_i$ and every possible upper endpoint is at least $s^+_{-i}$; the third case is symmetric. In the middle case, every possible lower endpoint is at most $s^-_{-i}\le s_i$ and every possible upper endpoint is at least $s^+_{-i}\ge s_i$.

Apply \cref{eq:coordinate-endpoints-truthful-closest} separately to the $x$- and $y$-coordinates. Match each corner under a deviation $(x'_i,y'_i)$ with the truthful corner having the same choice of lower or upper endpoint in each coordinate. For every one of these four matched pairs, both absolute coordinate differences from the true point $(x_i,y_i)$ weakly increase under the deviation. Therefore the Euclidean distance to each matched corner weakly increases. Averaging over the four equally likely corners proves strategyproofness in expectation, including deviations that change both coordinates simultaneously.
\end{proof}

\begin{lem}\label{lem:alg-rc-expected-approx-exante-phi}
    \cref{alg:rc} has an expected approximation ratio (w.r.t. $\exantec$) of $\phi \approx 1.618$.
\end{lem}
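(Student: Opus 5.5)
The plan is to prove an upper bound of $\phi$ and then exhibit an instance attaining it. By translation and scaling, assume the minimum enclosing ball of $P$ is the unit disk centered at the origin, so $\OPT=1$ and every report lies in the closed unit disk. Then the bounding box $B=[a,b]\times[c,e]$ satisfies $-1\le a\le b\le 1$ and $-1\le c\le e\le 1$. For any point $p$, the expected distance to a uniformly random indexed corner is
\[
f(p)=\tfrac14\sum_{\xi\in\{a,b\}}\sum_{\eta\in\{c,e\}}\norm{p-(\xi,\eta)}_2 .
\]
This is a convex function of $p$, being an average of norms. Hence $\ALG=\max_{p_i\in P} f(p_i)\le \max_{\norm{p}\le 1} f(p)$, and by convexity (the Bauer maximum principle) this maximum is attained on the unit circle. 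It therefore suffices to show $f(p)\le\phi$ for every $p$ on the unit circle and every admissible box.

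The central step is a pairing argument. Pair the corners diagonally into $\{(a,c),(b,e)\}$ and $\{(a,e),(b,c)\}$. By symmetry, suppose $p=(\cos\theta,\sin\theta)$ lies in the first quadrant. The corner $(a,c)$ is the one potentially far from $p$, at distance at most $\norm{p-(-1,-1)}$ because $a,c\ge -1$ and $p$ has nonnegative coordinates. Similarly, each of $(a,e)$ and $(b,c)$ is within distance $\sqrt{(\cos\theta+1)^2+(1-\sin\theta)^2}$ or the analogous expression. A cruder bound is likely too loose, however. Instead I would note that, for fixed $p$, $f$ is convex in each of $a,b,c,e$ separately, so its maximum is attained at the extreme values $\{-1,1\}$ or at the constraint boundaries $a=b$, $c=e$. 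Since the box must be consistent with points in the disk, the worst case is plausibly the full square $[-1,1]^2$, or a degenerate box where $P$ lies on a segment.

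Two cases must then be compared. In the \emph{full square} case, at $p=(1,0)$ the distances are $1,1,\sqrt5,\sqrt5$, giving average $(1+\sqrt5)/2=\phi$. I would then verify by a one-variable trigonometric calculation that this is the maximum over $\theta$. The square may be replaced by a box whose corners lie near the circle; because the reports must make the enclosing ball the unit disk, the box corners themselves are at distance at most $\sqrt2$. In the \emph{degenerate} case, for instance $B=[-1,1]\times\{0\}$ with $p=(1,0)$, we get $\ALG=1$, which is harmless.

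For tightness, take the instance with reports at $(\pm1,0)$ and $(0,\pm1)$. Its minimum enclosing ball is the unit disk, its bounding box is $[-1,1]^2$, and agent $(1,0)$ has expected cost exactly $\phi$.

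The main obstacle is the rigorous upper bound. Reducing over all boxes realizable by point sets in the unit disk to the square is not literally a coordinatewise convexity argument, since the constraints couple the box coordinates and $p$. The approach I would try first is to use the coordinatewise monotonicity of the distance, $|p_x-\xi|\le 1+|\xi|$-type bounds, which push each corner outward to $(\pm1,\pm1)$ along the coordinate away from $p$. After that push, one only needs $\max_\theta \tfrac14\sum_{\text{corners of }[-1,1]^2}\norm{p-\text{corner}}\le\phi$ on the circle. That final inequality I would check by differentiating in $\theta$, or by pairing opposite corners and applying Cauchy--Schwarz.
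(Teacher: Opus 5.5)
There is a genuine gap in your reduction step. After fixing the box $B$, you maximize $f$ over the whole unit disk and conclude that it suffices to show $f(p)\le\phi$ for every $p$ on the unit circle and every admissible box. That sufficient condition is false. Take $P=\{(-1,0),(0,1),(0,-1)\}$. This triangle has a right angle at $(-1,0)$ and its hypotenuse is a diameter, so its minimum enclosing ball is the unit disk, and $B=[-1,0]\times[-1,1]$. At $p=(1,0)$ the four corner distances are $\sqrt5,\sqrt5,\sqrt2,\sqrt2$, so $f(p)=(\sqrt5+\sqrt2)/2\approx 1.825>\phi$.

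The reduction discards the one fact that makes the bound true: an agent's location lies inside its own bounding box. Your fallback of pushing each corner outward along the coordinate away from $p$ has the same problem. It sends every corner to the same-indexed corner of $[-1,1]^2$ only when $a\le p_x\le b$ and $c\le p_y\le e$. If, say, $p_x>b$, both $x$-endpoints are pushed to $-1$. The speculation about coordinatewise convexity in $a,b,c,e$ and the square-versus-degenerate case split does not repair this.

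The fix, which is the paper's argument, is to reverse the order. For an agent $p_i=(x,y)\in B\subseteq[-1,1]^2$ you have $0\le x-a\le x+1$ and $0\le b-x\le 1-x$, and likewise in $y$. So each indexed corner of $B$ is no farther from $p_i$ than the same-indexed corner of $[-1,1]^2$. This gives $f_B(p_i)\le G(p_i):=\frac14\sum_{z\in\{\pm1\}^2}\lVert p_i-z\rVert_2$, where $G$ does not depend on the box. Only then apply convexity and the maximum principle to $G$ on the unit disk; no analysis over admissible boxes is needed.

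For the remaining inequality $\max_\theta G(\cos\theta,\sin\theta)\le\phi$, differentiating works. The paper restricts to $t\in[0,\pi/4]$ and writes $4G=\psi(\cos t+\sin t)+\psi(\cos t-\sin t)$ with $\psi(\alpha)=\sqrt{3+2\alpha}+\sqrt{3-2\alpha}$. It then shows the derivative is nonpositive, using that $-\psi'$ is convex with $-\psi'(0)=0$, and evaluates at $t=0$.

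Your alternative of pairing opposite corners with Cauchy--Schwarz does not suffice. Since $\lVert p-z\rVert_2^2+\lVert p+z\rVert_2^2=2+4=6$, it yields only $\lVert p-z\rVert_2+\lVert p+z\rVert_2\le\sqrt{12}$, hence $G\le\sqrt3$. Your tightness instance is correct.
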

\begin{proof}
Let $P$ be the given multiset of points (agent locations).
Let $c$ be the center of the minimum enclosing ball of $P$ and let $r$ be the radius of the ball. If $r=0$, all points coincide and the claim is immediate.
W.l.o.g. let us assume that $c = 0$ is the origin (otherwise we may translate the instance) and that $r = 1$ (otherwise we may rescale the instance).
Hence all points are contained in the unit ball and thus the minimum bounding box is contained in the square $[-1,1]^2$.

Let $p = (x,y)$ be some agent location.
Let $B = [a,b] \times [\ell,u]$ be the minimum bounding box of the agent locations, and so
$-1 \le a \le x \le b \le 1$ and $-1 \le \ell \le y \le u \le 1$.
Let $C_B$ denote its four-element multiset of indexed corners.

Let $B' = [-1,1] \times [-1,1]$ be the larger bounding box that contains the box $B$, and let $C' := \crl*{(-1,-1), (-1,1), (1,-1), (1,1)}$ be the set of corners of $B'$.

\[
    \E\brk*{\norm{p - z}_2} = \frac{1}{4} \sum_{z \in C_B} \norm{p-z}_2 \le \frac{1}{4} \sum_{z \in C'} \norm{p - z}_2,
\]
where the inequality follows from the fact that for each indexed corner of $B$ there is a matching corner of $B'$ that is at least as far from $p$.

Since $\norm{p}_2 \le 1$, we obtain the upper bound on the expected cost of the agent located at $p$ by finding an upper bound on $\max_{p: \norm{p}_2 \le 1} G(p)$ where $G(p) := \frac{1}{4} \sum_{z \in C'} \norm{p - z}_2$ (there is such a maximum due to the compactness of $\crl*{p \mid \norm{p}_2 \le 1}$).

Since each summed term in $G$ is convex, $G$ is convex as well. By Bauer maximum principle, $G$ attains its maximum at $\norm{p}_2 = 1$.

So we get that \[\ALG \le \max_{(x,y) \in \R^2 \text{ s.t. } x^2 + y^2=1} \crl*{G(x,y)}.\]

We finish by showing the following claim.
\begin{clm}\label{clm:exante-G-ub-phi}
    \[
        \max_{(x,y) \in \R^2 \text{ s.t. } x^2 + y^2=1} \crl*{G(x,y)} = \phi \approx 1.618.
    \]
\end{clm}
The proof of \cref{clm:exante-G-ub-phi} follows standard two-dimensional function analysis and is deferred to \cref{sec:exante-ub-phi-proof}.

By the claim the expected cost of any agent is at most $\phi$. Thus the maximum over all agent expected costs is at most $\phi$: $\ALG \le \phi$. Under our normalization, $\OPT=1$, which gives the desired upper bound.

To see why the bound is tight, consider the $4$ point instance $P = \crl*{(1,0), (-1,0), (0,1), (0,-1)}$.
The optimal solution is to return the origin and pay the cost $1$. \cref{alg:rc} chooses each of $\crl*{(1,1), (-1,-1), (1,-1), (-1,1)}$ w.p. $0.25$ each. For the agent at $(1,0)$ the expected distance is exactly $G(\cos(0),\sin(0))=\phi$, just as in the analysis above.
\end{proof}

\subsubsection{$\rrc$ mechanism and the upper bound of $1.598$}

The idea in this section is to improve the $\rc$ mechanism by introducing random rotations. For an angle $\theta\in[0,2\pi)$, let
\[
R_\theta
:=
\begin{pmatrix}
\cos\theta & -\sin\theta\\
\sin\theta & \cos\theta
\end{pmatrix}.
\]
Thus $R_\theta$ rotates every column vector counterclockwise by $\theta$,
and $R_\theta^{-1}=R_{-\theta}$. A uniformly random
rotation in $\R^2$ means sampling $\theta\sim\mathrm{Unif}([0,2\pi))$ and
using $R_\theta$.  For a reported profile $P$, write
$R_\theta P:=\crl*{R_\theta p:p\in P}$ for its rotated profile.

\begin{algorithm}[H]
\caption{$\rrc$ Mechanism}
\label{alg:rrc}
\begin{algorithmic}[1]
\REQUIRE Reported locations $P  \subset \R^2$
\STATE Sample $\theta\sim\mathrm{Unif}([0,2\pi))$ and let $R\gets R_\theta$.
\STATE Return $R^{-1} \cdot \rc(RP)$
\end{algorithmic}
\end{algorithm}

The $\rrc$ mechanism randomly rotates the points and then chooses a random corner of the minimum bounding box.

\begin{thm}\label{thm:alg-rrc-sp-and-approx}
    The $\rrc$ mechanism (\cref{alg:rrc}) is strategyproof in expectation and achieves an expected approximation ratio (w.r.t. the $\exantec$ cost) of exactly
    $c = \frac{1}{2\pi}\int_0^{2\pi} \sqrt{3 - 2\sqrt{2}\cos(t)}\,dt \approx 1.5974$.
\end{thm}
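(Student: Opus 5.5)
The plan is to handle strategyproofness and the approximation ratio separately, reducing both to what was already shown for $\rc$.

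\textbf{Strategyproofness.} Condition on the sampled angle $\theta$. It is drawn independently of the reports, so it suffices to show that for each fixed $\theta$ the mechanism $P\mapsto R_\theta^{-1}\rc(R_\theta P)$ is strategyproof in expectation. Since $R_\theta$ is an isometry, an agent at true location $p_i$ who reports $p'_i$ has cost
\[
\norm{p_i - R_\theta^{-1}z}_2 = \norm{R_\theta p_i - z}_2,
\]
where $z$ is the corner returned by $\rc$ on the rotated profile. Thus the agent faces exactly $\rc$ on the rotated instance, with true location $R_\theta p_i$ and report $R_\theta p'_i$. \cref{lem:alg-rc-spe} then gives that truth-telling is weakly better in expectation over the corner choice. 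Integrating over $\theta$ preserves the inequality.

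\textbf{Upper bound.} Normalize so that the minimum enclosing ball of $P$ is the unit ball centered at the origin; this is invariant under rotation, so $\OPT=1$. Fix $\theta$. Every rotated point lies in the unit ball, so the rotated bounding box is contained in $[-1,1]^2$. The matching argument from \cref{lem:alg-rc-expected-approx-exante-phi} then gives that the expected cost of agent $p$ given $\theta$ is at most $G(R_\theta p)$, where $G$ averages the distances to the corners $C'=\{(\pm1,\pm1)\}$. Hence the agent's expected cost is at most
\[
H(p):=\frac{1}{2\pi}\int_0^{2\pi}G(R_\theta p)\,d\theta .
\]
$H$ is convex, being an average of convex functions, so by the Bauer maximum principle its maximum over the unit disk is attained on the unit circle. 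For $\norm{p}_2=1$ and a corner $z$ with $\norm{z}_2=\sqrt2$, write $\psi$ for the angle between $R_\theta p$ and $z$. Then
\[
\norm{R_\theta p - z}_2^2 = 1+2-2\sqrt2\cos\psi .
\]
As $\theta$ ranges uniformly over $[0,2\pi)$, $\psi$ does as well (up to a shift). So each of the four corner terms integrates to $c=\frac1{2\pi}\int_0^{2\pi}\sqrt{3-2\sqrt2\cos t}\,dt$, giving $H(p)=c$ for every unit $p$. Therefore $\ALG\le c$.

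\textbf{Tightness.} I would take $P_k$ to be the vertices of a regular $k$-gon inscribed in the unit circle, with $\OPT=1$. For every $\theta$, the rotated bounding box is within $O(1/k^2)$ of $[-1,1]^2$ in each coordinate, uniformly in $\theta$. So the expected cost of any vertex $p$ is within $O(1/k^2)$ of $H(p)=c$, and the ratio tends to $c$ as $k\to\infty$. This shows $c$ is exactly the approximation ratio, understood as a supremum.

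I expect the main obstacle to be this tightness/uniform-approximation step. It requires checking that the corners of the rotated box converge uniformly in $\theta$ and that the distance function is Lipschitz, so the error is controlled. By contrast, the identity $H\equiv c$ on the circle is a direct rotation-invariance computation. The numerical value $c\approx1.5974$ is then just evaluating this elliptic-type integral.
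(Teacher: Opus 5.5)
Your proposal is correct and follows essentially the same route as the paper. The shared steps are:
\begin{itemize}
\item conditioning on the report-independent rotation to inherit strategyproofness from \cref{alg:rc};
\item dominating the rotated bounding box's corners by those of $[-1,1]^2$;
\item using convexity to move to the unit circle, where the uniform-angle computation gives exactly $c$;
\item certifying tightness with evenly spaced points on the unit circle, whose rotated bounding boxes converge uniformly to $[-1,1]^2$.
\end{itemize}
The differences are cosmetic: you compute $H\equiv c$ directly at every unit $p$ instead of first invoking rotation invariance and evaluating at $e_1$, and you give the explicit rate $1-\cos(\pi/k)=O(1/k^2)$.
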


\begin{proof}
The strategyproofness in expectation follows directly from the strategyproofness in expectation of \cref{alg:rc} and the fact that the agents have no control over the chosen rotation matrix.

If $\OPT=0$, all reports coincide and the claim is immediate. Just like in the proof of \cref{lem:alg-rc-expected-approx-exante-phi}, we otherwise assume w.l.o.g. that the minimum enclosing ball of the points $P$ is the unit ball. Thus $\OPT = 1$ and, for every $p_i \in P$, $\norm{p_i}_2 \le 1$.

Let $P' = RP$ be the rotated multiset of agent locations.

Since rotation preserves Euclidean distances, $\norm{p'_i}_2 \le 1$ for any $p'_i \in P'$, and still $\OPT = 1$ w.r.t. the rotated points.

We may analyze each agent's expected distance from the returned facility location in the rotated coordinates, since rotating back preserves both the mechanism's distances and the optimal cost.

Let $p = (x,y) \in P$ be some agent location and let $p' = R_{\theta}p \in P'$ be its $\theta$-rotation.

Consider the expected distance between $p'$ and the returned facility location (a random corner). For each random corner of the minimum bounding box there is a corner of the bigger bounding box $[-1,1] \times [-1, 1]$ that is farther from $p'$. 

% Let $z$ be the returned facility location. 

Let
$G(p) := \frac{1}{4} \sum_{z \in C'} \norm{(x,y) - z}_2$.

Given any fixed $\theta$, the expected cost of an agent located at $p$ is at most the expected distance of its rotated point from a random corner in $C'$, which is exactly $G(R_\theta p)$.

By taking expectation over all rotations we get that the agent cost is upper bounded by $h(p) := \E\brk*{G(R_{\theta}p)}$.
By taking the maximum over all points in the unit ball $p: \norm{p}_2 \le 1$ we get the following bound on the ex-ante maximum cost of the algorithm:

\[
    \ALG \le \max_{\norm{p}_2 \le 1} \ \crl*{h(p)}.
\]

Similar to the proof of \cref{lem:alg-rc-expected-approx-exante-phi}: each summed term in $h(p)$ is convex and so $h(p)$ is convex as well. It follows that the maximum over the unit ball occurs on the boundary, which is the unit circle:  $\norm{p}_2 = 1$.
Moreover, $h(R_{\varphi}p)=h(p)$ for every rotation $R_{\varphi}$: this follows by shifting the uniformly distributed angle $\theta$ by $\varphi$ modulo $2\pi$. Hence $h$ is constant on the unit circle, and the maximum is attained at $e_1=(1,0)$.

So: \[
    \ALG \le h((1,0)).
\]

We finish the upper bound proof by showing the following claim:
\begin{clm}\label{clm-h-of-e1}
    \[
        h((1,0)) = \frac{1}{2 \pi} \int_{0}^{2 \pi} \sqrt{3 - 2 \sqrt{2} \cos(t)}\,dt \approx 1.5974.
    \]
\end{clm}
\begin{proof}

Note that $h((1,0)) = \E_{\theta, z \in C'} \brk*{\norm{(1,0) - R_{-\theta} z}_2}$.
For each $z \in C'$, $\norm{z}_2 = \sqrt{2}$ and its angle is in $\crl*{\frac{\pi}{4} + \frac{\pi}{2} \cdot k \mid k \in \crl*{0,1,2,3}}$.

Due to symmetry, each corner angle in $[0,2\pi)$ has the same probability of getting chosen, and thus $R_{-\theta} z$ has the same distribution as $Y$, a uniform random point on the circle around the origin of radius $\sqrt{2}$.

We deduce that 
\begin{align*}
    h((1,0)) &= \E_{t \sim \mathrm{Unif}([0,2\pi))}\brk*{\norm{(1,0) - \sqrt{2} (\cos(t), \sin(t))}_2} \\
    & = \frac{1}{2 \pi} \int_{0}^{2\pi} \sqrt{\prn*{1 - \sqrt{2}\cos(t)}^2 + \prn*{\sqrt{2}\sin(t)}^2}\,dt \\
    & = \frac{1}{2 \pi} \int_{0}^{2\pi} \sqrt{3 - 2\sqrt{2}\cos(t)}\,dt.
\end{align*}
\end{proof}

To see that this expected approximation ratio is tight, let $P_n$ consist of $n\ge3$ evenly spaced points on the unit circle. Then $\OPT(P_n)=1$. As $n\to\infty$, after every rotation the four coordinates of the minimum bounding box converge uniformly to $-1,-1,1,1$: each coordinate-extreme direction is within angle at most $\pi/n$ of a point of $P_n$. Consequently, each returned corner converges uniformly to the corresponding corner of $[-1,1]^2$. By the triangle inequality, replacing a returned corner by its limiting corner changes every agent's distance by at most the corner displacement, which tends uniformly to $0$. The computation in \cref{clm-h-of-e1} therefore shows that the ex-ante cost of every agent in $P_n$ converges to $c$. Hence the supremal approximation ratio of the mechanism is exactly $c$.

\end{proof}

We conclude by remarking that the analogous corner mechanisms do not yield a dimension-independent improvement over $2$ as $d$ grows. As we show in \cref{thm:Rd-lb-2}, for every constant $c>0$, no mechanism family that is strategyproof in expectation can guarantee a $(2-c)$ ex-ante approximation uniformly over all dimensions and numbers of agents.

\section{Lower bounds for $\R^2$}\label{sec:lb-R2}
\label{sec:R2-lower-bounds}

In this section we give lower bounds in $\R^2$ for both the ex-ante and ex-post objectives. The main structure of the construction is similar for both objectives.

\begin{thm}[$\R^2$ lower bounds]
\label{thm:R2-lower-bounds}
Let $M$ be a strategyproof in expectation mechanism in $\R^2$.

Let $a := \frac{1+2\sqrt 7}{3}, \ D := 2\sqrt 3$,
and $\rho := \frac{a}{D} = \frac{1+2\sqrt 7}{6\sqrt 3} \approx 0.6054$, and $\Psi(r) := \frac{1}{2\pi} \int_0^{2\pi} \sqrt{ 1+r^2-2r\cos\theta } \,d\theta$.

The expected approximation ratio of $M$ is at least $1 + \rho \approx 1.605$ for the ex-post objective and at least $\Psi(\rho) \approx 1.09$ for the ex-ante objective.
\end{thm}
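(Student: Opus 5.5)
The plan is to build one hard instance by two rounds of the distance-amplification technique used for \cref{thm:Rd-lb-2}, following \cref{fig:r2-construction}. The ex-post and ex-ante bounds will differ only in the last averaging step.

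\textbf{Step 1: seed the guarantee.} Start from the two-group instance of \citet{procaccia2013approximate}: $2k$ agents at $c$ and $2k$ agents at a point $w_0$ with $\norm{c-w_0}_2=2$. By the triangle inequality, $D_P(c)+D_P(w_0)\ge \norm{c-w_0}_2=2$. Relabeling the two points if necessary, I may assume $D_P(c)\ge 1$.

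\textbf{Step 2: the triangle step.} Let $w_0,w_1,w_2$ be the equilateral triangle with circumcenter $c$ and circumradius $2$. Move the $2k$ agents at $c$ one at a time, $k$ of them to $w_1$ and $k$ to $w_2$. As in \cref{eq:dist-increase-due-to-spe}, strategyproofness in expectation gives that $D(c)$ never decreases, so $D(c)\ge 1$ still holds in the resulting profile. Next I need a pointwise inequality of the form
\[
\frac13\sum_{i=0}^2 \norm{y-w_i}_2 \ge \varphi\bigl(\norm{y-c}_2\bigr),
\]
where $\varphi$ is convex and nondecreasing. I would derive it the way \cref{lem:simplex-avg-dist-amplification} was derived: use $\sum_i w_i=3c$ and the second-moment identity of \cref{lem:simplex-identities} for the triangle, then a one-variable minimization over the angle of $y-c$. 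Jensen's inequality then gives $\frac13\sum_i D(w_i)\ge \varphi(D(c))\ge\varphi(1)$. Hence some vertex $b\in\{w_0,w_1,w_2\}$ satisfies $D(b)\ge \varphi(1)$, and the target is $\varphi(1)=a=\frac{1+2\sqrt7}{3}$. This is the step I expect to be the main obstacle. The bound must be sharp enough to produce exactly this $a$, and the constant $\sqrt7$ suggests that the extremal configuration is a specific non-symmetric position of $y$ relative to the triangle. If a clean convex $\varphi$ does not come out, I would fall back on minimizing the average distance over points $y$ with $\norm{y-c}_2=t$ and then checking convexity in $t$ directly.

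\textbf{Step 3: the circle step.} Take the $k$ agents at $b$ (for $b=w_0$, only $k$ of its $2k$ agents) and move them one at a time to $k$ equally spaced points $z_0,\dots,z_{k-1}$ on the circle of radius $D=2\sqrt3$ around $b$. Strategyproofness again keeps $D(b)\ge a$. The other two triangle vertices lie at distance $2\sqrt3=D$ from $b$. Therefore every reported point lies in the closed ball of radius $D$ around $b$, and $\OPT\le D$.

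\textbf{Step 4: bounding $\ALG$ for each objective.}
\begin{itemize}
\item \emph{Ex-post.} Pointwise, $\max_j\norm{y-z_j}_2\ge D+\norm{y-b}_2-O(D/k)$, since some $z_j$ is nearly antipodal to $y$. Taking expectations gives an expected maximum cost of at least $D+a-o_k(1)$, so the ratio is at least $1+\rho-o_k(1)$.
\item \emph{Ex-ante.} Pointwise, $\frac1k\sum_j\norm{y-z_j}_2 = D\,\Psi(\norm{y-b}_2/D)-o_k(1)$, by a Riemann-sum approximation of the integral defining $\Psi$. The function $r\mapsto \Psi(r)$ is convex (it is an average of convex functions of $y$, and radially symmetric) and nondecreasing for $r\ge0$. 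Jensen's inequality then gives an average of the $D(z_j)$ of at least $D\Psi(a/D)-o_k(1)$, so some agent has expected cost at least this, and the ratio is at least $\Psi(\rho)-o_k(1)$.
\end{itemize}
Letting $k\to\infty$ completes both bounds.
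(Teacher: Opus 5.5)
Your architecture matches the paper's exactly: the same two-group seed, the same circumradius-$2$ triangle around $c$ with $k$ agents sent to each of $w_1,w_2$, the same radius-$D=2\sqrt3$ circle around the selected vertex $b$, and the same Jensen steps. The one real gap is the triangle inequality in Step~2. It is the only place the constant $a$ enters, and the route you name does not deliver it.

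\textbf{Why your Step~2 route falls short.} Write $t_j=\cos(\theta-\frac{2\pi j}{3})$ and $r=\norm{y-c}_2$. Copying the simplex argument means using $\sqrt{1+z}\ge 1+\frac z2-\frac{z^2}{2}$ with $\sum_j t_j=0$ and $\sum_j t_j^2=\frac32$. This gives
\[
\frac13\sum_j\norm{y-w_j}_2\ge\sqrt{r^2+4}\,\Bigl(1-\frac{4r^2}{(r^2+4)^2}\Bigr).
\]
At $r=1$ this equals $\frac{21\sqrt5}{25}\approx1.878$, which is below $a\approx2.097$, so both final constants would be strictly weaker.

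Your heuristic about the extremal point is also wrong. For $\norm{y-c}_2=1$, the average distance is minimized when $y$ lies on the segment from $c$ toward a vertex, with distances $1,\sqrt7,\sqrt7$. The opposite direction gives $(3+2\sqrt3)/3\approx2.155$. So $\sqrt7$ comes from the symmetric configuration, not an asymmetric one.

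Your fallback (minimize over the angle for each $t$, then check convexity) is sound in principle, because the minimum is $g(t)=\frac{|t-2|+2\sqrt{t^2+2t+4}}{3}$. But proving that $\theta=0$ is the minimizer for every $t$ is exactly the content you left undone.

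\textbf{The missing idea.} Choose the quadratic minorant so that it is exact at the symmetric configuration. One has $\norm{y-w_j}_2^2=r^2+4-4rt_j$. The paper sets $s=\sqrt{r^2+2r+4}$, $p=|r-2|/s\in[0,1]$ and $z_j=\frac{2t_j+1}{3}\in[-\frac13,1]$. Then $\norm{y-w_j}_2=s\sqrt{1-(1-p^2)z_j}$, and $\sum_j z_j=\sum_j z_j^2=1$ for every angle.

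It then uses
\[
\sqrt{1-(1-p^2)z}\ge 1-\frac{1-p^2}{2}z-\frac{(1-p)^2}{2}z^2 \quad\text{on } [-\tfrac13,1].
\]
This minorant is tangent at $z=0$ and exact at $z=1$, hence exact at $(z_0,z_1,z_2)=(1,0,0)$. Summing over $j$ gives exactly $s(2+p)=3g(r)$. The function $g$ is convex, being $\frac13|r-2|$ plus $\frac23\norm{(r+1,\sqrt3)}_2$, and it is nondecreasing with $g(1)=a$.

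\textbf{The remaining steps.} These are fine, and your ex-post step is slightly simpler than the paper's. Route through the exact antipode $z^\ast$ of $y$ on the circle; the nearest grid point is within chord length $\frac{\pi D}{k}$ of it. This gives $\max_j\norm{y-z_j}_2\ge\norm{y-b}_2+D-\frac{\pi D}{k}$, which is affine in $\norm{y-b}_2$, so you can take expectations directly. The paper instead proves the bound $H_{D,k}(\norm{y-b}_2)$ and needs its convexity.

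For the ex-ante step, make sure your Riemann-sum error is uniform in $y$. It is, because $\theta\mapsto\norm{y-b-D(\cos\theta,\sin\theta)}_2$ is $D$-Lipschitz for every $y$. This yields the paper's error term $\frac{\pi D}{2k}$.
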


\paragraph{High level overview.}
To show \cref{thm:R2-lower-bounds} we give a single hard instance that we use for the lower bounds of both objectives. The instance is constructed in the following way: we start with $\frac{n}{2}$ agents in $u=(-1,0)$ and $\frac{n}{2}$ agents in $v=(1,0)$.
This implies that the expected distance between the point returned by the mechanism $Y$ and at least one of $\crl{u,v}$ is at least $1$. Denote this point by $c$ and the other endpoint by $w_0$. Next, complete $w_0$ to an equilateral triangle $w_0,w_1,w_2$ centered at $c$. The construction moves $\frac{n}{4}$ agents from $c$ to each of $w_1$ and $w_2$. A technical lemma about the average distance from $Y$ to the triangle vertices, together with Jensen's inequality, shows that at least one vertex $b \in \crl*{w_0,w_1,w_2}$ has expected distance at least $a$. Then, $\frac{n}{4}$ agents at $b$ are moved to equally spaced points on the circle $C$ centered at $b$ with radius $D=2\sqrt{3}$. The expected distance between $Y$ and $b$ remains at least $a$. The entire instance is contained in the ball $B(b,D)$, while for every returned point there is a discretization point whose direction is within $\pi/k$ of the antipodal direction.
Thus, the construction moves from two points, to an equilateral triangle, to a circle (\cref{fig:r2-construction}).

\paragraph{Expected distance amplification via an equilateral triangle.}
We first prove the geometric amplification inequality that we use
in the construction.

Let $c=(0,0)$, 
and let $
w_0=(2,0), 
w_1=(-1,\sqrt 3),
w_2=(-1,-\sqrt 3)$.
Thus, $(w_0,w_1,w_2)$ are the vertices of an equilateral triangle
centered at $c$, with circumradius $2$ and side length $\norm{w_i-w_j}_2=2\sqrt 3=D$ for every $i\neq j$.
Let
$
g(r)
:=
\frac{
\abs{r-2}
+
2\sqrt{r^2+2r+4}
}{3},
\qquad
r\ge 0.
$
The following technical lemma shows a lower bound on the average distance of the triangle points from any point in the plane.
\begin{lem}[Equilateral-triangle distance inequality]
\label{lem:equilateral-pointwise}
For every $y\in\mathbb R^2$,
$
\frac13
\sum_{j=0}^2
\norm{y-w_j}_2
\ge
g\prn*{\norm{y-c}_2}.
$
Moreover, $g$ is convex and nondecreasing on $[0,\infty)$.
\end{lem}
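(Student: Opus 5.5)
The plan is to reduce the inequality to a one-parameter statement about the angle of $y$, and then use a symmetric-function (equal-variable) argument. First I note that $g$ is exactly the average distance from the point $y=(r,0)$, which lies on the ray toward $w_0$. Indeed, $\norm{(r,0)-w_0}_2=\abs{r-2}$ and $\norm{(r,0)-w_{1,2}}_2=\sqrt{(r+1)^2+3}=\sqrt{r^2+2r+4}$. So the claim says that, among all points on the circle of radius $r$ around $c$, the average distance to the triangle is minimized in the direction of a vertex.

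Write $y=r(\cos\varphi,\sin\varphi)$ and $c_j:=\cos(\varphi-2\pi j/3)$. Since $\norm{w_j}_2=2$, we have $\norm{y-w_j}_2=\sqrt{A-Bc_j}$ with $A=r^2+4$ and $B=4r$. The average is therefore $F(\varphi)=\frac13\sum_j q(c_j)$, where $q(t):=\sqrt{A-Bt}$ is defined on $[-1,1]$; note $A-Bt\ge (r-2)^2\ge0$ there. The triple $(c_0,c_1,c_2)$ has two fixed power sums:
\[
\textstyle\sum_j c_j=0,\qquad \sum_j c_j^2=\tfrac32 .
\]
Its third power sum $\sum_j c_j^3=\frac34\cos 3\varphi$ ranges over $[-\frac34,\frac34]$. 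So the triples are parametrized, up to permutation, by $p_3:=\sum_j c_j^3$. The target value $g(r)$ corresponds to $\varphi=0$, that is, $p_3=\frac34$, the maximum.

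The key step is to show that $\sum_j q(c_j)$ is nonincreasing in $p_3$ when $p_1,p_2$ are held fixed. I will use the standard equal-variable/Newton-identity argument. Move along the constraint curve by varying $\varphi$, and write the derivatives of the $c_j$ as the solution of the linear system given by $d p_1=0$, $d p_2=0$, $d p_3=\delta$ (a Vandermonde system). Then $d\sum q(c_j)/dp_3$ equals a second-order divided difference of $q'$ at the nodes $c_0,c_1,c_2$, up to a positive constant. This is $\tfrac12 q'''(\xi)$ times that constant for some $\xi$ between the nodes. Here $q'''(t)=-\frac38B^3(A-Bt)^{-5/2}<0$, so the sum is decreasing in $p_3$ and is minimized at $p_3=\frac34$, which gives $F(\varphi)\ge F(0)=g(r)$. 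The points where nodes coincide ($\varphi\in\frac{\pi}{3}\mathbb Z$) are the endpoints of the $p_3$ range and are handled by continuity. The degenerate cases $r=0$, and $r=2$ with $y=w_0$ where $q$ is not smooth at $t=1$, are also handled by continuity.

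This sign computation is the main obstacle. I must make sure that the positive constant in the divided-difference formula really is positive, i.e., orient the Vandermonde determinant correctly. If this becomes messy, my fallback is to differentiate $F$ directly on $[0,\pi/3]$ and show $F'(\varphi)\ge0$. That amounts to comparing terms of the form $\frac{\sin\psi}{\sqrt{A-B\cos\psi}}$, and the resulting ratio should be monotone by the same convexity of $q'$.

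Finally, I show that $g$ is convex and nondecreasing. Convexity holds because $\abs{r-2}$ and $\sqrt{(r+1)^2+3}$ (the norm of an affine map) are convex. For monotonicity, $g$ is increasing on $[2,\infty)$, since both terms are. On $[0,2]$ we have $g'(r)=-\frac13+\frac23\cdot\frac{r+1}{\sqrt{r^2+2r+4}}$. At $r=0$ this equals $-\frac13+\frac13=0$, and it is increasing in $r$ by convexity, so $g'\ge0$ throughout.
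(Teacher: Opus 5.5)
Your proof is correct, and it takes a different route from the paper's. The step you flag as the main obstacle is harmless. Differentiating the constraints $p_1=0$, $p_2=\frac32$ and $p_3$ gives $dc_j=\frac{dp_3}{3}\prod_{i\neq j}(c_j-c_i)^{-1}$. Hence $\frac{d}{dp_3}\sum_j q(c_j)=\frac13\,q'[c_0,c_1,c_2]=\frac16 q'''(\xi)<0$ on $(-\frac34,\frac34)$. This expression is symmetric in the nodes, so there is no orientation to fix. The largest node equals $1$ only at $p_3=\frac34$, so the nonsmooth case $r=2$ affects only the endpoint.

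The paper never moves along the orbit; instead it gives a one-shot certificate.
\begin{itemize}
\item It substitutes $z_j=(2c_j+1)/3$. The vertex configuration then becomes $(1,0,0)$, and $\sum_j z_j=\sum_j z_j^2=1$ for every $\varphi$.
\item It writes $\|y-w_j\|_2=s\sqrt{1-(1-p^2)z_j}$, with $s=\sqrt{r^2+2r+4}$ and $p=|r-2|/s$.
\item It bounds each term below by $1-\frac{1-p^2}{2}z-\frac{(1-p)^2}{2}z^2$ on $[-\frac13,1]$, proved via an explicit factorization.
\item Summing uses only the two invariant power sums and yields exactly $3g(r)$.
\end{itemize}
That quadratic is in effect the Hermite interpolant of the square-root term at the extremal nodes $0,0,1$. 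So its validity rests on the same sign of the third derivative that drives your divided-difference argument.

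The trade-off is as follows. The paper's certificate is purely algebraic and needs no implicit differentiation, and no continuity patching at $r\in\{0,2\}$ or at coincident nodes, but one has to guess the right minorant. Your argument requires no guessing and proves more: the average distance is monotone in $\cos 3\varphi$, so it is also maximal in the edge-midpoint directions.

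Your treatment of the convexity and monotonicity of $g$ is essentially the paper's.
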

The proof of \cref{lem:equilateral-pointwise} consists of standard calculus and is deferred to \cref{sec:missing-analysis-lb-R2}. A direct corollary of \cref{lem:equilateral-pointwise} gives a lower bound on the expected distance from any randomly chosen point $Y$.

\begin{cor}[Expected triangle amplification]
\label{cor:equilateral-expected}
Let $c\in\mathbb R^2$, and let $w_0,w_1,w_2$ be the vertices of an equilateral triangle centered at $c$, with
circumradius $2$. For every random point $Y$, there exists $j \in \crl*{0,1,2}$ such that 
\[
\E\brk*{
\norm{Y-w_j}_2
}
\ge
g\prn*{
\mathbb E\brk*{
\norm{Y-c}_2
}
}.
\]
\end{cor}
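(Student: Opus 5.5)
We will follow the same pattern as the proof of \cref{lem:simplex-amplification}: apply the pointwise inequality of \cref{lem:equilateral-pointwise}, take expectations, use convexity via Jensen, and finish with an averaging argument.

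First we reduce to the normalized triangle of \cref{lem:equilateral-pointwise}. Translate so that $c$ is the origin, and apply a rotation (possibly composed with a reflection) taking the given equilateral triangle of circumradius $2$ centered at $c$ onto the fixed triangle $w_0=(2,0)$, $w_1=(-1,\sqrt3)$, $w_2=(-1,-\sqrt3)$. Such an isometry exists because any two equilateral triangles with the same center and circumradius differ by a rotation about that center. Both $\norm{Y-w_j}_2$ and $\norm{Y-c}_2$ are preserved under isometries, so we may replace $Y$ by its image and assume the triangle is exactly the one in \cref{lem:equilateral-pointwise}.

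Next, apply \cref{lem:equilateral-pointwise} pointwise at every realization $y=Y$ and take expectations:
\[
\frac13\sum_{j=0}^2 \E\brk*{\norm{Y-w_j}_2} = \E\brk*{\frac13\sum_{j=0}^2\norm{Y-w_j}_2} \ge \E\brk*{g\prn*{\norm{Y-c}_2}}.
\]
The lemma states that $g$ is convex on $[0,\infty)$, and $\norm{Y-c}_2$ takes values in $[0,\infty)$. Jensen's inequality therefore gives $\E[g(\norm{Y-c}_2)]\ge g(\E[\norm{Y-c}_2])$. The left-hand side is an average of three numbers, so at least one of them is at least the average. Hence some $j\in\{0,1,2\}$ satisfies $\E[\norm{Y-w_j}_2]\ge g(\E[\norm{Y-c}_2])$.

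The only point needing care is integrability. If $\E[\norm{Y-c}_2]=\infty$, then $\E[\norm{Y-w_j}_2]\ge \E[\norm{Y-c}_2]-2=\infty$ for every $j$ by the triangle inequality, and the claim holds trivially. Otherwise, every quantity involved is finite, because each $\norm{Y-w_j}_2$ is within $2$ of $\norm{Y-c}_2$ and $g(r)=O(r)$. Jensen's inequality then applies directly. Apart from this, the argument is routine: the substantive geometric content sits in \cref{lem:equilateral-pointwise}, which we assume.
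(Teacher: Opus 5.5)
Your proposal is correct and follows essentially the same route as the paper: apply \cref{lem:equilateral-pointwise} pointwise, take expectations, use Jensen's inequality via the convexity of $g$, and pick the vertex whose expectation is at least the average. Your isometry normalization and integrability remarks are harmless additions that the paper leaves implicit.
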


\begin{proof}
By \cref{lem:equilateral-pointwise}, pointwise, $\frac13
\sum_{j=0}^2
\norm{Y-w_j}_2
\ge
g\prn*{\norm{Y-c}_2}$.
Taking expectation and applying Jensen's inequality ($g$ is convex) gives
\[
\frac13
\sum_{j=0}^2
\mathbb E\brk*{
\norm{Y-w_j}_2
}
\ge
\mathbb E\brk*{
g\prn*{\norm{Y-c}_2}
} \ge g\prn*{
\mathbb E\brk*{
\norm{Y-c}_2
}
},
\]
which implies the desired.
\end{proof}

\iffalse
If
\[
\mathbb E\brk*{
\norm{Y-c}_2
}
\ge1,
\]
then the monotonicity of $g$ gives
\[
\frac13
\sum_{j=0}^2
\mathbb E\brk*{
\norm{Y-w_j}_2
}
\ge
g(1).
\]
Finally,
\[
g(1)
=
\frac{1+2\sqrt7}{3}
=
a.
\]
The conclusion follows by averaging.
\fi

\paragraph{Proof of \cref{thm:R2-lower-bounds}.}
We can now construct an instance we use for both objectives.
Let $k \gg 1$ and $n = 4k$.

\begin{lem}[Common $\R^2$ hard instance]
\label{lem:common-planar-hard-instance}
Let $M$ be a strategyproof in expectation mechanism in $\mathbb R^2$, and $k \in \N, k > 5$.

Then there exists an instance $P$ and $b \in \R^2$ such that at least one agent is located at each of $z_0,\ldots,z_{k-1}$, $D_P(b) \ge a$ and $\OPT(P) \le D$ where $z_0,\ldots,z_{k-1} \in \R^2$ are the discretization points of the circle centered at $b$ with radius $D$ via $k$ points. That is, for $j\in\{0,\ldots,k-1\}$,  $z_j := b+ D \prn*{ \cos\prn*{\frac{2\pi j}{k}}, \sin\prn*{\frac{2\pi j}{k}}}$.
\end{lem}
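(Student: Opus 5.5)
The construction runs in three stages, following the overview. In each stage we move agents one at a time and use strategyproofness in expectation to show that the expected distance from the vacated point never decreases, exactly as in \cref{eq:dist-increase-due-to-spe}. Take $n=4k$.

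\textbf{Stage 1 (two points).} Put $2k$ agents at $u=(-1,0)$ and $2k$ at $v=(1,0)$, and call this profile $P_0$. For any $y$ we have $\norm{y-u}_2+\norm{y-v}_2\ge 2$, so $D_{P_0}(u)+D_{P_0}(v)\ge 2$. Let $c\in\{u,v\}$ be the point with $D_{P_0}(c)\ge 1$ and $w_0$ the other one. Since $\norm{w_0-c}_2=2$, we can complete $w_0$ to an equilateral triangle $w_0,w_1,w_2$ centered at $c$ with circumradius $2$; its side length is $D=2\sqrt3$.

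\textbf{Stage 2 (triangle).} Move the $2k$ agents at $c$ one at a time, $k$ of them to $w_1$ and $k$ to $w_2$. At each move, treat $c$ as the true location of the moving agent; strategyproofness then gives $D_{P_1}(c)\ge D_{P_0}(c)\ge1$, where $P_1$ is the resulting profile with $2k$ agents at $w_0$ and $k$ each at $w_1,w_2$. By \cref{cor:equilateral-expected} and the monotonicity of $g$ from \cref{lem:equilateral-pointwise}, some vertex $b=w_j$ satisfies $D_{P_1}(b)\ge g(D_{P_1}(c))\ge g(1)=\frac{1+2\sqrt7}{3}=a$. Every vertex holds at least $k$ agents.

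\textbf{Stage 3 (circle).} Move exactly $k$ agents from $b$ one at a time, sending the $j$-th to $z_j$. The same strategyproofness argument, with $b$ as the true location, yields $D_P(b)\ge D_{P_1}(b)\ge a$ for the final profile $P$, and each $z_j$ now holds exactly one agent. For the bound on $\OPT$, note that every occupied location lies in the closed disk $B(b,D)$. Each $z_j$ is at distance exactly $D$ from $b$, and the other two triangle vertices are at distance equal to the side length $D$. The point $b$ itself, if some agents remain there, is at distance $0$. Placing the facility at $b$ therefore gives $\OPT(P)\le D$.

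The only delicate step is the bookkeeping in the moves. We must make sure that in every single-agent move the vacated location $x$ (first $c$, then $b$) is a location where some agent actually sits in the current profile, so that strategyproofness applies with true location $x$. We also need a consistent agent count so that exactly $k$ agents remain at $b$ for the circle stage. Starting from $2k$ at each point guarantees this whichever vertex $b$ turns out to be, because every vertex has at least $k$ agents. The bound $D_P(b)\ge a$ then propagates because the inequality chain is monotone through each intermediate profile.
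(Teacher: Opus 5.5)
Your proof is correct and follows the paper's argument essentially step for step: two colocated groups of $2k$ agents, strategyproofness-monotone single-agent moves from $c$ to the triangle, \cref{cor:equilateral-expected} with $g(1)=a$, then $k$ single-agent moves from $b$ to the circle, and every occupied point lies within distance $D$ of $b$. The differences are cosmetic: you place $u,v$ at $(\mp1,0)$, you observe explicitly that $c$ is fully vacated (the paper only hedges on this), and you say "exactly $k$ agents remain at $b$" where you mean at least $k$ (there are $2k$ if $b=w_0$), as your next sentence makes clear.
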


\begin{proof}
Let $u = (0,0)$, $v = (2,0)$. So $\norm{u-v}_2 = 2$.
Let $P^0$ be an instance in which $2k$ agents are located at
$u$, and $2k$ agents are located at $v$.

For every realization $y$ of $Y_{P^0}$, triangle inequality
gives
$
\norm{y-u}_2+\norm{y-v}_2
\ge
\norm{u-v}_2
=
2.
$
By taking expectation we get $D_{P^0}(u)+D_{P^0}(v)\ge 2$.
Therefore, one of $u,v$ has expected distance at least $1$. Denote
this endpoint by $c$, and denote the other endpoint by $w_0$. Thus, $D_{P^0}(c)\ge1$
and $\norm{w_0-c}_2=2$.
Complete $w_0$ to an equilateral triangle $w_0,w_1,w_2$
centered at $c$, with circumradius $2$. Hence, for every $i\neq j$,
$\norm{w_i-w_j}_2=D=2\sqrt3$.

There are at least $2k$ agents at $c$. One by one, we
move $k$ agents from $c$ to $w_1$, and move another $k$ agents from $c$ to $w_2$. Let $P^1$
be the resulting profile.
Since $M$ is strategyproof in expectation, we have that $D_{P^1}(c)
\ge
D_{P^0}(c)
\ge 1$ since we move the agents one by one and have that the expected distance between $Y$ and $c$ may only increase as a result of the change since otherwise an agent at $c$ could misreport its location to $w_1$ or $w_2$ and improve her expected cost.

Applying \cref{cor:equilateral-expected} to $Y_{P^1}$, there exists
a vertex $b\in\{w_0,w_1,w_2\}$ such that \[D_{P^1}(b)\ge g(D_{P^1}(c)) \ge g(1) \ge a.\]

Every triangle vertex contains at least $k$ agents: $w_1$ and
$w_2$ each contain $k$ agents, while $w_0$ contains at least
$2k$ agents.

For $j\in\{0,\ldots,k-1\}$, let $z_j := b+ D \prn*{ \cos\prn*{\frac{2\pi j}{k}}, \sin\prn*{\frac{2\pi j}{k}}}$ be the $j$'th point of the discretization of the circle centered at $b$ with radius $D$ via $k$ points.
Select $k$ agents located at $b$, and move one selected agent to
each of the points $z_0,\ldots,z_{k-1}$.
Let $P$ be the resulting profile. Again, since $M$ is strategyproof in expectation, no agent at $b$ may benefit from misreporting to any of the points $z_j$, and hence:
$D_P(b) \ge D_{P^1}(b) \ge a$.

Every point $z_j$ is at distance exactly $D$ from $b$. Every
triangle vertex is at distance at most $D$ from $b$, and the
triangle center $c$, if it remains occupied, satisfies
$\norm{c-b}_2=2<D$.
Therefore, every occupied point of $P$ lies in $B(b,D)$, and hence
$\OPT(P)\le D$.
\end{proof}

\paragraph{The ex-ante lower bound.}

For $D>0$, let $\Gamma_D(r): \R_{\ge 0} \to \R_{\ge 0}$ be defined as \[\Gamma_D(r) := \frac1{2\pi} \int_0^{2\pi} \sqrt{ r^2+D^2-2rD\cos\theta } \,d\theta.\]
$\Gamma_D(r)$ is the average distance from a point at distance $r$ from the center of a circle of radius $D$ to a uniformly random point on that circle.
By scaling,
$\Gamma_D(r) = D\Psi\prn*{\frac rD}$. Next, we give a useful lemma; in the first item we show that $\Gamma_D$ is convex and nondecreasing, which will be useful for the ex-ante lower bound. In the second item we bound the discretization error of the circle average by a negligible $O(\frac{1}{k})$ term.

\begin{lem}
\label{lem:circle-average-and-discretization} The following two items hold:
\begin{enumerate}[label=(\roman*)]
\item \label{lem:circle-average-properties}
\emph{(Circle average properties.)}
For every $D>0$, the function $\Gamma_D$ is convex and nondecreasing.
\item \label{lem:circle-discretization-error}
\emph{(Circle discretization error.)}
Let $b,y\in\mathbb R^2$, let $D>0$, and for any $j \in \crl*{0,\ldots,k-1}$ let $z_j = b+ D \prn*{ \cos\prn*{\frac{2\pi j}{k}}, \sin\prn*{\frac{2\pi j}{k}}}$.
Then
\[
\frac1k
\sum_{j=0}^{k-1}
\norm{y-z_j}_2
\ge
\Gamma_D(\norm{y-b}_2)
-
\frac{\pi D}{2k}.
\]
\end{enumerate}
\end{lem}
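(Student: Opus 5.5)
For item (i), I will write $\Gamma_D(r)$ as an average over the uniform angle $\theta$ of $f_\theta(r) := \norm{r e_1 - D(\cos\theta,\sin\theta)}_2$. Equivalently, $\Gamma_D(r) = \E_\theta\brk*{\norm{r e_1 - D u_\theta}_2}$, where $u_\theta$ is a uniform unit vector.

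Convexity is immediate. Each map $r \mapsto \norm{r e_1 - D u_\theta}_2$ is a norm composed with an affine map of $r$, hence convex on $\R$. An average of convex functions is convex.

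For monotonicity on $[0,\infty)$, I use symmetry. The map $\theta\mapsto\theta+\pi$ preserves the uniform measure and sends $u_\theta$ to $-u_\theta$. Hence $\Gamma_D(-r)=\Gamma_D(r)$ when $\Gamma_D$ is extended to all of $\R$, so $\Gamma_D$ is an even convex function on $\R$. An even convex function is minimized at $0$: indeed $\Gamma_D(0)\le \tfrac12(\Gamma_D(r)+\Gamma_D(-r))=\Gamma_D(r)$. A convex function that is minimized at $0$ is nondecreasing on $[0,\infty)$. This gives (i) with no calculus at all.

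For item (ii), I will compare the discrete average with the continuous one. Set $\phi(\theta) := \norm{y - b - D(\cos\theta,\sin\theta)}_2$. By rotation invariance,
\[
\frac{1}{2\pi}\int_0^{2\pi}\phi(\theta)\,d\theta = \Gamma_D(\norm{y-b}_2).
\]
The function $\phi$ is $D$-Lipschitz in $\theta$, because the point $b+D(\cos\theta,\sin\theta)$ moves at speed $D$ along the circle and the triangle inequality applies. Split $[0,2\pi)$ into the $k$ arcs $I_j = [2\pi j/k,\,2\pi(j+1)/k)$, each of length $2\pi/k$. On $I_j$ we have $\phi(\theta)\le \phi(2\pi j/k) + D\,(\theta - 2\pi j/k)$. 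Averaging this over $I_j$ gives
\[
\frac{k}{2\pi}\int_{I_j}\phi \le \norm{y-z_j}_2 + D\cdot\frac{\pi}{k}.
\]
That already proves a bound with error $\pi D/k$.

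To reach the stated $\pi D/(2k)$, I instead center the comparison: use the arc $[2\pi j/k - \pi/k,\ 2\pi j/k+\pi/k)$ around each $z_j$. These arcs also partition the circle, and on each one the average of $|\theta - 2\pi j/k|$ is $\pi/(2k)$. This yields
\[
\frac{k}{2\pi}\int_{\text{arc}_j}\phi \le \norm{y-z_j}_2 + \frac{\pi D}{2k}.
\]
Summing over $j$ and dividing by $k$ gives exactly the claimed inequality.

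Nothing here is a real obstacle. The only points needing care are the evenness/convexity argument for monotonicity (rather than differentiating under the integral) and choosing centered arcs so the constant comes out as $\pi D/(2k)$ rather than $\pi D/k$.
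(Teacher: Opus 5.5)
Your proposal is correct and follows essentially the same argument as the paper: convexity from averaging norms of affine maps, evenness via $\theta\mapsto\theta+\pi$ to get monotonicity on $[0,\infty)$, and, for (ii), the $D$-Lipschitz bound integrated over arcs of length $2\pi/k$ centered at the grid angles. The differences are cosmetic. You deduce monotonicity from ``minimized at $0$'' rather than by writing $r$ as a convex combination of $s$ and $-s$. You also invoke rotation invariance of the continuous average instead of rotating the plane and introducing a phase shift in the grid.
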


The proof of \cref{lem:circle-average-and-discretization}, which uses standard technical calculus, is deferred to \cref{sec:missing-analysis-lb-R2}.

\begin{lem}[Ex-ante lower bound]
\label{lem:R2-ex-ante-lb}
Let $P,b,z_0,\ldots,z_{k-1}$ be the profile and points obtained in
\cref{lem:common-planar-hard-instance}. Then
\[
\frac{\exantec(P,M)}{\OPT(P)}
\ge
\Psi(\rho)-\frac{\pi}{2k}.
\]
\end{lem}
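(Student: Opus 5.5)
The plan is to lower-bound the expected cost of a uniformly random circle agent $z_j$. This average is at most the maximum expected cost over all agents, hence at most $\exantec(P,M)$. We then divide by $\OPT(P)\le D$.

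\textbf{Step 1: pointwise bound.} Let $Y:=Y_P$. Every $z_j$ hosts at least one agent, by \cref{lem:common-planar-hard-instance}. Therefore
\[
\exantec(P,M)\ \ge\ \max_j \E\brk*{\norm{Y-z_j}_2}\ \ge\ \frac1k\sum_{j=0}^{k-1}\E\brk*{\norm{Y-z_j}_2}=\E\brk*{\frac1k\sum_{j=0}^{k-1}\norm{Y-z_j}_2}.
\]
Apply \cref{lem:circle-average-and-discretization}\ref{lem:circle-discretization-error} pointwise with $y=Y$. This bounds the inner average below by $\Gamma_D(\norm{Y-b}_2)-\frac{\pi D}{2k}$ for every realization.

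\textbf{Step 2: Jensen and monotonicity.} By \cref{lem:circle-average-and-discretization}\ref{lem:circle-average-properties}, $\Gamma_D$ is convex. Jensen's inequality therefore gives
\[
\E\brk*{\Gamma_D(\norm{Y-b}_2)}\ge \Gamma_D(D_P(b)).
\]
Since $\Gamma_D$ is also nondecreasing and $D_P(b)\ge a$ by \cref{lem:common-planar-hard-instance}, we get $\Gamma_D(D_P(b))\ge\Gamma_D(a)$. The scaling identity then gives $\Gamma_D(a)=D\Psi(a/D)=D\Psi(\rho)$. Combining the steps,
\[
\exantec(P,M)\ge D\Psi(\rho)-\frac{\pi D}{2k}.
\]

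\textbf{Step 3: normalization.} Using $\OPT(P)\le D$ and $\OPT(P)>0$, we get
\[
\frac{\exantec(P,M)}{\OPT(P)}\ge \frac{D\Psi(\rho)-\pi D/(2k)}{D}=\Psi(\rho)-\frac{\pi}{2k}.
\]
This division is valid because the numerator's lower bound is positive. Indeed, $\Psi(\rho)\ge 1$: by Jensen, $\Psi(r)\ge\bigl|\frac1{2\pi}\int_0^{2\pi}(1-re^{i\theta})\,d\theta\bigr|=1$. Also $\pi/(2k)<1$ for $k>5$. We also need $\OPT(P)>0$, which holds because $P$ contains the distinct points $z_0$ and $z_1$.

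\textbf{Expected difficulty.} There is no real obstacle. The only care needed is the order of operations: apply the discretization bound pointwise inside the expectation first, then Jensen, and only then monotonicity. This order works because the lemmas supply convexity and monotonicity of $\Gamma_D$ on all of $[0,\infty)$.
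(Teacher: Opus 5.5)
Your proposal is correct and follows the paper's proof step for step. It averages over the circle agents, applies the discretization bound pointwise, uses Jensen via convexity of $\Gamma_D$, then monotonicity with $D_P(b)\ge a$, the scaling identity $\Gamma_D(a)=D\Psi(\rho)$, and finally divides by $\OPT(P)\le D$. Your added check that $D\Psi(\rho)-\pi D/(2k)>0$ (via $\Psi(\rho)\ge 1$) is a detail the paper leaves implicit, and it is exactly what justifies replacing $\OPT(P)$ by its upper bound $D$ in the denominator.
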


\begin{proof}

Because every $z_j$ is occupied by an agent, \[\exantec(P,M) \ge \frac1k \sum_{j=0}^{k-1} \E \brk*{\norm{Y_P-z_j}_2}= \E\brk*{ \frac1k \sum_{j=0}^{k-1} \norm{Y_P-z_j}_2 }, \] where the last equality is by linearity of expectation.
Applying \cref{lem:circle-average-and-discretization}(ii) pointwise to $Y_P$
gives
\[
\exantec(P,M)
\ge
\mathbb E\brk*{
\Gamma_D(\norm{Y_P-b}_2)
}
-
\frac{\pi D}{2k}.
\]
By the convexity of $\Gamma_D$ (\cref{lem:circle-average-and-discretization}(i)) and Jensen's inequality,
\[
\mathbb E\brk*{
\Gamma_D(\norm{Y_P-b}_2)
}
\ge
\Gamma_D\prn*{
\mathbb E\brk*{
\norm{Y_P-b}_2
}
}.
\]
By \cref{lem:common-planar-hard-instance},
$\mathbb E\brk*{\norm{Y_P-b}_2 }=D_P(b)\ge a.
$
Since $\Gamma_D$ is nondecreasing (\cref{lem:circle-average-and-discretization}(i)), we have
$\exantec(P,M) \ge \Gamma_D(a)-\frac{\pi D}{2k}$.
Using
$
\Gamma_D(a)
=
D\Psi\prn*{\frac aD}
=
D\Psi(\rho)
$
and
$
\OPT(P) \le D,
$
we obtain
\[ \frac{\exantec(P,M)}{\OPT(P)} \ge \Psi(\rho)-\frac{\pi}{2k}. \]
\end{proof}

\paragraph{The ex-post lower bound.}
We now use the same profile but exploit the fact that, for every
realized mechanism output, one of the discretized circle vertices is
almost directly opposite (anti-podal) to the output $Y$ relative to the center $b$.

Let $H_{D,k}(r) := \sqrt{ r^2+D^2+ 2rD\cos\prn*{\frac{\pi}{k}}}$.

\begin{lem}
\label{lem:farthest-vertex-and-radial-convexity}
The following two items hold: \begin{enumerate}[label=(\roman*)]
\item \label{lem:farthest-discretized-circle-vertex}
\emph{(Farthest discretized circle vertex.)}
For every $y\in\mathbb R^2$, and $r:=\norm{y-b}_2$ we have
\[
\max_{0\le j\le k-1}
\norm{y-z_j}_2
\ge
\sqrt{
r^2+D^2+
2rD\cos\prn*{\frac{\pi}{k}}
}.
\]
\item \label{lem:ex-post-radial-convexity}
$H_{D,k}$ is convex and nondecreasing on $[0,\infty)$.
\end{enumerate}
\end{lem}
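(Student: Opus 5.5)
For item (i), write $y-b = r(\cos\psi,\sin\psi)$ with $r=\norm{y-b}_2$. If $r=0$, every $z_j$ is at distance exactly $D$ from $y$, and the bound reads $D\ge D$. Otherwise, for each $j$ we have
\[
\norm{y-z_j}_2^2 = r^2 + D^2 - 2rD\cos\prn*{\psi - \tfrac{2\pi j}{k}},
\]
so we want a $j$ for which $\cos\prn*{\psi-\frac{2\pi j}{k}}\le -\cos\prn*{\frac{\pi}{k}}$. Equivalently, the direction $\frac{2\pi j}{k}$ should be within angle $\frac{\pi}{k}$ of the antipodal direction $\psi+\pi$, measured modulo $2\pi$.

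The grid $\crl*{\frac{2\pi j}{k}}$ partitions the circle into arcs of length $\frac{2\pi}{k}$. Hence some grid angle lies at circular distance $\delta\le\frac{\pi}{k}$ from $\psi+\pi$. For that $j$,
\[
\cos\prn*{\psi-\tfrac{2\pi j}{k}} = -\cos\delta \le -\cos\prn*{\tfrac{\pi}{k}},
\]
because cosine is decreasing on $[0,\pi]$ and $\frac{\pi}{k}\le\pi$. Substituting into the expression above gives the claimed lower bound on $\norm{y-z_j}_2$, and therefore on the maximum over $j$.

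For item (ii), set $\kappa:=\cos(\pi/k)\in[0,1)$ (using $k>5$; only $\kappa\in[-1,1]$ actually matters). Then
\[
H_{D,k}(r)=\sqrt{(r+\kappa D)^2 + D^2(1-\kappa^2)}.
\]
This is the Euclidean norm of the affine map $r\mapsto (r+\kappa D,\ D\sqrt{1-\kappa^2})$, and a norm composed with an affine map is convex. Monotonicity follows from the derivative $H'(r)=\frac{r+\kappa D}{H(r)}\ge 0$ for $r\ge0$, since $\kappa\ge0$. Alternatively, the second derivative $\frac{D^2(1-\kappa^2)}{H^3}\ge0$ gives convexity directly, mirroring the computation for $h$ in \cref{lem:simplex-amplification}.

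There is no real obstacle here. The only point needing care is the circular-distance argument in (i): the wrap-around modulo $2\pi$ must be handled, and $\cos$ must be applied only to angles in $[0,\pi]$ so that monotonicity is valid.
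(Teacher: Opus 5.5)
Your proof is correct and follows the paper's proof essentially step for step: in (i) you choose the grid direction within angle $\pi/k$ of the antipodal direction of $y-b$ and apply the law of cosines, and in (ii) you rewrite $H_{D,k}(r)=\sqrt{(r+\kappa D)^2+D^2(1-\kappa^2)}$ and read off monotonicity and convexity from the signs of the first and second derivatives. One small quibble: your parenthetical that only $\kappa\in[-1,1]$ matters holds for convexity but not for monotonicity on $[0,\infty)$, which needs $\kappa\ge 0$; you do use $\kappa\ge 0$ correctly in the next step.
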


The proof of \cref{lem:farthest-vertex-and-radial-convexity} is standard and deferred to \cref{sec:missing-analysis-lb-R2}.

\begin{lem}[Ex-post lower bound]
\label{lem:R2-ex-post-lb}
Let $P,b,z_0,\ldots,z_{k-1}$ be the profile and points obtained in
\cref{lem:common-planar-hard-instance}. Then
\[
\frac{\expostc(P,M)}{\OPT(P)}
\ge
\sqrt{1+\rho^2+2\rho\cos\prn*{\frac{\pi}{k}}}.
\]
\end{lem}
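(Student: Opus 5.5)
The plan mirrors the ex-ante argument in \cref{lem:R2-ex-ante-lb}, with the circle average replaced by the farthest discretized vertex. Since every $z_j$ is occupied by an agent in $P$, the realized maximum cost is at least $\max_{0\le j\le k-1}\norm{Y_P-z_j}_2$. Applying \cref{lem:farthest-vertex-and-radial-convexity}\ref{lem:farthest-discretized-circle-vertex} pointwise with $y=Y_P$ then gives
\[
\expostc(P,M)\ge \E\brk*{\max_{j}\norm{Y_P-z_j}_2}\ge \E\brk*{H_{D,k}\prn*{\norm{Y_P-b}_2}}.
\]

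Next I use the convexity of $H_{D,k}$ from \cref{lem:farthest-vertex-and-radial-convexity}\ref{lem:ex-post-radial-convexity}. Jensen's inequality yields $\E\brk*{H_{D,k}(\norm{Y_P-b}_2)}\ge H_{D,k}\prn*{D_P(b)}$. By \cref{lem:common-planar-hard-instance}, $D_P(b)\ge a$, so monotonicity of $H_{D,k}$ gives $\expostc(P,M)\ge H_{D,k}(a)$.

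It remains to normalize. Factoring out $D$,
\[
H_{D,k}(a)=\sqrt{a^2+D^2+2aD\cos(\pi/k)}=D\sqrt{\rho^2+1+2\rho\cos(\pi/k)},
\]
using $\rho=a/D$. Dividing by $\OPT(P)\le D$, which also comes from \cref{lem:common-planar-hard-instance}, gives the claimed ratio.

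I expect no real obstacle here, because the geometric content is already packaged in \cref{lem:farthest-vertex-and-radial-convexity}. The only points needing care are two. First, the order of operations: the pointwise farthest-vertex bound must be applied before taking expectations, since the ex-post cost is an expectation of a maximum. Second, Jensen must be applied to the convex nondecreasing $H_{D,k}$ and not to the maximum itself. Letting $k\to\infty$ then recovers the bound $1+\rho$ stated in \cref{thm:R2-lower-bounds}.
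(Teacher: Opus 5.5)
Your proposal is correct and follows the paper's proof step for step. The chain is the same: the pointwise farthest-vertex bound is applied inside the expectation, then Jensen via convexity of $H_{D,k}$, then monotonicity together with $D_P(b)\ge a$, and finally you factor out $D$ and divide by $\mathrm{OPT}(P)\le D$.
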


\begin{proof}

Since the points $z_0,\ldots,z_{k-1}$ contain at least one agent each, we have
\[
\expostc(P,M) \ge \mathbb E\brk*{ \max_{0\le j\le k-1} \norm{Y_P-z_j}_2 }. \]
By \cref{lem:farthest-vertex-and-radial-convexity}(i), $\expostc(P,M) \ge \mathbb E\brk*{ H_{D,k}(\norm{Y_P-b}_2) }$.
By the convexity of $H_{D,k}$ (\cref{lem:farthest-vertex-and-radial-convexity}(ii)) and Jensen's inequality, $\expostc(P,M) \ge H_{D,k}\prn*{\mathbb E\brk*{\norm{Y_P-b}_2}}$.
By \cref{lem:common-planar-hard-instance},
$\mathbb E\brk*{\norm{Y_P-b}_2}=D_P(b)\ge a$.
Since $H_{D,k}$ is nondecreasing,
$
\expostc(P,M)
\ge
H_{D,k}(a).
$
Therefore,
\[
\expostc(P,M)
\ge
\sqrt{
a^2+D^2+
2aD\cos\prn*{\frac{\pi}{k}}
}.
\]
Dividing by
$
\OPT(P) \le D
$
and using
$
\rho=\frac aD
$
gives
\begin{align*}
    \frac{\expostc(P,M)}{\OPT(P)} & \ge \sqrt{1+\rho^2+2\rho\cos\prn*{\frac{\pi}{k}}}.
\end{align*}
\end{proof}

By \cref{lem:R2-ex-post-lb}, \cref{lem:R2-ex-ante-lb} and taking $k \to \infty$ we get the desired.

\section{Best of both worlds: The egalitarian approximation of $\rrcwm$}\label{sec:egalitarian-approx-of-rrcwm}
$\rrcwm$ is the currently best known mechanism for the utilitarian objective in $\R^d$ for any $d \in \N$ \cite{barak2026facilitylocationmechanismdesign}, achieving an expected approximation ratio in $[\sqrt{2} - o_d(1), 1.55]$. In this section, we show that it is also not a bad mechanism for the egalitarian objective, achieving an approximation ratio of $2$ for the ex-post objective in $\R^d$ for any $d \in \N$.

\begin{thm}\label{thm:rrcwm-max-cost}
For every $d\geq1$, the tight egalitarian approximation ratio of $\rrcwm$ is $2$ in $\R^d$. The $2$ approximation bound holds for both the ex-ante and ex-post objectives.
\end{thm}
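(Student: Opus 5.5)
The plan is to follow the outline in \cref{sec:technical-overview}, which has three parts: an upper bound of $2$ for the ex-post objective, a matching lower bound, and the ex-ante case.

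\textbf{Normalization and reduction to a concentration bound.} Fix $P$ and let $c$ and $r^*$ be the center and radius of its minimum enclosing ball. Translate so that $c=0$. Since $\rrcwm$ commutes with translations (the median commutes with shifts), this is without loss of generality. Write $Y$ for the output. For every agent $p_i$ the triangle inequality gives $\norm{Y-p_i}_2\le \norm{Y}_2+r^*$, hence
\[
\expostc(P,\rrcwm)\le r^*+\E\brk*{\norm{Y}_2}\le r^*+\sqrt{\E\brk*{\norm{Y}_2^2}}
\]
by Jensen. It therefore suffices to show $\E\brk*{\norm{Y}_2^2}\le (r^*)^2$.

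\textbf{Bounding $\E\norm{Y}^2$.} Let $Q$ be a uniformly random rotation with rows $q_1,\dots,q_d$. Then $Y=Q^{\top}m$, where $m_j=F(q_j)$ and $F(u):=\mathrm{med}\{\langle u,p_i\rangle\}_i$. Hence
\[
\norm{Y}_2^2=\sum_j F(q_j)^2 .
\]
Each $q_j$ is marginally uniform on $S^{d-1}$, so $\E\norm{Y}^2=d\,\E_{U}[F(U)^2]$ for $U$ uniform on the sphere. Since the median is $1$-Lipschitz in $\ell_\infty$ and $|\langle u-u',p_i\rangle|\le r^*\norm{u-u'}_2$, the function $F$ is $r^*$-Lipschitz. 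It is also odd, $F(-u)=-F(u)$, provided the median of an even number of points is defined symmetrically (e.g.\ the midpoint convention of~\cite{barak2026facilitylocationmechanismdesign}; if a lower-median convention is used, I would check oddness up to the needed inequality or handle it by a symmetrization). Oddness gives $\E F(U)=0$, so $\E[F(U)^2]=\mathrm{Var}(F(U))$.

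Applying the Poincar\'e inequality on the sphere, with constant $1/(d-1)$, gives only $(r^*)^2/(d-1)$, which is slightly too weak. Instead I would pass to Gaussians as in the overview. Take $G\sim N(0,I_d)$ and $U=G/\norm{G}$. By positive homogeneity, $F(G)=\norm{G}F(U)$ with $\norm{G}$ independent of $U$. So $\E[F(G)^2]=d\,\E[F(U)^2]$, while $F$ is $r^*$-Lipschitz and odd on $\R^d$. The Gaussian Poincar\'e inequality then yields
\[
\E[F(G)^2]\le (r^*)^2 .
\]
Consequently $\E F(U)^2\le (r^*)^2/d$ and $\E\norm{Y}^2\le (r^*)^2$, which proves ratio at most $2$ for ex-post, and by Jensen (ex-ante $\le$ ex-post) also for ex-ante. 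The main obstacle is this step: getting the exact constant $1/d$ rather than $1/(d-1)$, together with the oddness, requires the Gaussian lifting and care with the median convention and the a.e.\ differentiability of the Lipschitz $F$.

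\textbf{Tightness.} I need instances where even the ex-ante ratio approaches $2$. The first candidate is $d=1$ with $n-1$ agents at $0$ and one agent at $1$. Then $\rrcwm$ returns the median, $0$, for either sign of the rotation, so the agent at $1$ has cost $1$ while $\OPT=1/2$, giving ratio $2$ for both objectives. For general $d$ I would use the same instance embedded on a line: the majority cluster at $0$ makes every projected median equal to $0$, so $Y=0$ deterministically and the ratio is exactly $2$. This also suffices to show that both bounds are tight.
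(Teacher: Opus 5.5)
Your odd-$n$ argument and your tightness instance match the paper's proof. Both use translation equivariance, the bound $r^*+\sqrt{\E\norm{Y}_2^2}$ on the ex-post cost, marginally uniform rows, the $r^*$-Lipschitz odd directional median, and the Gaussian Poincar\'e inequality with $G=\norm{G}_2U$ to get exactly $(r^*)^2/d$. The instance is a majority cluster at $0$ plus one agent at $e_1$, which needs $n\ge 3$.

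\textbf{The gap is the case of even $n$.} The paper's $\rrcwm$ does not use a midpoint median. That convention would not even be strategyproof: with two agents it outputs their midpoint, which either agent can drag onto herself. Instead, for $n=2k$ the mechanism selects the $k$-th or the $(k+1)$-st order statistic with probability $1/2$ each, independently per coordinate. Then $\E[m_j^2\mid R]=\frac12\bigl(F_k(u_j)^2+F_{k+1}(u_j)^2\bigr)$, where $F_\ell(u)$ is the $\ell$-th smallest of the values $\langle u,p_i\rangle$. These functions are $r^*$-Lipschitz but not odd; only $F_{k+1}(-u)=-F_k(u)$ holds. So $\E F_k(G)\neq 0$ in general, and the Gaussian Poincar\'e inequality controls only the variance, not the second moment you need. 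A bound for the midpoint convention does not transfer either. For the profile $\{v,-v\}$ the midpoint median is identically $0$, whereas under the actual rule $\norm{m}_2=\norm{v}_2$ for every rotation. Your appeal to ``symmetrization'' leaves exactly this step open.

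\textbf{The missing idea is the paper's deletion coupling.} Draw $I_j$ uniformly from $[n]$, independently of $R$ and across coordinates. Deleting a report ranked among the lowest $k$ leaves the $(k+1)$-st order statistic as the unique median of the remaining $2k-1$ values. Deleting one ranked among the top $k$ leaves the $k$-th. So the coordinate-$j$ median of $P\setminus\{p_{I_j}\}$ has exactly the mechanism's conditional law. For fixed $I_j=i$, this is an odd-size median of points that still lie in the radius-$r^*$ ball around the origin. Your odd-case bound then gives $\E[m_j^2\mid I_j=i]\le (r^*)^2/d$. Averaging over $i$ and summing over $j$ restores $\E\norm{m}_2^2\le (r^*)^2$.

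\textbf{A smaller point concerns $d=1$.} There $SO(1)=\{1\}$, so the single row is not uniform on $S^0$, and there is no ``sign of the rotation.'' The bound still holds because every median lies in $[-r^*,r^*]$, but this case should be treated separately, as the paper does.
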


We also show that, perhaps surprisingly, the deterministic version of $\cwmed$ is not a $2$-approximation for the ex-post egalitarian objective in $\R^d$. It has an approximation ratio of $1 + \sqrt{2} \approx 2.41$, which is worse than a deterministic dictator mechanism, which achieves a ratio of $2$.

\paragraph{The $\cwmed$ and $\rrcwm$ mechanisms.}
Let us first give a proper definition of the mechanism.
First, we define the deterministic $\cwmed$ mechanism, which is the coordinate-wise median mechanism.
The coordinate-wise median mechanism is defined as follows.
\begin{definition}[$\cwmed$ - Coordinate-Wise Median]\label{def:cwmed}
    For a multiset of $n$ points $P \subseteq \R^d$:
    $\cwmed(P) := (l_1, \ldots, l_d)$ where $l_j$ is the median of $P_j$ for all $j \in [d]$.
\end{definition}

If $n$ is even ($n=2k$), we assume that,
conditional on the realized rotation, the $k$-th and $(k+1)$-st
order statistics are selected with probability $1/2$ each independently in every
coordinate.

To define the $\rrcwm$ mechanism, we first define the group of rotation matrices in $\R^d$.
\begin{definition}[Rotation Matrix]\label{def:rotation-matrix}
    A \emph{rotation matrix} in $\mathbb{R}^d$ is an orthogonal matrix $R \in \mathbb{R}^{d \times d}$ with determinant $\det(R)=1$. Equivalently,
    \[
    R^\top R = I_d \quad \text{and} \quad \det(R)=1,
    \]
    where $I_d$ is the $d \times d$ identity matrix. The set of all such matrices forms the \emph{special orthogonal group}
    \[
    SO(d) := \{ R \in \mathbb{R}^{d \times d} \;:\; R^\top R = I_d,\ \det(R)=1 \}.
    \]

    In $\R^2$, the two-dimensional \emph{rotation matrix} by angle $\theta \in [0,2\pi)$ is defined as
    \[
        R_\theta \;=\;
        \begin{pmatrix}
            \cos\theta & -\sin\theta \\
            \sin\theta & \phantom{-}\cos\theta
        \end{pmatrix}.
    \]
\end{definition}

The mechanism $\rrcwm$ of \cite{barak2026facilitylocationmechanismdesign} is defined as follows: given an instance $P = (x_1, \ldots, x_n)$, it samples a random rotation matrix $R \in SO(d)$, applies the coordinate-wise median mechanism to the rotated instance $R(P) = (R(x_1), \ldots, R(x_n))$, and then rotates back the output. Formally, let $\cwmed$ denote the coordinate-wise median mechanism. 
Then:
\begin{algorithm}[H]
\caption{$\rrcwm(P)$: Randomly Rotated Coordinate-wise Median}
\label{alg:rrcwmed}
\begin{algorithmic}[1]
\REQUIRE $P = \{p_1,\dots,p_n\} \subseteq \mathbb{R}^d$
\STATE Sample a uniformly random rotation matrix $R \in SO(d)$. \COMMENT{Haar uniform\footnotemark}
\STATE Compute the rotated dataset $P' \gets \{p'_i = R p_i : p_i \in P\}$.
\STATE Compute the coordinate-wise median $h' \gets \cwmed(P')$.
\RETURN $h \gets R^{-1} h'$. \COMMENT{$R^{-1}=R^\top$ for $R\in SO(d)$}
\end{algorithmic}
\end{algorithm}
\footnotetext{A random matrix $R \in SO(d)$ is said to be Haar-uniform if its distribution is the unique probability measure on $SO(d)$ that is invariant under left and right multiplication by any fixed rotation \cite{Vershynin2025HDP,mezzadri2006generate}.}

\subsection{The egalitarian approximation of $\cwmed$}

We now show that $\cwmed$ has a tight dimension-independent approximation ratio of $1 + \sqrt{2}$ for the egalitarian objective.
\begin{thm}\label{thm:cwmed-egalitarian}
The $\cwmed$ mechanism has an approximation ratio of at most $1+\sqrt{2}$ for the egalitarian objective in $\R^d$ for every $d \in \N$. Furthermore, this bound is tight up to $o_d(1)$.
\end{thm}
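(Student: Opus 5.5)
The plan is to bound $\norm{m-c}_2\le\sqrt{2}\,r^*$, where $m=\cwmed(P)$ and $c$ is the center of the minimum enclosing ball of radius $r^*$. The triangle inequality then gives, for every agent $p_i$, $\norm{p_i-m}_2\le \norm{p_i-c}_2+\norm{c-m}_2\le (1+\sqrt2)r^*$. Since $\cwmed$ is deterministic, the ex-ante and ex-post costs coincide, so this yields the upper bound for both objectives. We normalize $c=0$ and $r^*=1$, so that $\norm{p_i}_2\le 1$ for all $i$.

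The key step is a coordinate-counting argument. Write $m=(m_1,\ldots,m_d)$ and let $S=\norm{m}_2^2=\sum_j m_j^2$. Each $m_j$ is a median of the $j$-th coordinates, so at least $n/2$ agents satisfy $\mathrm{sign}(m_j)\,p_{i,j}\ge |m_j|$; call this set $A_j$. This also holds under the randomized tie-breaking between order statistics, since each realized choice is one of the two middle order statistics and has at least $n/2$ points on each side. Now double count:
\[\sum_{i}\sum_{j: i\in A_j} m_j^2\ \ge\ \frac n2\sum_j m_j^2=\frac n2 S.\]
For each agent $i$,
\[\sum_{j:i\in A_j}m_j^2\le \sum_{j:i\in A_j} p_{i,j}^2\le \norm{p_i}_2^2\le 1.\]
This only gives $S\le 2$, i.e. $\norm{m}_2\le\sqrt2$, which is exactly what we need. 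The inner inequality uses $|m_j|\le |p_{i,j}|$ for $i\in A_j$. The summation is the heart of the proof and it is clean, so I do not expect difficulty here beyond careful handling of the even-$n$ tie-breaking.

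For tightness up to $o_d(1)$, I would build a cyclic instance. Take $n=d$ agents (or $n$ even with $d$ large), and let agent $i$ sit at $p_i=\frac{1}{\sqrt{k}}\sum_{j\in T_i}e_j$, where the $T_i$ are cyclic windows of size $k\approx d/2+1$ inside $[d]$. Each coordinate then has more than half of the agents at value $1/\sqrt k$, so the median is $m=\frac1{\sqrt k}\mathbf{1}$ and $\norm{m}_2=\sqrt{d/k}\approx\sqrt2$. Some agent lies roughly opposite $m$ relative to the enclosing center. Concretely, I would add a far agent $q=-\frac{1}{\sqrt d}\mathbf 1$ (a single extra agent does not move the medians much) and check that $\norm{q-m}_2\approx 1+\sqrt2$. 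It remains to verify that $\OPT\approx 1$: all points must lie within roughly unit distance of some center, ideally the origin, which holds since every point has norm 1. The main obstacle is this lower-bound construction, namely arranging simultaneously that all points lie in a ball of radius $1+o(1)$, that the median vector has norm $\sqrt2-o(1)$, and that some agent is nearly antipodal to it. If the single antipodal point perturbs the medians, I would instead replicate the window agents many times so that one extra point shifts no median.
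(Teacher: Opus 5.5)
Your proposal is correct and matches the paper's proof. The upper bound is the same coordinate-counting argument: your double count over the sets $A_j$ is the paper's inequality $\sum_i p_{ij}^2\ge \frac n2 m_j^2$ summed over $j$. Your tightness instance is also the paper's cyclic-window construction with the antipodal agent $-\frac{1}{\sqrt d}\mathbf{1}$. It needs no replication: take $d$ even and window size $k=d/2+1$. Then each coordinate has one negative value, $d/2-1$ zeros and $d/2+1$ copies of $1/\sqrt k$ among $d+1$ reports, so the median is exactly $1/\sqrt k$ and the ratio is $1+\sqrt{2d/(d+2)}$.
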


The upper bound of \cref{thm:cwmed-egalitarian} follows directly from \cite{Hastings26} (Theorem 3). We include the upper bound here for completeness.

\begin{proof}
Fix an instance $P=(p_1,\ldots,p_n)$ and let $m=\cwmed(P)$ be the returned facility location. Let $c^*$ be a center of a minimum enclosing ball of $P$, and let $r^*$ be its radius. Translate the profile and its coordinate-wise median by $-c^*$. We may therefore assume that $c^*=0$, so $\norm{p_i}_2\le r^*$ for all $i\in[n]$. 

Fix $j\in[d]$. If $m_j>0$, at least $n/2$ of the reported $j$-th
coordinates are at least $m_j$, and the corresponding reports satisfy
$p_{ij}^2\geq m_j^2$. If $m_j<0$, the same argument applies to the
coordinates at most $m_j$; if $m_j=0$, the resulting inequality is
immediate. Hence $\sum_{i=1}^n p_{ij}^2\geq\frac{n}{2}m_j^2$.

By summing over all coordinates we get
\[
\frac{n}{2}\norm{m}_2^2 \le \sum_{i=1}^n\norm{p_i}_2^2 \le n(r^*)^2,
\]
and therefore $\norm{m}_2\le\sqrt{2}r^*$. By triangle inequality
we get that for every $i\in[n]$:
\[
\norm{p_i-m}_2 \le \norm{p_i}_2+\norm{m}_2 \le (1+\sqrt{2})r^*.
\]
Taking the maximum over $i$ proves the upper bound.

To prove tightness, fix $k\geq1$, let $d=2k$. Let $u:=\frac{1}{\sqrt{2k}} (1, \ldots, 1)$, $u^-:= -u$, $a:=\frac{1}{\sqrt{k+1}}$. For each $t \in \crl*{0, \ldots, 2k-1}$, let $S_t:=\{t,t+1,\ldots,t+k\}\pmod{2k}$ be a set of $k+1$ coordinates starting at $t$ (mod $d$), and $p_t:= a \cdot \mathbf{1}_{S_t}$. That is, $p_t$ has value $a$ in coordinates $S_t$ and $0$ in the other coordinates.
Consider the instance $P_k:=\prn*{u^-, (p_t)_{t \in \crl*{0, \ldots, 2k-1}}}$.
$P_k$ has $2k+1$ reports, all of unit norm. Each coordinate belongs to
exactly $k+1$ of the sets $S_t$, so its reported values consist of one
negative value, $k-1$ zeros, and $k+1$ copies of $a$. Hence, the median in each axis is $a$, and thus the coordinate-wise median of $P_k$ is exactly at $\cwmed(P_k) = (a,\ldots,a)$.
Since the distance between $u^-$ to $\cwmed(P_k)$ is
\begin{align*}
    \norm{u^- - \cwmed(P_k)}_2 & = \sqrt{\sum_{j=1}^d (a - u^-_j)^2} = \sqrt{2k} \cdot \prn*{\frac{1}{\sqrt{k+1}} + \frac{1}{\sqrt{2k}}} \\
    & = 1 + \frac{\sqrt{2k}}{\sqrt{k+1}},
\end{align*}
we have that $\ALG \ge 1 + \sqrt{\frac{2k}{k+1}}$. The unit ball contains all of $P_k$ and so $OPT \le 1$. Therefore, the approximation ratio of $\cwmed$ on $P_k$ is at least $1 + \sqrt{\frac{2k}{k+1}}$, which tends to $1 + \sqrt{2}$ as $k \to \infty$.
\end{proof}

The dimension-independent constant is not the exact ratio in every
fixed dimension. In the two lowest dimensions the exact factor is $2$ \citep{goel2023optimality}.

\subsection{The egalitarian approximation of $\rrcwm$}

Before showing \cref{thm:rrcwm-max-cost}, we first prove the following lemma, which states that given $s$ vectors and a random direction, the expected squared median of the projections of the vectors onto the random direction is at most $1/d$ times the squared radius of the smallest ball containing all the vectors.

\begin{lem}\label{lem:directional-median}
Let $q_1,\ldots,q_s\in\mathbb{R}^d$, where $s$ is odd, and assume
that $\forall i \in [s]$: $\norm{q_i}_2\le r$.
For $g\in\mathbb{R}^d$, let $F(g):=\med{i\in[s]}\langle g,q_i\rangle$ denote the median of the projections of $q_i$ onto $g$.
If $U$ is uniformly distributed on the unit sphere
$\mathbb{S}^{d-1}$, then $\E \brk*{F(U)^2} \le \frac{r^2}{d}$.
\end{lem}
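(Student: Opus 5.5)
The plan follows the technical overview: pass from the uniform direction $U$ to a Gaussian vector, and use the Gaussian Poincar\'e inequality together with the Lipschitz property of the median.

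\textbf{Step 1: Lipschitz property and homogeneity of $F$.}
The scalar median of $s$ numbers (with $s$ odd) is $1$-Lipschitz with respect to $\ell_\infty$. That is, if $|a_i-b_i|\le\varepsilon$ for all $i$, then $|\mathrm{med}(a)-\mathrm{med}(b)|\le\varepsilon$; this follows from monotonicity and translation equivariance of the median. Hence
\[
|F(g)-F(h)|\le\max_i|\langle g-h,q_i\rangle|\le r\norm{g-h}_2,
\]
so $F$ is $r$-Lipschitz. Moreover, $F$ is positively homogeneous and odd: $F(\lambda g)=\lambda F(g)$ for all $\lambda\in\R$, using $\mathrm{med}(-a)=-\mathrm{med}(a)$, which holds because $s$ is odd.

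\textbf{Step 2: Gaussian reduction.}
Let $G\sim N(0,I_d)$, and write $G=\norm{G}_2\,U$ with $U$ uniform on the sphere and independent of $\norm{G}_2$. Homogeneity gives $F(G)=\norm{G}_2F(U)$, and therefore
\[
\E[F(G)^2]=\E\norm{G}_2^2\cdot\E[F(U)^2]=d\,\E[F(U)^2].
\]
Since $F$ is odd and $G$ is symmetric, $\E[F(G)]=0$ (integrability holds by the Lipschitz property). So $\E[F(G)^2]=\mathrm{Var}(F(G))$.

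\textbf{Step 3: Poincar\'e inequality.}
The Gaussian Poincar\'e inequality gives $\mathrm{Var}(F(G))\le\E\norm{\nabla F(G)}_2^2\le r^2$. This uses that $F$ is Lipschitz, hence differentiable almost everywhere with $\norm{\nabla F}_2\le r$ by Rademacher's theorem; in fact $F$ is piecewise linear, with gradient equal to some $q_i$ on each piece. Combining with Step 2 yields $d\,\E[F(U)^2]\le r^2$, which is the claim.

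\textbf{Main obstacle.}
The delicate point is justifying Poincar\'e for a merely Lipschitz (piecewise linear) function. I would handle it either by the a.e.\ gradient version of the inequality, or directly: $F$ is continuous and piecewise linear, and on each region $F(g)=\langle g,q_{i}\rangle$, so $\norm{\nabla F}_2=\norm{q_i}_2\le r$ almost everywhere. A second point to check is that oddness genuinely requires $s$ odd, so that the median is a single order statistic. This is exactly why the lemma assumes $s$ odd; the even-case randomization would be handled separately later.
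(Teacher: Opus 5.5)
Your proposal is correct and follows essentially the same route as the paper: $r$-Lipschitzness of $F$ from the $\ell_\infty$-Lipschitz median, oddness of $F$ (using $s$ odd) to get $\E[F(G)]=0$, the Gaussian Poincar\'e inequality for Lipschitz functions, and the polar decomposition $G=\norm{G}_2U$ with $\E\norm{G}_2^2=d$. Your piecewise-linear justification, that the a.e.\ gradient equals some $q_i$, is a valid self-contained substitute for the paper's citation of the Lipschitz version of Poincar\'e. Your argument also covers $d=1$ directly, so the paper's separate treatment of that case is not needed.
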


\begin{proof}
The proof of $d=1$ is trivial, so we assume $d\ge 2$.
We start by stating a claim from \cite{barak2026facilitylocationmechanismdesign} that we will use in the proof. See Claim~26 of \cite{barak2026facilitylocationmechanismdesign}.

\begin{clm}\label{clm:medians-diff-bund}
    Let $n = 2k+1 \in \N$ be an odd natural number. Let $\med(w)$ be the median of $w = (w_1,\ldots,w_n) \in \R^n$. Then for any $x,y \in \R^n$: \[
    \abs{\med(x) - \med(y)} \le \norm{x - y}_{\infty}.
    \]
\end{clm}

\cref{clm:medians-diff-bund} states that the median is $1$-Lipschitz with respect to the $\ell_\infty$ norm. Applying this inequality to
$x$ and $y$ defined by $x_i=\langle g,q_i\rangle$, $y_i=\langle g',q_i\rangle$ for $i \in [s]$ yields 
\[
|F(g)-F(g')| \le \max_{i\in[s]} \left|\langle g-g',q_i\rangle\right| \le r\norm{g-g'}_2.
\]
Thus $F$ is $r$-Lipschitz. Let $G\sim N(0,I_d)$. By the Gaussian Poincar\'e inequality and its extension to Lipschitz functions, $\Var (f(G))\le L^2$ for every $L$-Lipschitz function $f$; see Theorem~3.20 and the paragraph immediately following it of \cite{boucheron2013concentration}. Consequently, $\Var(F(G)) \le r^2$.
Because $s$ is odd, negating all projected values negates their
median, and hence $F(-g)=-F(g)$.
Since $G$ and $-G$ have the same distribution,
$\E \brk*{F(G)}=0$. Therefore (since $\Var(X)=\E \brk*{X^2}-(\E \brk{X})^2$ for any random variable $X$),
\begin{equation}\label{eq:directional-median-variance}
    \E \brk*{F(G)^2} =\operatorname{Var}(F(G)) \le r^2.
\end{equation}

Next, we may write $G$ as $G=\rho U$, where $\rho=\norm{G}_2$, $U$ is uniform on
$\mathbb{S}^{d-1}$, and $\rho$ and $U$ are independent (see \cite{Vershynin2025HDP}, Sec.~3.3.3, Eq.~(3.15), p.~69, and
Exercise~3.22, p.~93).
Positive homogeneity of the median gives $F(\rho U)=\rho F(U)$.
Moreover, \[ \E \brk*{\rho^2}= \E\brk*{\norm{G}_2^2} = \sum_{i=1}^d \E\brk*{G_i^2} = \sum_{i=1}^d (Var(G_i) + (\E[G_i])^2) = \sum_{i=1}^d (1 + 0) = d.\] It follows that
\[
    d \ \E \brk*{F(U)^2} = \E\brk*{\rho^2} \cdot \E \brk*{F(U)^2} = \E \brk*{\rho^2 F(U)^2} = \E \brk*{F(G)^2} \le r^2,
\]
where the second equality follows from independence of $\rho$ and $U$, and the last inequality follows from \cref{eq:directional-median-variance}.
This concludes the proof of the lemma.
\end{proof}

We are now ready to give the proof of \cref{thm:rrcwm-max-cost}.
\begin{proof}[Proof of Theorem~\ref{thm:rrcwm-max-cost}]
We show the bound for the ex-post objective, which implies the same bound for the ex-ante objective as well. 
Let $c^*$ be the center of a minimum enclosing ball of $P$, and let
$r^*$ be its radius. $\rrcwm$ is translation equivariant, so we may translate the profile and assume without loss of generality that $c^*=0$.
Therefore, $\norm{p_i}_2 \le r^*$ for every $i \in [n]$.
If $d=1$, every possible median lies in $[-r^*,r^*]$, and the result follows directly from the triangle inequality. Hence, assume $d\geq2$.
First suppose that $n$ is odd. For the sampled
rotation matrix $R$, let $u_j$ be the $j$'th row of $R$. Since $R$ is
Haar-uniform in $SO(d)$, every $u_j$ is marginally uniform on
$\mathbb{S}^{d-1}$ \cite{Vershynin2025HDP}.

Let $m$ be the coordinate-wise median of the rotated instance $RP$.
Let $F(g):=\med{i\in[n]} \langle g,p_i \rangle$. We have $m_j =\med{i \in [n]} (Rp_i)_j = \med{i \in [n]} \langle u_j,p_i \rangle =F(u_j)$.
Lemma~\ref{lem:directional-median} therefore implies $\E \brk*{m_j^2} \le \frac{(r^*)^2}{d}$ for all $j \in [d]$.
We note that the independence of the rows is not required. Summing the above over all $j \in [d]$ yields
\[
\E \brk*{\norm{m}_2^2} =\sum_{j=1}^d\E \brk*{m_j^2} \le (r^*)^2.
\]

Let $h = R^\top m$ denote the output in the original coordinates (the output of the mechanism). Since $\norm{\cdot}_2$ is rotation invariant,
\[
\E \brk*{\norm{h}_2} \le \sqrt{\E \brk*{\norm{h}_2^2}} =\sqrt{\E \brk*{\norm{m}_2^2}} \le r^*.
\]
For every realization of $h$, the triangle inequality gives
\[
\max_i\norm{p_i-h}_2 \le \max_i\bigl(\norm{p_i}_2+\norm{h}_2\bigr) \le r^*+\norm{h}_2.
\]
Taking expectations gives $\expostc = \E \brk*{\max_i\norm{p_i-h}_2} \le 2 r^*$, which proves the upper bound for odd $n$.

To handle the case of even $n$, suppose that $n=2k$, and let $m$ again denote the coordinate-wise median of $RP$. Independently of $R$ and of one another, choose an index $I_j$ uniformly from $[n]$ for each coordinate $j$. Conditional on $R$, we may couple the mechanism's median choice in coordinate $j$ with $I_j$ so that $m_j$ is the unique median after deleting report $I_j$: deleting one of the first $k$ ranked observations leaves the $(k+1)$-st order statistic, while deleting one of the last $k$ leaves the $k$-th. This remains valid with ties after fixing an arbitrary ordering among tied indexed observations. For each fixed $I_j=i$, Lemma~\ref{lem:directional-median}, applied to the $n-1$ remaining reports, gives $\E\brk*{m_j^2\mid I_j=i}\leq (r^*)^2/d$. Averaging over $I_j$ and summing over the coordinates yields $\E\brk*{\norm{m}_2^2}\leq(r^*)^2$, so the rest of the odd-$n$ argument applies unchanged.

To see that the bound is tight, one may consider the profile consisting of two agents at the origin and one agent at $e_1 = (1,0,\ldots,0)$. The minimum enclosing ball is centered at $e_1/2$ and has radius $1/2$. For every rotation, the coordinate-wise median is the origin, so $\rrcwm$ returns the origin deterministically. Its maximum cost is $\norm{e_1}_2 = 1$, and thus the approximation ratio is exactly $2$.
\end{proof}

\section{Discussion}\label{sec:discussion}

In this paper we study the egalitarian facility location mechanism design problem for both the widely studied ex-post objective and the ex-ante objective which we introduce for this problem (which makes sense for strategyproof in expectation mechanisms that already take an agent's cost to be her expected cost).

Many natural mechanism ideas for the ex-post objective for $\R^2$ exist, but many of these fail to get a strictly better constant than the trivial $2$ deterministic approximation. In retrospect, our lower bound of $2 - o_d(1)$ can be used to quickly reject many of the mechanisms, those that may be naturally extended to $\R^d$ and get a $2 - \Omega(1)$ approximation (since if these turned out to be strategyproof in expectation, it would contradict our high-dimensional lower bound). Some of these attempts include variations of following a distribution over the support of the minimum enclosing sphere and its center, or randomly projecting the points onto a random direction and using (a variation of) the real line mechanism of \cite{procaccia2013approximate}. In other words, if there is a $2 - \Omega(1)$ approximate mechanism which is strategyproof in expectation in $\R^2$, it must crucially rely on the fact that all points lie in low-dimensional space.

It remains open to close the remaining gaps for both the ex-ante and ex-post objectives in $\R^2$.

A different future direction would be to study multi-objective optimization and to study the tradeoff between the different objectives.

Another exciting future direction is to study weaker egalitarian objectives (such as minimizing a percentile cost rather than a maximum cost) and discover whether breaking the deterministic dictator barrier is possible for other objectives.

A final future direction would be to consider weaker notions of strategyproofness, such as approximate strategyproofness, and study whether this allows circumventing the lower bounds we show.
\section*{Acknowledgments}
The work of Z.\ Barak and I.\ Talgam-Cohen was supported by the European Research Council (ERC) under the European Union’s Horizon 2020 research and innovation program (grant agreement No.~101077862, project ALGOCONTRACT), by the Israel Science Foundation (grant No.~3331/24), by the NSF-BSF (grant No.~2021680), and by a Google Research Scholar Award.

\paragraph{AI disclosure.}
The authors have used ChatGPT (versions 5.4 to 5.6), Claude Opus (versions 4.8, 5) and Gemini (version 3.1) for editing, proofreading, simplifying, rephrasing parts of some arguments and writing the code to create the figures in the paper. Also, the authors have used ChatGPT and Gemini for generating the initial draft proofs specifically of the standard simplex identities lemma (\cref{lem:simplex-identities}) and the standard undergrad calculus technical lemmas: \cref{lem:equilateral-pointwise}, \cref{lem:circle-average-and-discretization}, \cref{lem:farthest-vertex-and-radial-convexity}, which the authors further manually refined, simplified and clarified.
The authors have verified all AI outputs, substantially revised them, and take full responsibility for their correctness.

\appendix
%\section*{Organization of Appendices}
%\cref{sec:exante-ub-phi-proof} contains missing proofs from \cref{sec:exantec-ub}. \cref{sec:missing-analysis-lb-2-Rd} contains missing proofs from \cref{sec:lb-2-Rd}, and \cref{sec:missing-analysis-lb-R2} contains missing proofs from \cref{sec:lb-R2}.

\section{Proofs from \cref{sec:exantec-ub}}\label{sec:exante-ub-phi-proof}

\subsection{Proof of Claim \ref{clm:exante-G-ub-phi}}

\begin{proof}
By symmetry of the four corners, $G(x,y)$ depends only on $|x|, |y|$ and thus we may restrict to $x = \cos(t)$, $y = \sin(t)$, $t \in [0, \frac{\pi}{4}]$. Then:

\begin{align*}
    G(x,y) & = \frac{1}{4}\sum_{z \in C'} \norm{(x,y)-z}_2 \\
    & = \frac{1}{4}\Bigl(\norm{(x,y) - (1,1)}_2 + \norm{(x,y) - (-1,1)}_2 \\
    &\hspace{4.1em}{}+ \norm{(x,y) - (1,-1)}_2 + \norm{(x,y) - (-1,-1)}_2\Bigr).
\end{align*}

Since
\begin{align*}
    & \norm{(x,y) - (1,1)}_2^2 = (\cos(t) - 1)^2 + (\sin(t) - 1)^2 = 3 - 2(\cos(t) + \sin(t)), \\
    & \norm{(x,y) - (-1,1)}_2^2 = (\cos(t) + 1)^2 + (\sin(t) - 1)^2 = 3 + 2(\cos(t) - \sin(t)), \\
    & \norm{(x,y) - (1,-1)}_2^2 = (\cos(t) - 1)^2 + (\sin(t) + 1)^2 = 3 - 2(\cos(t) - \sin(t)), \\
    & \norm{(x,y) - (-1,-1)}_2^2 = (\cos(t) + 1)^2 + (\sin(t) + 1)^2 = 3 + 2(\cos(t) + \sin(t)),
\end{align*}
let $u(t) = \cos(t) + \sin(t)$ and let $w(t) = \cos(t) - \sin(t)$ and
$f(\alpha) = \sqrt{3 + 2\alpha} + \sqrt{3 - 2 \alpha}$. We get:
\[
    G(x,y) = G\prn*{\cos(t), \sin(t)} = \frac{1}{4} \prn*{f(u(t)) + f(w(t))}.
\]
We note that for $t \in [0, \frac{\pi}{4}]$,
$0\le w(t)\le u(t)\le \sqrt{2}$.

Let $H(t) := f(u(t)) + f(w(t))$.

\begin{align*}
    H'(t) &= f'(u(t)) \ u'(t) + f'(w(t)) \ w'(t) \\
     & = f'(u(t)) \ w(t) - f'(w(t)) \ u(t), \numberthis \label{eq:H-derivative}
\end{align*}
where the last equality follows from the fact that $u'(t) = -\sin(t) + \cos(t) = w(t)$ and $w'(t) = -\sin(t)-\cos(t) = -u(t)$.

\begin{clm}\label{clm:f-tag-tmp-eq}
    For any $0\le \alpha \le \beta \le \sqrt{2}$:
    \[
        f'(\beta)\alpha -f'(\alpha)\beta \le 0.
    \]
\end{clm}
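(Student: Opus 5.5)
We need to show that for $0\le\alpha\le\beta\le\sqrt2$, $f'(\beta)\alpha\le f'(\alpha)\beta$, where $f(\alpha)=\sqrt{3+2\alpha}+\sqrt{3-2\alpha}$. The plan is to rewrite the inequality as monotonicity of the ratio $f'(x)/x$ and reduce that to a statement about $f''$.

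First we compute
\[
f'(x)=\frac{1}{\sqrt{3+2x}}-\frac{1}{\sqrt{3-2x}},
\]
which is well defined on $[0,\sqrt2]$ because $3-2\sqrt2>0$. It is nonpositive there, with $f'(0)=0$.

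The case $\alpha=0$ is immediate: the left side is $0$ and the right side is $-f'(0)\beta=0$.

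For $\alpha>0$, divide by $\alpha\beta>0$. The claim becomes
\[
\frac{f'(\beta)}{\beta}\le\frac{f'(\alpha)}{\alpha},
\]
so it suffices to show that $q(x):=f'(x)/x$ is nonincreasing on $(0,\sqrt2]$.

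Since $f'(0)=0$, we can write $q(x)=\frac1x\int_0^x f''(s)\,ds$, the running average of $f''$ over $[0,x]$. A running average of a nonincreasing function is itself nonincreasing. So the key step is to show that $f''$ is nonincreasing on $[0,\sqrt2]$, i.e.\ that $f'''\le0$ there.

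We have
\[
f''(x)=-(3+2x)^{-3/2}-(3-2x)^{-3/2},
\]
\[
f'''(x)=3(3+2x)^{-5/2}-3(3-2x)^{-5/2}.
\]
For $x\ge0$ we have $3+2x\ge3-2x>0$, so $(3+2x)^{-5/2}\le(3-2x)^{-5/2}$, and hence $f'''(x)\le0$. Equivalently, $f''$ is even and $-f''$ is increasing in $|x|$.

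This gives the monotonicity of $q$, and the claim follows.

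The only delicate point is the running-average step. If one prefers to avoid it, an alternative is to check directly that $x f''(x)-f'(x)\le0$: its derivative is $x f'''(x)\le0$ and its value at $0$ is $0$. This shows $q'(x)=\bigl(xf''(x)-f'(x)\bigr)/x^2\le0$ on $(0,\sqrt2]$.

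Both routes need only the sign of $f'''$ on $[0,\sqrt2]$, so no real obstacle is expected beyond the domain check $3-2x>0$, which holds because $\sqrt2<3/2$.
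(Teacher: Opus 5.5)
Your proof is correct and is essentially the paper's argument: both reduce the claim to $f'''\le 0$ on $[0,\sqrt2]$ (equivalently, convexity of $q=-f'$) together with $f'(0)=0$, which makes $f'(x)/x$ nonincreasing. The paper closes with the chord inequality $q(\lambda\beta)\le\lambda q(\beta)$ for $\lambda=\alpha/\beta$, whereas you use the running average of $f''$ (or the sign of $xf''(x)-f'(x)$); these are two phrasings of the same fact.
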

\begin{proof}
Taking the derivative of $f$ we get:
$f'(\alpha) = \prn*{3+2\alpha}^{-\frac{1}{2}} - \prn*{3-2\alpha}^{-\frac{1}{2}}$. Let $q(\alpha) := - f'(\alpha)$. So:
\[
    q'(\alpha) = \prn*{3+2\alpha}^{-\frac{3}{2}} + \prn*{3 - 2 \alpha}^{-\frac{3}{2}},
\]
\[
    q''(\alpha) = - 3 \prn*{3 + 2\alpha}^{-\frac{5}{2}} + 3 \prn*{3 - 2\alpha}^{-\frac{5}{2}}.
\]

Let $l(x) = x^{-\frac{5}{2}}$. So $l'(x) = -\frac{5}{2} x^{\frac{-7}{2}} < 0$ which implies that $l$ is strictly decreasing on $x > 0$ and thus, since $3-2\alpha < 3 + 2\alpha$ for any $\alpha > 0$: 
$(3  - 2\alpha)^{-\frac{5}{2}} > (3 + 2\alpha)^{-\frac{5}{2}}$ and therefore $q''(\alpha) > 0$. % Similarly, $q'(\alpha) > 0$.

$q$ is thus convex and $q(0) = 0$. Thus for any $0 < \alpha < \beta$, by denoting $\alpha = \lambda \beta + (1-\lambda)\cdot 0$ where $\lambda = \frac{\alpha}{\beta} \in (0,1)$, we get  $q(\alpha) = q(\lambda \beta + (1-\lambda)0) \le \lambda q(\beta) + (1-\lambda)0 = \frac{\alpha}{\beta} q(\beta)$. Reorganizing yields: $q(\alpha)\beta \le q(\beta)\alpha$ or
\[
    f'(\beta)\alpha -f'(\alpha)\beta \le 0.
\]
The boundary cases $\alpha=0$ and $\alpha=\beta$ hold with equality.

\end{proof}
By plugging $\beta = u(t), \alpha = w(t)$ in Claim \ref{clm:f-tag-tmp-eq} we get (via \cref{eq:H-derivative}): $H'(t) \le 0$. Therefore $H(t)$ is non-increasing on $[0, \frac{\pi}{4}]$ which implies that 
\begin{align*}
    G(\cos(t),\sin(t)) &\le \frac{1}{4} H(0) \\
    & = \frac{1}{4} \prn*{\sqrt{3+2 u(0)} + \sqrt{3 - 2 u(0)} + \sqrt{3 + 2 w(0)} + \sqrt{3 - 2 w(0)}} \\
    & = \frac{1}{4} \prn*{2\sqrt{5} + 2} = \frac{1 + \sqrt{5}}{2} = \phi.
\end{align*}

\end{proof}

\section{Missing analysis from \cref{sec:lb-2-Rd}}\label{sec:missing-analysis-lb-2-Rd}

\subsection{Proof of \cref{lem:simplex-identities}}
\begin{proof}
First,
\begin{align*}
\left\lVert\sum_{i=0}^{d'}s_i\right\rVert_2^2
&=
\sum_{i=0}^{d'}\lVert s_i\rVert_2^2
+
2\sum_{0\le i<j\le d'}\langle s_i,s_j\rangle =
(d'+1)
-
\frac{2}{d'}\binom{d'+1}{2} = 0,
\end{align*}
which implies the first item. Next, fix $j\in\{0,\ldots,d'\}$.  Then
\begin{align*}
\sum_{i=0}^{d'}
\langle s_j,s_i\rangle s_i
&=
s_j-\frac{1}{d'}\sum_{i\ne j}s_i =
s_j+\frac{1}{d'}s_j =
\frac{d'+1}{d'}s_j,
\end{align*}
where the second equality follows from the first item (i.e. $\sum_{i=0}^{d'} s_i = 0$).
Since $s_0,\ldots,s_{d'}$ span $V$, it follows by linearity that, for
every $v\in V$, $\sum_{i=0}^{d'}
\langle v,s_i\rangle s_i
=
\frac{d'+1}{d'}v$.
Taking the inner product of both sides with $v$ gives
\[
\sum_{i=0}^{d'}
\langle v,s_i\rangle^2
=
\frac{d'+1}{d'}\lVert v\rVert_2^2.
\numberthis \label{eq:tmp1}
\]

Finally, let $w \in \R^d$. Since every $s_i\in V$ has  $\langle w,s_i\rangle=\langle w_V,s_i\rangle$, applying \cref{eq:tmp1} with $v=w_V$ and dividing by $d'+1$ proves the second item.
\end{proof}
\section{Missing analysis from \cref{sec:R2-lower-bounds}}\label{sec:missing-analysis-lb-R2}

\subsection{Proof of \cref{lem:equilateral-pointwise}}

The proof of \cref{lem:equilateral-pointwise} follows directly from the following two claims.
\begin{clm}\label{clm:y-avg-dist}
    For every $y\in\mathbb R^2$,
\[
\frac13
\sum_{j=0}^2
\norm{y-w_j}_2
\ge
g\left(\norm{y-c}_2\right).
\]
\end{clm}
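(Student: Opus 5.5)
Place $c$ at the origin, so that $w_0=(2,0)$, $w_1=(-1,\sqrt3)$, $w_2=(-1,-\sqrt3)$. Write $y=r(\cos\theta,\sin\theta)$ with $r=\norm{y-c}_2$. Then
\[
\norm{y-w_j}_2=\sqrt{r^2+4-4r\cos\prn*{\theta-\tfrac{2\pi j}{3}}},
\]
and I define $S_r(\theta):=\frac13\sum_{j=0}^2\norm{y-w_j}_2$. The first step is to interpret $g$ geometrically. At $\theta=0$, that is, with $y$ on the ray towards $w_0$, we have $\norm{y-w_0}_2=\abs{r-2}$. The other two distances equal $\sqrt{(r+1)^2+3}=\sqrt{r^2+2r+4}$. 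Hence $g(r)=S_r(0)$, and the claim is equivalent to showing that, for each fixed $r\ge0$, the function $S_r$ is minimized when $y$ is aligned with a vertex.

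Next I reduce the range of $\theta$ by symmetry. The triangle is invariant under rotation by $2\pi/3$ and under the reflection $\theta\mapsto-\theta$, so it suffices to take $\theta\in[0,\pi/3]$. The endpoint $\theta=0$ is the vertex-aligned direction. The endpoint $\theta=\pi/3$ points away from $w_2$, i.e.\ towards the midpoint of the side $w_0w_1$. There the value is
\[
S_r(\pi/3)=\frac{(r+2)+2\sqrt{r^2-2r+4}}{3}.
\]
The plan has two parts: (a) show that on $[0,\pi/3]$ the minimum of $S_r$ is attained at an endpoint; (b) show that $S_r(0)\le S_r(\pi/3)$.

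For (b) the task is the elementary inequality
\[
\abs{r-2}+2\sqrt{r^2+2r+4}\le r+2+2\sqrt{r^2-2r+4}.
\]
I will check it by squaring, splitting into the cases $r\le2$ and $r\ge2$; it holds with equality at $r=0$.

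For (a), which I expect to be the main obstacle, I set $x_j:=\cos(\theta-2\pi j/3)$. Then $S_r=\frac13\sum_j\psi(x_j)$ with $\psi(x)=\sqrt{r^2+4-4rx}$, which is concave in $x$. The triple $(x_0,x_1,x_2)$ always satisfies $\sum_j x_j=0$ and $\sum_j x_j^2=3/2$. Its remaining symmetric invariant is the product $x_0x_1x_2=\frac14\cos3\theta$. A majorization argument alone does not decide between the two endpoints, since neither extreme triple majorizes the other. So I will differentiate directly:
\[
S_r'(\theta)=\frac{2r}{3}\sum_j\frac{\sin\prn*{\theta-2\pi j/3}}{\sqrt{r^2+4-4r\cos\prn*{\theta-2\pi j/3}}}.
\]
I will aim to show that $S_r'$ has at most one sign change on $(0,\pi/3)$, and that this change is from positive to negative, so that $S_r$ is unimodal and its minimum sits at an endpoint. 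To do this, I would first try writing $S_r$ as a function of $u=\cos3\theta$ and proving concavity in $u$. If that becomes messy, the fallback is to pair the $j=1$ and $j=2$ terms and use monotonicity of $t\mapsto \sin t/\sqrt{A-B\cos t}$ on the relevant intervals. Combining (a) and (b) gives $S_r(\theta)\ge S_r(0)=g(r)$, which is the claim.
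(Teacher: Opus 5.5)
\textbf{Verdict: there is a genuine gap.} Your reduction is correct as far as it goes. You have $g(r)=S_r(0)$, the dihedral symmetry reduces to $\theta\in[0,\pi/3]$, and inequality (b) holds (both cases reduce to $\sqrt{r^2+2r+4}+\sqrt{r^2-2r+4}\ge\max(4,2r)$). But step (a) is the entire content of the claim, and the proposal only lists things to try.

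\textbf{Why neither route works as stated.} Concavity of $S_r$ in $u=\cos3\theta$ is not shown, and it is not automatic. The cosines are the roots of $4x^3-3x=u$; the largest root is concave in $u$ and the smallest is convex. So composing with the concave decreasing $\psi$ gives no sign for $\frac{d^2}{du^2}S_r$ without a genuine computation. The fallback fails because $\chi(t)=\sin t/\sqrt{r^2+4-4r\cos t}$ is not monotone on the relevant intervals. It increases up to $\cos t=\min(r/2,2/r)$ and then decreases, so for small $r$ its peak lies near $\pi/2$, inside $[\pi/3,2\pi/3]$ where $2\pi/3-\theta$ lives. More fundamentally, write $\theta_j:=\theta-2\pi j/3$. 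For small $r$ the contributions of order $r$ and $r^2$ to $S_r'$ cancel exactly, because $\sum_j\sin\theta_j=\sum_j\sin2\theta_j=0$. This leaves $S_r'(\theta)=\tfrac{3r^3}{32}\sin3\theta+O(r^4)$. Monotonicity of $\chi$ alone cannot capture such a third-order effect; you need a quantitative three-term inequality.

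\textbf{How to close (a).} Let $\eta(x):=(r^2+4-4rx)^{-1/2}$, which is increasing and convex. On $[0,\pi/3]$ your cosines are ordered $x_0=\cos\theta\ge x_1=\cos(\tfrac{2\pi}{3}-\theta)\ge x_2=\cos(\tfrac{2\pi}{3}+\theta)$. Their gaps are $x_0-x_1=\sqrt3\sin(\tfrac{\pi}{3}-\theta)$ and $x_1-x_2=\sqrt3\sin\theta$. Use the identities $\sin(\tfrac{2\pi}{3}-\theta)=\sin\theta+\sin(\tfrac{2\pi}{3}+\theta)$ and $\sin(\tfrac{2\pi}{3}+\theta)=\sin(\tfrac{\pi}{3}-\theta)$. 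For $\theta\in(0,\pi/3)$ this gives
\[
\tfrac{3}{2r}\,S_r'(\theta)=\sin\theta\,\bigl(\eta(x_0)-\eta(x_1)\bigr)-\sin\bigl(\tfrac{\pi}{3}-\theta\bigr)\bigl(\eta(x_1)-\eta(x_2)\bigr)=\sqrt3\,\sin\theta\,\sin\bigl(\tfrac{\pi}{3}-\theta\bigr)\,(m_{01}-m_{12}),
\]
where $m_{01}$ and $m_{12}$ are the secant slopes of $\eta$ on $[x_1,x_0]$ and $[x_2,x_1]$. Convexity gives $m_{01}\ge m_{12}$, so $S_r$ is nondecreasing on $[0,\pi/3]$, and step (b) becomes redundant.

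\textbf{Comparison with the paper.} The paper avoids calculus in $\theta$ altogether. It writes $\norm{y-w_j}_2=s\sqrt{1-(1-p^2)z_j}$ with $z_j=(2t_j+1)/3$. It then bounds the square root below by a quadratic in $z$ that is tight at the vertex-aligned values $z\in\{0,1\}$. Summing this bound using the $\theta$-independent moments $\sum_j z_j=\sum_j z_j^2=1$ yields exactly $g(r)$ for every direction at once.
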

\begin{proof}[Proof of \cref{clm:y-avg-dist}]
Let $r:=\norm{y-c}_2=\norm{y}_2$, and write
$y=r(\cos\theta,\sin\theta)$, choosing $\theta$ arbitrarily when
$r=0$. For $j\in\{0,1,2\}$, let $t_j := \cos\left(\theta-\frac{2\pi j}{3}\right)$, and let $
z_j:=\frac{2t_j+1}{3}$.
Since $t_j\in[-1,1]$, we have $z_j\in[-1/3,1]$.

We first calculate the first two moments of the $t_j$'s. Writing
$a:=\cos\theta$ and $b:=\sin\theta$, we have
$t_0=a$, $t_1=-\frac{a}{2}+\frac{\sqrt{3}b}{2}$, 
$t_2=-\frac{a}{2}-\frac{\sqrt{3}b}{2}$.
Therefore,
\begin{align*}
\sum_{j=0}^2t_j
&=
a
+
\left(-\frac{a}{2}+\frac{\sqrt{3}b}{2}\right)
+
\left(-\frac{a}{2}-\frac{\sqrt{3}b}{2}\right) = 0.
\end{align*}
And also:
\begin{align*}
\sum_{j=0}^2t_j^2
&=
a^2
+
\left(-\frac{a}{2}+\frac{\sqrt{3}b}{2}\right)^2
+
\left(-\frac{a}{2}-\frac{\sqrt{3}b}{2}\right)^2 \\
&=
a^2
+
\left(
\frac{a^2}{4}
-\frac{\sqrt{3}ab}{2}
+\frac{3b^2}{4}
\right)
+
\left(
\frac{a^2}{4}
+\frac{\sqrt{3}ab}{2}
+\frac{3b^2}{4}
\right) \\
&=
\frac{3}{2}a^2+\frac{3}{2}b^2 = \frac{3}{2},
\end{align*}
where the last equality follows from $a^2+b^2=1$.

It follows directly from $z_j=(2t_j+1)/3$ that
\begin{align*}
\sum_{j=0}^2z_j
&=
\sum_{j=0}^2\frac{2t_j+1}{3} =
\frac{2\sum_{j=0}^2t_j+3}{3} = 1.
\end{align*}
Similarly,
\begin{align*}
\sum_{j=0}^2z_j^2
&=
\sum_{j=0}^2\left(\frac{2t_j+1}{3}\right)^2 =
\frac{1}{9}
\sum_{j=0}^2
\left(4t_j^2+4t_j+1\right) \\
&=
\frac{
4\sum_{j=0}^2t_j^2
+
4\sum_{j=0}^2t_j
+
3
}{9} =
\frac{4\cdot\frac{3}{2}+4\cdot 0+3}{9} = 1.
\end{align*}
Let $s:=\sqrt{r^2+2r+4}$ and $
p:=\frac{\abs{r-2}}{s}$.
Since $s^2-\abs{r-2}^2=6r\ge 0$, we have $p\in[0,1]$ and
$s^2(1-p^2)=6r$.
The triangle vertex $w_j$ has norm $2$, and the angle between $y$ and $w_j$
is $\theta-2\pi j/3$. Note that $y=r(\cos\theta,\sin\theta)$, $w_j
=
2\left(
\cos\left(\frac{2\pi j}{3}\right),
\sin\left(\frac{2\pi j}{3}\right)
\right)$ and we have
\begin{align*}
\langle y,w_j\rangle
&=
2r\left(
\cos\theta\cos\left(\frac{2\pi j}{3}\right)
+
\sin\theta\sin\left(\frac{2\pi j}{3}\right)
\right) =
2r\cos\left(\theta-\frac{2\pi j}{3}\right) = 2rt_j.
\end{align*}
Thus, by the law of cosines,
\[
\norm{y-w_j}_2^2
=
\norm{y}_2^2+\norm{w_j}_2^2-2\langle y,w_j\rangle
=
r^2+4-4rt_j.
\]

On the other hand,
\begin{align*}
s^2\left(1-(1-p^2)z_j\right)
&=
s^2-6rz_j \\
&=
r^2+2r+4
-
6r\left(\frac{2t_j+1}{3}\right) \\
&=
r^2+2r+4-4rt_j-2r \\
&=
r^2+4-4rt_j.
\end{align*}
Consequently,
\[
\norm{y-w_j}_2
=
s\sqrt{1-(1-p^2)z_j}.
\]

We now use the following elementary inequality: for every
$p\in[0,1]$ and every $z\in[-1/3,1]$,
\[
\sqrt{1-(1-p^2)z}
\ge
1-\frac{1-p^2}{2}z-\frac{(1-p)^2}{2}z^2.
\]
Indeed, let $
q
:=
1-\frac{1-p^2}{2}z-\frac{(1-p)^2}{2}z^2$.
First, we have that
\[
q-p
=
\frac{(1-p)(1-z)\bigl(2+(1-p)z\bigr)}{2}
\ge 0,
\]
because $p\le 1$, $z\le 1$, and
\[
2+(1-p)z
\ge
2-\frac{1-p}{3}
>
0.
\]
Thus $q\ge p\ge 0$. A direct expansion also gives
\[
1-(1-p^2)z-q^2
=
\frac{(1-p)^3}{4}
z^2(1-z)\bigl(3+p+(1-p)z\bigr).
\]
Every factor on the right-hand side is nonnegative, where the non-negativity of the last term holds as
\[
3+p+(1-p)z
\ge
3+p-\frac{1-p}{3}
=
\frac{8+4p}{3}
>
0.
\]
Therefore,
\[
1-(1-p^2)z\ge q^2.
\]
Since $q\ge 0$, taking square root proves the claimed inequality.

Applying this inequality to each $z_j$ gives
\begin{align*}
\sum_{j=0}^2\norm{y-w_j}_2
&=
s\sum_{j=0}^2
\sqrt{1-(1-p^2)z_j} \\
&\ge
s\sum_{j=0}^2
\left(
1-\frac{1-p^2}{2}z_j
-\frac{(1-p)^2}{2}z_j^2
\right) \\
&=
s\left(
3
-\frac{1-p^2}{2}\sum_{j=0}^2z_j
-\frac{(1-p)^2}{2}\sum_{j=0}^2z_j^2
\right) \\
&=
s\left(
3-\frac{1-p^2}{2}-\frac{(1-p)^2}{2}
\right) \\
&=
s(2+p) \\
&=
2\sqrt{r^2+2r+4}+\abs{r-2}.
\end{align*}
Dividing by $3$ and recalling that $r=\norm{y-c}_2$, we obtain
\[
\frac{1}{3}\sum_{j=0}^2\norm{y-w_j}_2
\ge
\frac{
\abs{r-2}
+
2\sqrt{r^2+2r+4}
}{3}
=
g\left(\norm{y-c}_2\right).
\]
\end{proof}

We show the properties of $g$ in the following claim.
\begin{clm}\label{clm:g-convex-nondec}
    $g$ is convex and nondecreasing on $[0, \infty)$.
\end{clm}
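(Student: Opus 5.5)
We write $g = \frac13\bigl(\abs{r-2} + 2\varphi(r)\bigr)$ with $\varphi(r) := \sqrt{r^2+2r+4}$, and treat the two summands separately.

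\emph{Convexity.} The map $r\mapsto \abs{r-2}$ is convex, being the absolute value of an affine function. For $\varphi$, we complete the square: $\varphi(r) = \sqrt{(r+1)^2+3} = \norm{(r+1,\sqrt3)}_2$. This is the Euclidean norm of an affine function of $r$, hence convex. (Alternatively, $\varphi''(r) = 3/(r^2+2r+4)^{3/2} > 0$.) A nonnegative combination of convex functions is convex, so $g$ is convex on $[0,\infty)$.

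\emph{Monotonicity.} Since $g$ is convex, it suffices to show that its right derivative at $0$ is nonnegative. On $[0,2)$ we have $\abs{r-2} = 2-r$, so
\[
g'(r) = \frac13\left(-1 + \frac{2(r+1)}{\sqrt{r^2+2r+4}}\right).
\]
At $r=0$ this equals $\frac13(-1 + 2/2) = 0$. A convex function on $[0,\infty)$ whose right derivative at $0$ is nonnegative is nondecreasing, because its right derivative is itself nondecreasing.

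As a direct check, on $[0,2)$ the inequality $2(r+1)\ge\sqrt{r^2+2r+4}$ is equivalent to $3r^2+6r\ge0$, which holds. For $r>2$ both summands are increasing: $\abs{r-2}=r-2$ and $\varphi'(r)>0$ for $r\ge0$. Finally, $g$ is continuous at $r=2$.

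The only delicate point is the boundary value $g'(0^+)=0$, which is exactly tight. This is why we use the convexity-plus-right-derivative argument rather than crude bounds.
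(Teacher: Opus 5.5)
Your proof is correct and essentially matches the paper's. Convexity via $\abs{r-2}$ plus the Euclidean norm of the affine map $r\mapsto(r+1,\sqrt3)$ is exactly the paper's argument, and your ``direct check'' ($3r^2+6r\ge 0$ on $(0,2)$, positive derivative for $r>2$, continuity at $2$) is precisely the paper's monotonicity proof. Your primary monotonicity argument is a valid shortcut that avoids the case split: since $g$ is convex, every chord slope on $[0,\infty)$ is at least $g'(0^+)=0$, so a single evaluation suffices.
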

\begin{proof}[proof of \cref{clm:g-convex-nondec}]
We rewrite $g(r)$ as $g(r) = \frac{1}{3}\abs*{r-2} + \frac{2}{3}\sqrt{(r+1)^2+3}$.
The first term $\frac{1}{3}\abs*{r-2}$ is convex. The second term is proportional to the composition of the Euclidean norm $\norm*{\cdot}_2$ with the affine map $r \mapsto (r+1, \sqrt{3})$. Since the norm is convex and affine compositions preserve convexity, the second term is convex. Being a sum of convex functions, $g(r)$ is convex on $[0, \infty)$.

To show $g$ is nondecreasing, we check its derivative. For $r > 2$, 
\[
g'(r) = \frac{1}{3}\prn*{1 + \frac{2(r+1)}{\sqrt{r^2+2r+4}}} > 0.
\]
For $0 < r < 2$, 
\[
g'(r) = \frac{1}{3}\prn*{-1 + \frac{2(r+1)}{\sqrt{r^2+2r+4}}}.
\]
For $g'(r) \ge 0$, we require $2(r+1) \ge \sqrt{r^2+2r+4}$. Since $r \ge 0$, both sides are positive, and squaring yields:
\[
4(r^2+2r+1) \ge r^2+2r+4 \iff 3r^2+6r \ge 0,
\]
which trivially holds for all $r \ge 0$. Because $g(r)$ is continuous and its left and right derivatives at $r=2$ are non-negative, $g(r)$ is nondecreasing on $[0, \infty)$.
\end{proof}

\subsection{Proof of \cref{lem:circle-average-and-discretization}}

\begin{clm}
\label{clm:circle-average-properties}
For every $D>0$, the function $\Gamma_D$ is convex and nondecreasing.
\end{clm}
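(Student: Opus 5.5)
The plan is to write $\Gamma_D$ as an average of convex functions of $r$, which gives convexity directly. Then monotonicity will follow from convexity together with a vanishing derivative at $r=0$.

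\emph{Convexity.} For each fixed $\theta$, set $v_\theta := D(\cos\theta,\sin\theta)$. Then
\[
\sqrt{r^2+D^2-2rD\cos\theta} = \norm{(r,0)-v_\theta}_2 .
\]
This is a norm composed with the affine map $r\mapsto (r,0)-v_\theta$, so it is convex in $r$. Integrating over $\theta\in[0,2\pi)$ against the uniform probability measure preserves convexity, because a pointwise average of convex functions is convex. Hence $\Gamma_D$ is convex on $\R_{\ge 0}$, and in fact on all of $\R$ under the same formula.

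\emph{Monotonicity.} Extend $\Gamma_D$ to $r\in\R$ by the same formula. The substitution $\theta\mapsto\theta+\pi$ sends $\cos\theta$ to $-\cos\theta$, so $\Gamma_D(-r)=\Gamma_D(r)$ and the extension is even. A convex even function on $\R$ attains its minimum at $0$. Indeed, for $0\le r_1<r_2$, write $r_1$ as a convex combination of $-r_2$ and $r_2$; convexity then gives
\[
\Gamma_D(r_1)\le \max\{\Gamma_D(-r_2),\Gamma_D(r_2)\}=\Gamma_D(r_2).
\]
So $\Gamma_D$ is nondecreasing on $[0,\infty)$.

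The argument avoids differentiating under the integral sign. That matters because the integrand is not smooth at $r=D$, $\theta=0$, where the distance vanishes. I expect the only subtle point to be confirming that the symmetric extension to negative $r$ is legitimate, which is immediate from the definition as a finite integral of continuous functions. As a fallback, $\Gamma_D(r)$ is the expected value of $\norm{x-Z}_2$ for $Z$ uniform on the circle and $\norm{x}_2=r$; it is convex in $x\in\R^2$ and rotation invariant, and the same averaging argument applies.
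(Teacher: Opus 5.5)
Your proof is correct and is essentially the paper's argument: convexity comes from averaging the convex maps $r\mapsto\|(r,0)-D(\cos\theta,\sin\theta)\|_2$, evenness from the shift $\theta\mapsto\theta+\pi$, and monotonicity on $[0,\infty)$ from writing $r_1$ as a convex combination of $\pm r_2$. One small inconsistency: your opening plan mentions a vanishing derivative at $r=0$, but the argument you actually give, like the paper's, uses evenness and needs no differentiation.
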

\begin{proof}
For $\theta\in[0,2\pi]$, let $ f_\theta(r) := \sqrt{r^2+D^2-2rD\cos\theta} = \left\|(r,0)-D(\cos\theta,\sin\theta)\right\|_2$.
Since the Euclidean norm is convex and
$r\mapsto (r,0)-D(\cos\theta,\sin\theta)$ is affine, $f_\theta$ is
convex as a function of $r\in\R$. Hence their average $\Gamma_D(r) = \frac{1}{2\pi}\int_0^{2\pi} f_\theta(r)\,d\theta$ is convex. It remains to show monotonicity. Observe that $\Gamma_D$ is even:
\begin{align*}
\Gamma_D(-r) &= \frac{1}{2\pi}\int_0^{2\pi} \sqrt{r^2+D^2+2rD\cos\theta}\,d\theta =
\frac{1}{2\pi}\int_0^{2\pi}
\sqrt{r^2+D^2-2rD\cos(\theta+\pi)}\,d\theta =
\Gamma_D(r).
\end{align*}
We use this to show that $\Gamma_D$ is nondecreasing on $\R_{\ge 0}$.
for any  $0\le r\le s$: $r = \frac{s+r}{2s}s+\frac{s-r}{2s}(-s)$, and therefore, by the fact that $\Gamma_D$ is convex and even,
\[ \Gamma_D(r) \le \frac{s+r}{2s}\Gamma_D(s) + \frac{s-r}{2s}\Gamma_D(-s) = \Gamma_D(s). \]
\end{proof}

\begin{clm}
\label{clm:circle-discretization-error}
Let $b,y\in\mathbb R^2$, let $D>0$, and for any $j \in \crl*{0,\ldots,k-1}$ let\\$z_j = b+ D \prn*{ \cos\prn*{\frac{2\pi j}{k}}, \sin\prn*{\frac{2\pi j}{k}}}$.
Then
\[
\frac1k
\sum_{j=0}^{k-1}
\norm{y-z_j}_2
\ge
\Gamma_D(\norm{y-b}_2)
-
\frac{\pi D}{2k}.
\]
\end{clm}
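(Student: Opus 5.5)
We compare the discrete average with the continuous average angle by angle. Translate so that $b=0$ and write $y$ in polar form with $\norm{y-b}_2=r$. For $\theta\in[0,2\pi)$ let $f(\theta):=\norm{y-D(\cos\theta,\sin\theta)}_2$. Then
\[
\Gamma_D(r)=\frac{1}{2\pi}\int_0^{2\pi}f(\theta)\,d\theta,
\]
since rotating $y$ does not change the uniform circle average. We split $[0,2\pi)$ into the $k$ arcs $I_j:=[\frac{2\pi j}{k}-\frac{\pi}{k},\frac{2\pi j}{k}+\frac{\pi}{k})$, each of length $\frac{2\pi}{k}$ and centered at the angle of $z_j$. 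This gives
\[
\Gamma_D(r)=\frac1k\sum_{j=0}^{k-1}\frac{k}{2\pi}\int_{I_j}f(\theta)\,d\theta .
\]
It therefore suffices to show, for each $j$, that the average of $f$ over $I_j$ exceeds $f(2\pi j/k)=\norm{y-z_j}_2$ by at most $\frac{\pi D}{2k}$.

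For $\theta\in I_j$, the triangle inequality gives
\[
f(\theta)\le \norm{y-z_j}_2+\norm{z_j-D(\cos\theta,\sin\theta)}_2 .
\]
The last term is a chord length, $2D\sin\bigl(\abs{\theta-\tfrac{2\pi j}{k}}/2\bigr)\le D\abs{\theta-\tfrac{2\pi j}{k}}$. Averaging $D\abs{\phi}$ over $\phi\in[-\pi/k,\pi/k]$ gives exactly
\[
\frac{k}{2\pi}\cdot 2\cdot D\cdot\frac{(\pi/k)^2}{2}=\frac{\pi D}{2k},
\]
which is the constant in the claim. Summing over $j$ and dividing by $k$ then yields
\[
\Gamma_D(r)\le \frac1k\sum_j\norm{y-z_j}_2+\frac{\pi D}{2k},
\]
which is the claim after rearranging.

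There is no substantial obstacle. The only points needing care are the bookkeeping that the arcs $I_j$ partition the circle modulo $2\pi$ (this holds because they are half-open with matching endpoints), and the elementary bound $\sin x\le x$ for $x\ge0$, which controls the chord length. The average of $\abs{\phi}$ is a one-line integral.
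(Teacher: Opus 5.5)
Your proposal is correct and follows essentially the paper's argument. The paper also partitions the circle into $k$ arcs of length $2\pi/k$ centered at the grid angles, bounds $f(\theta)\le f(\theta_j)+D|\theta-\theta_j|$ via the reverse triangle inequality and the chord bound $2D|\sin((\theta-\theta_j)/2)|\le D|\theta-\theta_j|$ (stated as a $D$-Lipschitz claim), and integrates to obtain the $\pi D/(2k)$ error. The only cosmetic difference is that the paper rotates so that $y=(r,0)$ and carries a phase shift on the grid, whereas you keep the grid fixed and use rotation invariance of the circle average.
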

\begin{proof}
Translate the plane so that $b=0$, and then rotate it so that
$y=(r,0)$, where $r=\norm{y-b}_2$. This rotation also rotates the
discretization grid. Consequently, for some phase $\delta\in[0,2\pi)$,
the rotated grid points have angles
$\theta_j=\delta+2\pi j/k$, for $j=0,\ldots,k-1$.
Let $f(\theta) := \norm{y-D(\cos\theta,\sin\theta)}_2$. Then
$\Gamma_D(r) = \frac{1}{2\pi}\int_0^{2\pi} f(\theta)\,d\theta$ and
$\frac{1}{k}\sum_{j=0}^{k-1}\norm{y-z_j}_2
= \frac{1}{k}\sum_{j=0}^{k-1}f(\theta_j)$.
We first show the following claim.
% \noindent
\begin{clm}\label{clm:f-is-lipschitz}
The function $f$ is $D$-Lipschitz; that is, for every
$\theta,\phi\in\mathbb R$,
\[
|f(\theta)-f(\phi)|
\le
D|\theta-\phi|.
\]
\end{clm}
\begin{proof}[Proof of \cref{clm:f-is-lipschitz}]
By the reverse triangle inequality,
\begin{align*}
|f(\theta)-f(\phi)|
&=
\left|
\norm{y-D(\cos\theta,\sin\theta)}_2
-
\norm{y-D(\cos\phi,\sin\phi)}_2
\right| \\
&\le
D\norm{(\cos\theta,\sin\theta)-(\cos\phi,\sin\phi)}_2 \\
&=
D\sqrt{
(\cos\theta-\cos\phi)^2
+
(\sin\theta-\sin\phi)^2
} \\
&= D\sqrt{\cos^{2}(\theta) + \sin^2(\theta) + \cos^2(\phi) + \sin^2(\phi) - 2 \prn*{\cos(\theta)\cos(\phi) + \sin(\theta)\sin(\phi)}} \\
&=
D\sqrt{2-2\cos(\theta-\phi)} = 2D\left| \sin\prn*{\frac{\theta-\phi}{2}} \right| \le D|\theta-\phi|,
\end{align*}
where the last inequality follows from $|\sin x|\le |x|$.
This proves the claim.
\end{proof}

We now partition the circle into the $k$ arcs centered at the phase-shifted points $\theta_j$. Namely, let $I_j = \brk*{\theta_j-\frac{\pi}{k}, \theta_j+\frac{\pi}{k} }$,
where angles are understood modulo $2\pi$. Each arc has length
$2\pi/k$, and the arcs $I_0,\ldots,I_{k-1}$ partition the circle.

Fix $j$. For every $\theta\in I_j$, \cref{clm:f-is-lipschitz} gives
$f(\theta)\le f(\theta_j)+D|\theta-\theta_j|$. Therefore,
\begin{align*}
\int_{I_j} f(\theta)\,d\theta
&\le
\int_{I_j}
\left(
f(\theta_j)+D|\theta-\theta_j|
\right)\,d\theta =
\frac{2\pi}{k}f(\theta_j)
+
D\int_{-\pi/k}^{\pi/k}|t|\,dt =
\frac{2\pi}{k}f(\theta_j)
+
\frac{\pi^2D}{k^2}.
\end{align*}
Summing over $j=0,\ldots,k-1$ yields
\[
\int_0^{2\pi}f(\theta)\,d\theta
\le
\frac{2\pi}{k}
\sum_{j=0}^{k-1}f(\theta_j)
+
\frac{\pi^2D}{k}.
\]
Dividing by $2\pi$, we obtain
\[
\Gamma_D(r)
\le
\frac{1}{k}
\sum_{j=0}^{k-1}f(\theta_j)
+
\frac{\pi D}{2k}.
\]
Since $r=\norm{y-b}_2$ and $f(\theta_j)=\norm{y-z_j}_2$, by rearranging we get
\[ \frac{1}{k} \sum_{j=0}^{k-1}\norm{y-z_j}_2 \ge \Gamma_D(\norm{y-b}_2) - \frac{\pi D}{2k}. \]
\end{proof}

\subsection{Proof of \cref{lem:farthest-vertex-and-radial-convexity}}

\begin{clm}
\label{clm:farthest-discretized-circle-vertex}
For every $y\in\mathbb R^2$, and $r:=\norm{y-b}_2$ we have
\[
\max_{0\le j\le k-1}
\norm{y-z_j}_2
\ge
\sqrt{
r^2+D^2+
2rD\cos\prn*{\frac{\pi}{k}}
}.
\]
\end{clm}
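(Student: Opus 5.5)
We translate so that $b=0$; then $y$ has norm $r$ and the grid points are $z_j=D(\cos(2\pi j/k),\sin(2\pi j/k))$. If $r=0$, every $z_j$ is at distance exactly $D$ from $y$, which equals the right-hand side, so the claim holds. Otherwise write $y=r(\cos\psi,\sin\psi)$.

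The key step is to pick the grid point that is nearly antipodal to $y$. The antipodal direction $\psi+\pi$ lies in one of the $k$ arcs of length $2\pi/k$ into which the grid angles $2\pi j/k$ partition the circle. Let $j^\ast$ be the index whose angle is nearest to $\psi+\pi$ modulo $2\pi$; then the angular gap $\delta:=\abs{2\pi j^\ast/k-(\psi+\pi)}$, reduced mod $2\pi$, satisfies $\delta\le\pi/k$.

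Next we compute $\norm{y-z_{j^\ast}}_2$ with the law of cosines. The angle between $y$ and $z_{j^\ast}$ is $\pi-\delta$ (up to sign), so
\[
\norm{y-z_{j^\ast}}_2^2=r^2+D^2-2rD\cos(\pi-\delta)=r^2+D^2+2rD\cos\delta .
\]
Since $0\le\delta\le\pi/k\le\pi$ and $\cos$ is decreasing on $[0,\pi]$, we get $\cos\delta\ge\cos(\pi/k)$. Because $rD\ge 0$, this gives $\norm{y-z_{j^\ast}}_2^2\ge r^2+D^2+2rD\cos(\pi/k)$. Taking square roots (the right side is nonnegative for $k\ge 2$, and $k>5$ in our setting) and bounding the maximum from below by the $j^\ast$ term proves the claim.

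The only delicate step is the modular bound $\delta\le\pi/k$. It follows because consecutive grid angles are spaced exactly $2\pi/k$ apart around the full circle, so every angle lies within half that spacing of some grid angle. Everything else is a direct computation.
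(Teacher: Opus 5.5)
Your proposal is correct and follows the paper's proof. Both handle $r=0$ separately, choose the grid point whose direction is within angle $\pi/k$ of the antipodal direction of $y-b$, and then use the law of cosines with $\cos\varepsilon\ge\cos(\pi/k)$ before taking square roots.
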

\begin{proof}
If $r=0$, both sides equal $D$, so suppose that $r>0$.

Among the $k$ equally spaced directions of the vectors
$
z_j-b,
$
there exists one whose angular distance from the direction
$
-\frac{y-b}{\norm{y-b}_2}
$
is at most $\pi/k$. Let $z_j$ be such a vertex, and let
$
\varepsilon\le\frac{\pi}{k}
$
be this angular distance.

The angle between $y-b$ and $z_j-b$ is $\pi-\varepsilon$.
Therefore, by the law of cosines,
\[
\begin{aligned}
\norm{y-z_j}_2^2
&=
r^2+D^2
-
2rD\cos(\pi-\varepsilon)\\
&=
r^2+D^2+
2rD\cos\varepsilon\\
&\ge
r^2+D^2+
2rD\cos\prn*{\frac{\pi}{k}}.
\end{aligned}
\]
Taking square roots proves the claim.
\end{proof}

\begin{clm}
\label{clm:ex-post-radial-convexity}
$H_{D,k}$ is convex and nondecreasing on $[0,\infty)$.
\end{clm}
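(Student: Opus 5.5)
The plan is to write $H_{D,k}$ as a Euclidean norm of an affine function of $r$, as was done for $g$ and $\Gamma_D$. Convexity will follow from this representation. Monotonicity will then follow from a sign check on the derivative, which uses $r\ge 0$ and $\cos(\pi/k)\ge 0$ for $k\ge 2$.

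For convexity, set $c_k:=\cos(\pi/k)$ and $s_k:=\sin(\pi/k)$. Completing the square gives
\[
r^2+D^2+2rDc_k=(r+Dc_k)^2+D^2s_k^2 .
\]
Hence $H_{D,k}(r)=\norm{(r+Dc_k,\,Ds_k)}_2$, which is the Euclidean norm composed with the affine map $r\mapsto (r+Dc_k, Ds_k)$. As in the proofs of \cref{clm:g-convex-nondec} and \cref{clm:circle-average-properties}, a norm composed with an affine map is convex. So $H_{D,k}$ is convex on all of $\R$, and in particular on $[0,\infty)$.

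For monotonicity, note that the radicand is positive for every $r\ge 0$, since $D^2>0$ and the other terms are nonnegative. Therefore $H_{D,k}$ is differentiable on $[0,\infty)$, with
\[
H_{D,k}'(r)=\frac{r+Dc_k}{H_{D,k}(r)}.
\]
For $k\ge 2$ we have $\pi/k\le \pi/2$, so $c_k\ge 0$. Together with $r\ge 0$ and $D>0$, this makes the numerator nonnegative, so $H_{D,k}'\ge 0$ and $H_{D,k}$ is nondecreasing on $[0,\infty)$.

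The only delicate point is the sign of $c_k$, and it is harmless here because \cref{lem:common-planar-hard-instance} takes $k>5$. If one wanted to cover $k=1$ as well, convexity would still hold, but monotonicity on $[0,\infty)$ could fail; this case is not needed.
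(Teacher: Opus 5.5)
Your proof is correct and follows essentially the paper's route. The paper also completes the square as $(r+D\kappa)^2+D^2(1-\kappa^2)$ and gets monotonicity from $H_{D,k}'(r)=(r+D\kappa)/H_{D,k}(r)\ge 0$; it differs only in obtaining convexity from $H_{D,k}''(r)=D^2(1-\kappa^2)/H_{D,k}(r)^3\ge 0$ rather than from your norm-of-an-affine-map argument, and your remark that $\kappa\ge 0$ requires $k\ge 2$ (harmless since $k>5$) makes explicit a point the paper leaves implicit.
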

\begin{proof}
Set
$
\kappa:=\cos\prn*{\frac{\pi}{k}}\ge0.
$
Then
$
H_{D,k}(r)
=
\sqrt{
(r+D\kappa)^2+
D^2(1-\kappa^2)
}.
$
Its first derivative is
$
H_{D,k}'(r)
=
\frac{r+D\kappa}
{H_{D,k}(r)}
\ge0,
$
and its second derivative is
$
H_{D,k}''(r)
=
\frac{
D^2(1-\kappa^2)
}{
H_{D,k}(r)^3
}
\ge0.
$
Thus, $H_{D,k}$ is nondecreasing and convex.
\end{proof}

\section{A two-agent $\sqrt{2} \approx 1.41$ upper bound for both objectives in $\R^2$}\label{sec:2-agents-upper-bound}

In the one-dimensional case, the best strategyproof in expectation mechanism for $2$ agents is the LRM mechanism of \citet{procaccia2013approximate}, which mixes the random dictator mechanism with the optimal (non-strategyproof) solution; that is, it returns the leftmost/rightmost point w.p. $0.25$ each or the optimal solution w.p. $0.5$.
This mechanism yields a tight expected approximation ratio of $1.5$. Surprisingly, \citet{balkanski2024randomized} note that the lower bound of $1.5$ for $2$ agents in $\R$ does not generalize to $\R^2$. In fact, they show that the lower bound for $2$ agents in $\R^2$ is only approximately $1.118$.
This raises the question of whether there is a better strategyproof in expectation mechanism for $2$ agents in $\R^2$.

Surprisingly, in two dimensions, even though every two points lie on a single line, there's a better ``outside the box'' solution: return a point which is not on the line connecting them at all, with a better expected approximation ratio of $\sqrt{2} \approx 1.41$.

\begin{thm}\label{thm:RIRT}
    Consider an instance with $2$ agents in $\R^2$.
    $\rirt$ is strategyproof in expectation and has an expected approximation ratio of at most $\sqrt{2}$ for both the ex-post and ex-ante objectives.
\end{thm}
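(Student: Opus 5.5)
The plan rests on the definition of $\rirt$, which is stated right after this theorem and not before it, so I assume it. I read the name as ``Random Isosceles Right Triangle''. On reports $u,v$ the mechanism returns one of the two apices of the isosceles right triangles with hypotenuse $uv$, each with probability $\tfrac12$. Write $R$ for rotation by $\pi/2$ and $M_\pm:=\tfrac12(I\pm R)$. The two apices are
\[
A_\pm=\tfrac{u+v}{2}\pm \tfrac12 R(u-v)=v+M_\pm(u-v).
\]
When $u=v$ both apices coincide with the common point, and that case is trivial.

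\textbf{Approximation.} This part is immediate. Each apex is at distance $\norm{u-v}_2/\sqrt2$ from both $u$ and $v$, deterministically. The optimum is $\OPT=\norm{u-v}_2/2$, attained at the midpoint. So every realization has maximum cost exactly $\sqrt2\cdot\OPT$, which gives ratio $\sqrt2$ for both $\expostc$ and $\exantec$.

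\textbf{Strategyproofness.} By symmetry I only consider a deviation by the agent at $u$, with $v$ fixed. Set $x:=u-v$, and let $w:=u'-v$ where $u'$ is the misreport. The deviating agent's expected cost is $\tfrac12 f(w)$, where
\[
f(w):=\norm{x-M_+w}_2+\norm{x-M_-w}_2 .
\]
I must show that $f$ is minimized at $w=x$.

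Since $f$ is a sum of norms of affine maps of $w$, it is convex. So it suffices to verify that $0$ is a (sub)gradient at $w=x$. At $w=x$ the arguments are $x-M_\pm x=M_\mp x$, because $M_++M_-=I$. Both have norm $\norm{x}_2/\sqrt2\neq0$, so $f$ is differentiable there. Using $M_\pm^\top=M_\mp$, the gradient is
\[
\nabla f(x)=-\frac{\sqrt2}{\norm{x}_2}\bigl(M_-M_-x+M_+M_+x\bigr).
\]
Next, $R^2=-I$ gives $M_\pm^2=\tfrac14(I\pm2R+R^2)=\pm\tfrac12R$. Hence $M_+^2+M_-^2=0$ and the gradient vanishes. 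By convexity, $w=x$ is a global minimizer, so truthful reporting is optimal.

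The only step needing care is this first-order optimality computation: getting the transpose and the identity $M_\pm^2=\pm R/2$ right. Everything else is bookkeeping.
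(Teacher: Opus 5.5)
Your proof is correct. The approximation part is the same as the paper's: each apex is at distance $\|u-v\|_2/\sqrt2$ from both agents, and $\OPT=\|u-v\|_2/2$.

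For strategyproofness you take a different route. You treat the deviator's expected cost as a convex function $f(w)$ of the translated report. You then verify first-order optimality at $w=x$ using $M_\pm^\top=M_\mp$ and $M_\pm^2=\pm\frac12 R$, and you split off $u=v$ because differentiability needs $x\neq0$.

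The paper instead gives a one-line global certificate. It notes that $F_+(x,v)+RF_-(x,v)=v+Rv$ does not depend on the report. By rotation invariance and the triangle inequality, every report then costs at least $\frac12\|v+Rv\|_2=\|v\|_2/\sqrt2$, and truthful reporting attains this.

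In your notation the certificate is the identity $RM_-=M_+$. It gives $(x-M_+w)-R(x-M_-w)=(I-R)x$ for every $w$, so $f(w)\ge\|(I-R)x\|_2=\sqrt2\|x\|_2=f(x)$.

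What each route buys:
\begin{itemize}
\item The paper's argument is calculus-free, needs no case split for coincident reports, and gives an explicit lower bound on the cost of every misreport.
\item Your stationarity computation is more mechanical and does not require guessing which rotation pairs the two terms. It is the local form of the same cancellation, since $RM_-=M_+$ together with commutativity gives $M_+^2=R^2M_-^2=-M_-^2$. It would also adapt directly to testing other randomized rules that are linear in the reports.
\end{itemize}
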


\paragraph{The high level idea.}
The idea: we construct a random isosceles right triangle with $AB$ as its hypotenuse. Given agents at $A,B \in \R^2$, our mechanism $M$ returns $F = M(A,B)$ such that $AFB$ is an isosceles right triangle. There are two options to choose such a right triangle for each pair of agent locations $A$ and $B$: choosing the point below the segment $AB$ and the one above it.
The formula for each option is $F_{\pm}(A,B) = \frac{A+B}{2} \pm R \prn*{\frac{B-A}{2}}$, where $R \in M_{2 \times 2}(\R)$ is the $\frac{\pi}{2}$ rotation matrix \[ R := \begin{pmatrix} 0 & -1 \\ 1 & 0 \end{pmatrix}. \]
As we show, neither of the two fixed deterministic branch rules is strategyproof.
\begin{clm}\label{clm:deterministic-triangle-not-sp}
For each $\sigma\in\{-1,+1\}$, the deterministic mechanism
\[
F_\sigma(A,B):=\frac{A+B}{2}+\sigma R\prn*{\frac{B-A}{2}}
\]
is not strategyproof.
\end{clm}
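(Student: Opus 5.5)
We exhibit, for each fixed $\sigma\in\{-1,+1\}$, a concrete two-agent instance in which one agent strictly gains by misreporting. Since $F_\sigma$ is deterministic, strategyproofness would require $\norm{A-F_\sigma(A,B)}_2\le\norm{A-F_\sigma(A',B)}_2$ for every $A'$. It therefore suffices to find a report $A'$ with $F_\sigma(A',B)=A$ exactly. That would give agent $A$ cost $0$, whereas the truthful cost is $\norm{A-F_\sigma(A,B)}_2=\frac{\norm{B-A}_2}{\sqrt2}>0$ whenever $A\neq B$.

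To find $A'$, we solve $\frac{A'+B}{2}+\sigma R\prn*{\frac{B-A'}{2}}=A$ for $A'$. Rearranging gives $(I-\sigma R)A' = 2A-(I+\sigma R)B$. Since $R^2=-I$, we have $(I-\sigma R)(I+\sigma R)=I-R^2=2I$, so $I-\sigma R$ is invertible with inverse $\frac12(I+\sigma R)$. This yields
\[
A'=\tfrac12(I+\sigma R)\bigl(2A-(I+\sigma R)B\bigr)=(I+\sigma R)A-\sigma RB,
\]
using $(I+\sigma R)^2=2\sigma R$.

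For a concrete witness, take $A=(0,0)$ and $B=(2,0)$. Then $A'=-\sigma R(2,0)^\top=-\sigma(0,2)$. We check directly that $F_\sigma(A',B)=(1,-\sigma)+\sigma R(1,\sigma)^\top=(1,-\sigma)+\sigma(-\sigma,1)=(0,0)=A$. The truthful output is $F_\sigma(A,B)=(1,0)+\sigma(0,1)=(1,\sigma)$, at distance $\sqrt2>0$ from $A$. So agent $A$ strictly benefits by reporting $A'$, and $F_\sigma$ is not strategyproof.

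The only potential obstacle is invertibility of $I-\sigma R$, which is handled by $R^2=-I$. Conceptually, the agent can steer the apex of the triangle anywhere, including onto her own location. This is also why randomizing over $\sigma$, as in $\rirt$, is needed to obtain strategyproofness in expectation.
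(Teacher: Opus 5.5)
Your proof is correct and follows essentially the paper's argument: exhibit a two-agent instance with a misreport that places the facility exactly at the deviating agent's true location, so her cost drops from a positive value to zero. The paper has agent $B=(1,0)$ report $(1,-\sigma)$ against $A=(0,0)$, whereas you have agent $A$ report $-\sigma R B$ against $B=(2,0)$; your inversion of $I-\sigma R$ via $R^2=-I$ additionally shows such a misreport exists in every instance with $A\neq B$, which is a mild generalization the claim does not need.
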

\begin{proof}
Fix $\sigma\in\{-1,+1\}$ and consider agents at $A=(0,0)$ and
$B=(1,0)$. Under truthful reporting,
$F_\sigma(A,B)=(1/2,\sigma/2)$, so agent $B$ incurs cost
$1/\sqrt{2}$. If agent $B$ instead reports $B'=(1,-\sigma)$, then
\[
F_\sigma(A,B')
=\frac{(1,-\sigma)}{2}
+\sigma R\prn*{\frac{(1,-\sigma)}{2}}
=(1,0)=B.
\]
The deviation reduces the agent's cost to zero. Thus neither fixed
branch mechanism is strategyproof.
\end{proof}

The good news is that randomizing over the two options (choosing each one w.p. $0.5$) is strategyproof.
The mechanism thus returns $F_+(A,B)$ w.p. $0.5$ and $F_-(A,B)$ w.p. $0.5$. We formally define the mechanism in \cref{alg:rirt}.

\begin{algorithm}[H]
\caption{$\rirt$ Mechanism}
\label{alg:rirt}
\begin{algorithmic}[1]
\REQUIRE Reported locations $A,B \in \R^2$
\STATE Sample $s \sim \mathrm{Unif}(\crl*{+,-})$.
\STATE Let $R \in M_{2 \times 2}(\R)$ be the $\frac{\pi}{2}$ rotation matrix.
\STATE Let $F_{+}(A,B) = \frac{A+B}{2} + R \prn*{\frac{B-A}{2}}$,\ \  $F_{-}(A,B) = \frac{A+B}{2} - R \prn*{\frac{B-A}{2}}$.
\STATE Return $F_s(A,B)$.
\end{algorithmic}
\end{algorithm}

In the case of the real line the optimal mechanism returns the optimal solution with probability half, and makes sure that no agent may profitably deviate by returning the leftmost/rightmost point w.p. $0.25$ each. In our case, we return an off-line point whose orthogonal projection onto the line connecting the agents is the optimal midpoint. The intuition is that if an agent deviates it might help under one branch, but under the other branch it hurts the agent by an offsetting amount because of the symmetry of the construction.

We first show that $\rirt$ is strategyproof in expectation, and then we show that it has an expected approximation ratio of at most $\sqrt{2}$.
\begin{thm}\label{thm:RIRT-SP}
    Consider an instance with $2$ agents in $\R^2$.
    $\rirt$ is strategyproof in expectation.
\end{thm}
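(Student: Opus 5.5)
By symmetry between the two agents it suffices to consider agent $A$ with true location $A$, facing a fixed report $B$ from the other agent, who may misreport $A'$. The plan is to compute the expected cost exactly as a function of the report $A'$ and show it is minimized at $A'=A$.

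\textbf{Reduction to a fixed point.} The two candidate outputs are
\[
F_\pm(A',B)=\frac{A'+B}{2}\pm R\prn*{\frac{B-A'}{2}}.
\]
Since $R$ is linear with $R^2=-I$, we can rewrite this as $F_\pm(A',B)=B+\frac{1}{2}(I\mp R)(A'-B)$. The matrix $\frac12(I\mp R)$ is $\frac{1}{\sqrt2}$ times a rotation by $\mp\pi/4$. Writing $v:=A'-B$, the two outputs are therefore $B+\frac{1}{\sqrt2}Q_{\mp}v$, where $Q_\pm$ denotes rotation by $\pm\pi/4$. Set $w:=A-B$. Using $\norm{Q x}=\norm{x}$, the expected cost of agent $A$ becomes
\[
\Phi(v):=\frac12\prn*{\norm*{w-\tfrac{1}{\sqrt2}Q_-v}_2+\norm*{w-\tfrac{1}{\sqrt2}Q_+v}_2}
=\frac{1}{\sqrt2}\cdot\frac12\prn*{\norm*{\sqrt2Q_+w-v}_2+\norm*{\sqrt2Q_-w-v}_2}.
\]
This is half the sum of distances from $v$ to the two fixed points $f_1=\sqrt2Q_+w$ and $f_2=\sqrt2Q_-w$, scaled by $1/\sqrt2$.

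\textbf{Conclusion via the triangle inequality.} By the triangle inequality, $\Phi(v)\ge\frac{1}{2\sqrt2}\norm{f_1-f_2}_2$, with equality exactly when $v$ lies on the segment $[f_1,f_2]$. It remains to check that the truthful report $v=w$ lies on this segment. Indeed, $\frac{f_1+f_2}{2}=\frac{\sqrt2}{2}(Q_++Q_-)w=\frac{\sqrt2}{2}\cdot 2\cos(\pi/4)\,w=w$, so $w$ is in fact the midpoint. Hence truthful reporting minimizes the expected cost of agent $A$, which gives strategyproofness in expectation. The argument for agent $B$ is symmetric, since swapping $A$ and $B$ merely exchanges $F_+$ and $F_-$.

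\textbf{Main obstacle.} The only delicate step is the algebraic identity $\frac12(I\mp R)=\frac1{\sqrt2}Q_\mp$, together with getting the signs right so that both branches correspond to the two opposite rotations. I would verify this directly: $\frac12(I\mp R)$ has entries $\frac12\begin{pmatrix}1&\pm1\\ \mp1&1\end{pmatrix}$, which is $\frac{1}{\sqrt2}$ times a rotation matrix at angle $\mp\pi/4$. The rest of the argument is immediate from the triangle inequality.
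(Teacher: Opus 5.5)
Your proof is correct and is essentially the paper's argument written in rotated coordinates: after undoing the rotations, your inequality $\norm{f_1-v}_2+\norm{f_2-v}_2\ge\norm{f_1-f_2}_2$ is (up to a factor $\sqrt2$) exactly the paper's triangle inequality $\norm{F_+}_2+\norm{RF_-}_2\ge\norm{F_++RF_-}_2$, where $F_++RF_-=v+Rv$ does not depend on the report. Your focal-point form also shows that every report on the segment $[f_1,f_2]$ is optimal, but otherwise the two proofs coincide.
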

\begin{proof}
Assume w.l.o.g. that agent $1$ is at the origin, and agent $2$ is at some point $v \in \R^2$ (otherwise we may translate the instance). Consider a deviation of agent $1$ from the truthful report $0$ to some point $x \in \R^2$. Let $C(x)$ be the expected cost of agent $1$ when reporting $x$.

By the definition of the matrix $R$, it satisfies $R^2=-I$. Expanding the two possible outcomes gives
\begin{align*}
F_+(x,v)+R F_-(x,v)
&= \frac{x+v}{2}+R\prn*{\frac{v-x}{2}}
   +R\prn*{\frac{x+v}{2}}-R^2\prn*{\frac{v-x}{2}} \\
&=v+Rv.
\end{align*}
Therefore, since the rotation $R$ preserves Euclidean norms, the triangle inequality gives
\begin{align*}
    C(x)
    &=\frac{1}{2}\prn*{\norm{F_+(x,v)}_2+\norm{F_-(x,v)}_2} \\
    &=\frac{1}{2}\prn*{\norm{F_+(x,v)}_2+\norm{R F_-(x,v)}_2} \\
    &\ge \frac{1}{2}\norm{F_+(x,v)+R F_-(x,v)}_2 \\
    &=\frac{1}{2}\norm{v+Rv}_2
     =\frac{\norm{v}_2}{\sqrt{2}},
\end{align*}
where the last equality follows because $v$ and $Rv$ are orthogonal and have the same norm. On the other hand, under truthful reporting,
\begin{align*}
C(0)
&=\frac{1}{2}\prn*{
    \norm{\frac{v+Rv}{2}}_2+
    \norm{\frac{v-Rv}{2}}_2}
=\frac{\norm{v}_2}{\sqrt{2}}.
\end{align*}
Thus $C(x)\ge C(0)$ for every report $x$, so agent $1$ cannot benefit in expectation from a deviation. Swapping the two reports merely interchanges $F_+$ and $F_-$, so the same argument applies to agent $2$. Hence, $\rirt$ is strategyproof in expectation.

\end{proof}

Next, we show that this mechanism has an expected approximation ratio of at most $\sqrt{2}$.
\begin{thm}\label{thm:RIRT-AR}
    Consider an instance with $2$ agents in $\R^2$.
    $\rirt$ has an expected approximation ratio of at most $\sqrt{2}$ for both the ex-post and ex-ante objectives.
\end{thm}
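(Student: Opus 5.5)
The plan is to normalize and then bound the ex-post cost directly; the ex-ante bound follows by Jensen's inequality, as noted in \cref{sec:lb-2-Rd}, since $\expostc(P,M)\ge\exantec(P,M)$ and both objectives share the same optimum.

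Translate and rotate so that the two agents are $A=(-1,0)$ and $B=(1,0)$ (if $A=B$ then both outputs equal $A$ and the ratio is $1$). Rescale so that $\norm{A-B}_2=2$. The optimal solution is the midpoint, and $\OPT=1$.

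Next compute the two outcomes from \cref{alg:rirt}. Here $\frac{A+B}{2}=0$ and $\frac{B-A}{2}=(1,0)$, so $F_\pm(A,B)=\pm R(1,0)=\pm(0,1)$. Each outcome lies on the perpendicular bisector of $AB$ at distance $1$ from the midpoint. By Pythagoras, both agents are at distance exactly $\sqrt{1^2+1^2}=\sqrt2$ from each outcome.

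Hence in every realization the maximum cost equals $\sqrt2$. Therefore $\expostc=\sqrt2=\sqrt2\cdot\OPT$, and by Jensen $\exantec\le\expostc$ (here the two are equal). Since the mechanism is translation, rotation and scale equivariant, this single normalized instance covers all instances. Equivariance holds because $R$ commutes with rotations and the formula for $F_\pm$ is affine in $(A,B)$. Reflections only swap $F_+$ and $F_-$, which leaves the output distribution unchanged, although we only need rotations here.

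The only step needing any care is this equivariance: we must check that the normalization does not change the output distribution. It does not, because $F_\pm(QA+t,QB+t)=QF_\pm(A,B)+t$ for every rotation $Q$ (which commutes with $R$) and every translation $t$, and $F_\pm$ scales linearly under dilations about any point. The rest is direct computation. The bound is in fact attained on every instance, so it is tight for this mechanism.
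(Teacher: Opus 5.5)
Your proof is correct and is essentially the paper's argument: each outcome $F_\pm$ forms an isosceles right triangle with hypotenuse $AB$, so every realization has maximum cost $\norm{A-B}_2/\sqrt{2}$ against $\OPT=\norm{A-B}_2/2$, and the ex-ante bound follows because the ex-post cost is always at least the ex-ante cost. The only differences are that you first normalize using translation, rotation and scale equivariance (your check, including that $Q$ commutes with $R$, is sound) while the paper computes directly for general $A,B$, and that you explicitly handle the degenerate case $A=B$, which the paper's division by $D/2$ implicitly skips.
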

\begin{proof}
Since the ex-post objective is always at least the ex-ante objective, it suffices to show that the expected approximation ratio of $\rirt$ is at most $\sqrt{2}$ for the ex-post objective.
Let the two reported locations be $A,B\in\R^2$, and let $D := \norm{A-B}_2$.
Denote by $M := \frac{A+B}{2}$ the midpoint.

\textbf{The optimal egalitarian cost:}
For any facility location $z\in\R^2$, by the triangle inequality, we have
\[
    \norm{A-B}_2 \le \norm{A-z}_2 + \norm{z-B}_2 \le 2\max\{\norm{A-z}_2,\norm{B-z}_2\}.
\]
Hence $\max\{\norm{A-z}_2,\norm{B-z}_2\} \ge D/2$ for every $z$, and equality is achieved by $z=M$.
Therefore the optimal egalitarian cost is
\begin{equation}\label{eq:OPT-2-agents}
    \OPT(A,B) = \min_{z\in\R^2}\max\{\norm{A-z}_2,\norm{B-z}_2\} = \frac{D}{2}.
\end{equation}

\textbf{The cost of $\rirt$:}
The mechanism outputs $F_+(A,B)$ w.p.\ $1/2$ and $F_-(A,B)$ w.p.\ $1/2$, where
\[
    F_{\pm}(A,B) = \frac{A+B}{2} \pm R\!\left(\frac{B-A}{2}\right).
\]
By construction, $\triangle ABF_\pm$ is an isosceles right triangle with hypotenuse $AB$.
Thus:
\[
    \norm{F_\pm(A,B)-A}_2 = \norm{F_\pm(A,B)-B}_2 = \frac{\norm{A-B}_2}{\sqrt{2}} = \frac{D}{\sqrt{2}}.
\]
Therefore, for either realized outcome $F_\pm(A,B)$, the egalitarian cost equals
\[
    \max\{\norm{A-F_\pm(A,B)}_2,\norm{B-F_\pm(A,B)}_2\}=\frac{D}{\sqrt{2}}.
\]
Therefore the expected egalitarian cost of the mechanism is also $D/\sqrt{2}$.

By combining this with \cref{eq:OPT-2-agents}, we get
\[
    \frac{\E\big[\max\{\norm{A-Z}_2,\norm{B-Z}_2\}\big]}{\OPT(A,B)}
    = \frac{D/\sqrt{2}}{D/2}
    = \sqrt{2}.
\]
Hence \rirt has an expected approximation ratio at most $\sqrt{2}$.
\end{proof}

\printbibliography

\end{document}